\pdfoutput=1
\documentclass[a4paper,english,fleqn,11pt,final]{scrartcl}

\usepackage[utf8]{inputenc}
\usepackage[T1]{fontenc}
\usepackage{lmodern}
\usepackage[english]{babel}

\usepackage[protrusion=true,expansion=true,final]{microtype}
\usepackage{textcomp}

\usepackage{xcolor}
\usepackage[heavycircles]{stmaryrd}
\usepackage[centertags,sumlimits,intlimits,namelimits,fleqn]{amsmath}
\usepackage[fixamsmath,disallowspaces]{mathtools}
\usepackage{bussproofs}
\usepackage{lplfitch}

\usepackage{relsize}
\usepackage{scalerel}

\usepackage{braket}

\usepackage{amsfonts}
\usepackage{amssymb}

\usepackage[backref=none,pagebackref=false,hyperindex=true,hyperfootnotes=false,bookmarks=true,pdfpagelabels=true,final]{hyperref}

\hypersetup{
    colorlinks,
    linkcolor={red!80!black},
    citecolor={blue!50!black},
    urlcolor={blue!80!black}
}

\usepackage{amsthm}

\usepackage[babel,autostyle=true,strict]{csquotes}
\MakeOuterQuote{"}
\usepackage[xspace]{ellipsis}
\usepackage[section]{placeins} 

\usepackage{bookmark}
\usepackage{pdfcolmk} 
\usepackage{authblk}

\hypersetup{
    colorlinks,
    linkcolor={red!50!black},
    citecolor={blue!50!black},
    urlcolor={blue!80!black}
}

\usepackage{tabularx}
\usepackage{booktabs}

\usepackage{tikz}
\usepackage{pgf}
\usetikzlibrary{arrows,positioning}
\usetikzlibrary{shapes.geometric}
\usetikzlibrary{decorations.pathreplacing}

\newcommand{\ie}{i.e.\@\xspace}
\newcommand{\eg}{e.g.\@\xspace}
\newcommand{\wrt}{w.\,r.\,t.\@\xspace}

\newcommand{\wloss}{w.l.o.g.\@\xspace}
\newcommand{\Wloss}{W.l.o.g.\@\xspace}

\newcommand{\logicOpFont}[1]{\mathsf{#1}}         
\newcommand{\mathCommandFont}[1]{\mathrm{#1}}     

\newcommand{\size}[1]{{\protect\ensuremath{\vert\nobreak#1\nobreak\vert}}}
\newcommand{\pow}[1]{{\protect\ensuremath{\mathord{\mathfrak{P}(\nobreak#1\nobreak)}}}}
\newcommand{\arity}[1]{{\protect\ensuremath{\mathCommandFont{ar}(#1)}}}
\newcommand{\Mod}[1]{{\protect\ensuremath{\mathsf{Mod}(#1)}}}

\newcommand{\negg}{{\sim}}
\newcommand{\depop}{{=\!\!(\cdot,\cdot)}}
\newcommand{\dep}[1]{{=\!\!(#1)}}
\newcommand{\depopp}{\depop}

\newcommand{\logic}[1]{\ensuremath{\mathsf{#1}}\xspace}

\newcommand{\PS}{\logic{Prop}}
\newcommand{\D}{\logic{D}}
\newcommand{\PL}{\logic{PL}}
\newcommand{\QPL}{\logic{QPL}}
\newcommand{\QPPL}{\logic{(Q)PL}}
\newcommand{\ML}{\logic{ML}}
\newcommand{\FO}{\logic{FO}}
\newcommand{\SO}{\logic{SO}}
\newcommand{\IF}{\logic{IF}}

\newcommand{\PTL}{\logic{PTL}}
\newcommand{\QPDL}{\logic{QPDL}}
\newcommand{\PDL}{\logic{PDL}}
\newcommand{\QPTL}{\logic{QPTL}}
\newcommand{\QPPTL}{\logic{(Q)PTL}}
\newcommand{\MTL}{\logic{MTL}}
\newcommand{\TL}{\logic{TL}}
\newcommand{\MDL}{\logic{MDL}}
\newcommand{\EMDL}{\logic{EMDL}}
\newcommand{\EEMDL}{\logic{(E)MDL}}

\newcommand{\MINC}{\logic{MInc}}
\newcommand{\MINCEX}{\logic{MIncEx}}
\newcommand{\QPINC}{\logic{QPInc}}
\newcommand{\QPINCEX}{\logic{QPIncEx}}
\newcommand{\PINCEX}{\logic{PIncEx}}
\newcommand{\PINC}{\logic{PInc}}
\newcommand{\QPIND}{\logic{QPIL}}
\newcommand{\QPIL}{\logic{QPIL}}

\newcommand{\PIL}{\logic{PIL}}
\newcommand{\MIL}{\logic{MIL}}

\newcommand{\NE}{\text{\textsc{ne}}}
\newcommand{\E}{\protect\ensuremath\logicOpFont{E}}

\newcommand{\Var}{\mathrm{Var}}

\newcommand{\frakA}{\protect\ensuremath{\mathfrak{A}}}

\newcommand{\calA}{\protect\ensuremath{\mathcal{A}}}
\newcommand{\calB}{\protect\ensuremath{\mathcal{B}}}

\newcommand{\calF}{\protect\ensuremath{\mathcal{F}}}

\newcommand{\calL}{\protect\ensuremath{\mathcal{L}}}
\newcommand{\calK}{\protect\ensuremath{\mathcal{K}}}

\newcommand{\calP}{\protect\ensuremath{\mathcal{P}}}

\newcommand{\calT}{\protect\ensuremath{\mathcal{T}}}

\newcommand{\sfD}{\protect\ensuremath{\mathsf{D}}}
\newcommand{\sfS}{\protect\ensuremath{\mathsf{S}}}
\newcommand{\sfH}{\protect\ensuremath{\mathsf{H}}}

\newcommand{\sfL}{\protect\ensuremath{\mathsf{L}}}
\newcommand{\sfM}{\protect\ensuremath{\mathsf{M}}}
\newcommand{\sfQ}{\protect\ensuremath{\mathsf{Q}}}
\newcommand{\sfU}{\protect\ensuremath{\mathsf{U}}}

\providecommand{\dfn}{\mathrel{\mathop:}=}
\providecommand{\ddfn}{\mathrel{\mathop{{\mathop:}{\mathop:}}}=}

\newcommand{\imp}{\rightarrow}
\newcommand{\limp}{\multimap}
\newcommand{\timp}{\rightarrowtriangle}
\newcommand{\ltimp}{\leftarrowtriangle}
\newcommand{\tequiv}{\leftrightarrowtriangle}
\newcommand{\equ}{\leftrightarrow}
\newcommand{\eqpr}{\dashv\vdash}
\newcommand{\tens}{\otimes}

\newcommand{\Deriv}[1]{{\normalfont\textsf{#1}}}
\newcommand{\Apply}[1]{\textsf{\scriptsize{}#1}}
\newenvironment{bprooftree}{\leavevmode\hbox\bgroup}{\DisplayProof\egroup}

\newcommand{\rel}{\mathsf{rel}}

\newcommand{\oland}{\owedge}

\DeclareMathOperator*{\bigovee}{\scalerel*{\ovee}{\sum}}
\DeclareMathOperator*{\bigoland}{\scalerel*{\owedge}{\sum}}
\DeclareMathOperator*{\bigowedge}{\scalerel*{\owedge}{\sum}}
\DeclareMathOperator*{\bigtens}{\bigotimes}
\DeclareMathOperator*{\bigdiamond}{\scalerel*{\Diamond}{\sum}}

\DeclareMathOperator{\shriek}{!}
\DeclareMathOperator{\indep}{\perp}

\DeclareMathOperator{\exclusion}{\mid}

\EnableBpAbbreviations

\newcommand{\falsum}{\mathrlap{\,\bot}\bot\,}
 
\theoremstyle{plain}
\newtheorem{theorem}{Theorem}[section]

\newtheorem{proposition}[theorem]{Proposition}
\newtheorem{lemma}[theorem]{Lemma}
\newtheorem{corollary}[theorem]{Corollary}
\theoremstyle{definition}
\newtheorem{definition}[theorem]{Definition}
\newtheorem{example}[theorem]{Example}

\makeatletter
\newtheorem*{rep@theorem}{\rep@title}
\newcommand{\newreptheorem}[2]{
\newenvironment{rep#1}[1]{
 \def\rep@title{#2 \ref{##1}}
 \begin{rep@theorem}}
 {\end{rep@theorem}}}
\makeatother
\newreptheorem{proposition}{Proposition}
\newreptheorem{lemma}{Lemma}
\newreptheorem{theorem}{Theorem}

\newcommand{\thm}{\text{\scriptsize\; (thm)}}

\usepackage[backend=biber,style=numeric,sortcites,url=false,maxbibnames=99,maxcitenames=2,citetracker=true]{biblatex}

\DeclareFieldFormat[article]{title}{\mkbibemph{#1}}
\DeclareFieldFormat[article]{journal}{#1}
\DeclareFieldFormat[article]{journaltitle}{#1}
\DeclareFieldFormat[article]{volume}{\textbf{#1}}
\DeclareFieldFormat[article]{number}{no.~#1}

\DeclareFieldFormat[incollection]{title}{\mkbibemph{#1}}
\DeclareFieldFormat[incollection]{booktitle}{#1}
\DeclareFieldFormat[incollection]{volume}{\textbf{#1}}
\DeclareFieldFormat[incollection]{number}{(#1)}

\DeclareFieldFormat[inproceedings]{title}{\mkbibemph{#1}}
\DeclareFieldFormat[inproceedings]{booktitle}{#1}
\DeclareFieldFormat[inproceedings]{volume}{\textbf{#1}}
\DeclareFieldFormat[inproceedings]{number}{(#1)}

\DeclareFieldFormat[unpublished]{title}{\mkbibemph{#1}}

\DeclareFieldFormat[thesis]{title}{\mkbibemph{#1}}

\renewbibmacro*{volume+number+eid}{
  \printfield{volume}\space
  \printtext[parens]{\usebibmacro{date}}
  \setunit{\addcomma\space}
  \printfield{number}
  \setunit{\addcomma\space}
  \printfield{eid}
  }

\renewbibmacro*{issue+date}{
  \iffieldundef{issue}
      {}
      {\printfield{issue}}
  \newunit}

\renewbibmacro{in:}{}
 
\makeatletter
\DeclareOldFontCommand{\rm}{\normalfont\rmfamily}{\mathrm}
\DeclareOldFontCommand{\sf}{\normalfont\sffamily}{\mathsf}
\DeclareOldFontCommand{\tt}{\normalfont\ttfamily}{\mathtt}
\DeclareOldFontCommand{\bf}{\normalfont\bfseries}{\mathbf}
\DeclareOldFontCommand{\it}{\normalfont\itshape}{\mathit}
\DeclareOldFontCommand{\sl}{\normalfont\slshape}{\@nomath\sl}
\DeclareOldFontCommand{\sc}{\normalfont\scshape}{\@nomath\sc}
\makeatother

\title{Axiomatizations of Team Logics}
\author{Martin Lück}
\date{\vspace{-5ex}}
\affil{\small{}Leibniz Universität Hannover, Institut für Theoretische Informatik,\\
Appelstraße 4, 30167 Hannover, Germany\\\texttt{lueck@thi.uni-hannover.de}}

\bibliography{axiom}

\begin{document}
\maketitle

\begin{abstract}
\textbf{Abstract.}
In a modular approach, we lift Hilbert-style proof systems for propositional, modal and first-order logic to generalized systems for their respective team-based extensions.
We obtain sound and complete axiomatizations for the dependence-free fragment FO($\negg$) of Väänänen's first-order team logic TL, for propositional team logic PTL, quantified propositional team logic QPTL, modal team logic MTL, and for the corresponding logics of dependence, independence, inclusion and exclusion.

As a crucial step in the completeness proof, we show that the above logics admit, in a particular sense, a semantics-preserving elimination of modalities and quantifiers from formulas.
\end{abstract}

\noindent\textit{Keywords and phrases:} axiomatization, dependence logic, propositional team logic, modal team logic, team logic

\noindent\textit{1998 ACM Subject Classification:} F.4.1 Mathematical Logic

\section{Introduction}

While their history goes back to ancient philosophers, propositional and modal logics have assumed an outstanding role in the age of modern computer science, with plentiful applications in software verification, modeling, artificial intelligence, and protocol design.
An important property of a logical framework is \emph{completeness}, \ie, that the act of mechanical reasoning can effectively be done by an algorithm.
The question of completeness of first-order logic, which is the foundation of today's mathematics, was not settled until the famous result of Gödel in 1929.
Until today, the area of proof theory has achieved tremendous progress and is still a growing field, especially with regard to many variants of propositional and modal logics as well as non-classical logics (see \eg Fitting \cite{fitting_proof_1983}).

\smallskip

A recent extension of classical logics is so-called \emph{team logic}.
It originated from the concept of quantifier dependence and independence. The following question has been long-known in linguistics: how can the statement
\begin{quote}
\emph{For every $x$ there is $y$, and for every $u$ there is $v$ such that P(x,y,u,v).}
\end{quote} be formalized in first-order logic such that $y$ and $v$ are chosen independently?
Some suggestions were Henkin's \emph{branching quantifiers} \cite{Henkin1961} as well as \emph{in\-de\-pend\-ence-friendly logic} $\IF$ by Hintikka and Sandu~\cite{hintikka_informational_1989}.
The idea of the latter is to assert dependence and independence between quantifiers syntactically, implemented semantically by a game of imperfect information.
Hodges \cite{Hodges1997} proved that $\IF$ also admits a compositional semantics if formulas were evaluated on \emph{teams}, which are sets of assignments, instead of single assignments.
In this vein, Väänänen~\cite{vaananen_dependence_2007} introduced \emph{dependence logic} $\D$.
Here, the fundamental idea is that dependencies are not stated alongside the quantifiers, but instead are expressed as logical \emph{dependence atoms}, written $\dep{x,y}$, which means "$x$ functionally determines $y$."

\smallskip

Beside Väänänen's dependence atom, a variety of atomic formulas solely for reasoning in teams were introduced. Galliani~\cite{galliani_inclusion_2012} as well as Grädel and Väänänen~\cite{gradel2013dependence} pointed out connections to database theory; they formalized common constraints like \emph{independence} $\perp$, \emph{inclusion} $\subseteq$ and \emph{exclusion} $|$ as atoms in the framework of team semantics.
Beside first-order logic, all these atoms have also been adapted for modal logic $\ML$ \cite{vaananen_modal_2008}, and (quantified) propositional logic $\PL$ resp.\ $\QPL$ \cite{sano_et_al,yang_propositional_2016,gandalf}.

\smallskip

As for any logic, the question of axiomatizability arises for these logics with team semantics, in particular for the extensions of first-order logic.
However, dependence logic $\D$ is as expressive as existential second-order logic $\SO(\exists)$ \cite{vaananen_dependence_2007}, while its extension $\TL$, obtained from $\D$ by adding a semantical negation $\negg$, is equivalent to full second-order logic $\SO$ \cite{hutchison_team_2009}.
Accordingly, both are non-axiomatizable.
Later, Kontinen and Väänänen~\cite{kontinen_axiomatizing_2013} gave a partial axiomatization in the sense that $\FO$-consequences of $\D$-formulas are derivable, and recently a system that can derive all so-called \emph{negatable} $\D$-formulas was presented by Yang~\cite{Yang2016}.

\smallskip

For certain fragments of propositional and modal team logic, axiomatizations exist.
Hannula~\cite{Hannula16} presented natural deduction systems for propositional dependence logic $\PDL$, quantified propositional dependence logic $\QPDL$ and extended modal dependence logic $\EMDL$.
By contrast, Sano and Virtema~\cite{sano_et_al} gave Hilbert-style axiomatizations and labeled tableau calculi for propositional dependence logic $\PDL$ and (extended) modal dependence logic $\EEMDL$.
Independently, Yang~\cite{Yang17} presented both Hilbert-style axiomatizations and natural deduction systems for a family of so-called \emph{downward-closed} modal logics with team semantics, which includes $\EMDL$ as well.

However, a fundamental restriction of these solutions is that they all rely on the absence of Boolean negation.
As a consequence, team logics with negation, most notably propositional team logic ($\PTL$), modal team logic ($\MTL$) and $\FO(\negg)$, require a different approach.

\smallskip

\subsubsection*{Contribution}

In this paper, we present complete axiomatizations for several team logics including the $\depopp$-free fragment of $\TL$, coined $\FO(\negg)$ by Gallani~\cite{Galliani14}.
Here, we consider it under \emph{lax semantics}~\cite{galliani_inclusion_2012}.

By showing that $\FO(\negg)$ is axiomatizable, we identify the dependence atom $\depopp$, and not team semantics itself, as the source of incompleteness of $\D$ and $\TL$.
One interpretation is that reasoning about teams can be axiomatized; but only if we cannot talk about the internal dependencies between the elements of the team.

A crucial step in the completeness proof is the perhaps surprising fact that $\TL$ without $\depopp$ collapses to $\calB(\FO)$, the Boolean closure of classical first-order logic $\FO$ under team semantics.
The latter has the so-called \emph{flatness property}, which implies that any classical proof system of $\FO$ is also adequate for team semantics.
From there, an axiomatization of $\calB(\FO)$ is easily found in a similar way as for propositional logic.

Whether logics not collapsing to $\calB(\FO)$ have axiomatizations is beyond the scope of this paper.
Our approach, however, also yields results for (quantified) propositional and modal team logics.
They can define their atoms of dependence, independence and inclusion in terms of other connectives, whereas this is not possible in the first-order setting.
For this reason, the logics $\QPTL$, $\PTL$ and $\MTL$ collapse to $\calB(\QPL)$, $\calB(\PL)$ and $\calB(\ML)$ in a similar fashion as $\FO(\negg)$ to $\calB(\FO)$, and we obtain complete axiomatizations as a byproduct.
Figure~\ref{fig:overview} illustrates this.

\smallskip

The article is organized as follows.
Let us remark that the axiomatizations as a whole can be found in Figure~\ref{fig:all-axioms}.
In each section of the paper, one subsystem is introduced.
First, the system $\sfL$ is presented in Section~\ref{sec:boolean} as a complete proof system for the Boolean closure $\calB(\calL)$ under team semantics, where $\calL\in \{\PL,\QPL,\ML,\FO\}$.
In Section~\ref{sec:splitting}, the system $\sfS$ is added which permits to eliminate the \emph{splitting disjunction} $\tens$ in a semantics-preserving manner.
By means of this elimination, $\PTL$ collapses to $\calB(\PL)$ in the proof system.
Likewise, in Section~\ref{sec:modal} it is shown that modalities can be eliminated in a system we call $\sfM$, and that the problem of the axiomatization of $\MTL$ is thereby reduced to that of $\calB(\ML)$ as well.
The results on $\PTL$ and $\MTL$ were presented earlier \cite{csl16axiom}, and are now extended to logics containing quantifiers, namely quantified Boolean formulas and first-order logic.
Section~\ref{sec:qbf} introduces the system $\sfQ$ which allows to axiomatize and hence eliminate quantifiers in a similar fashion as the modalities.

\begin{figure}\centering
\begin{tikzpicture}[thick,scale=0.8]
\draw (0,-.3) ellipse (1.5cm and 2.5cm);
\node (MTL) at (0,1.8cm){$\PTL$};
\node[color=red] at (0,1.2cm) {$\sfH^0\sfL\sfS$};
\draw (0,-0.99cm) ellipse (1.27cm and 1.81cm);
\node (BML) at (0,0.2cm){$\calB(\PL)$};
\node[color=red] at (0,-.4cm) {$\sfH^0\sfL$};
\draw (0,-1.8cm) ellipse (1cm and 1cm);
\node (ML) at (0,-1.8cm){$\PL$};
\node[color=red] at (0.1cm,-2.4cm) {$\sfH^0$};
\draw[red,<-] (-1.8cm,0.2cm) to[in=180,out=180] node[left,black]{\scriptsize(eliminate $\limp,\Box,\triangle,\forall,\shriek$)} (-1.8cm,1.5cm);
\draw[red,->] (-1.8cm,-1.7cm) to[in=180,out=180] node[left,black]{\scriptsize(add propositional logic)} (-1.8cm,-.2cm);
\end{tikzpicture}
\begin{tikzpicture}[thick,scale=0.8]
\draw (0,-.3) ellipse (1.5cm and 2.5cm);
\node (MTL) at (0,1.8cm){$\MTL$};
\node[color=red] at (0,1.2cm) {$\sfH^\Box\sfL\sfS\sfM$};
\draw (0,-0.99cm) ellipse (1.27cm and 1.81cm);
\node (BML) at (0,0.2cm){$\calB(\ML)$};
\node[color=red] at (0,-.4cm) {$\sfH^\Box\sfL$};
\draw (0,-1.8cm) ellipse (1cm and 1cm);
\node (ML) at (0,-1.8cm){$\ML$};
\node[color=red] at (0.1cm,-2.4cm) {$\sfH^\Box$};
\end{tikzpicture}
\begin{tikzpicture}[thick,scale=0.8]
\draw (0,-.3) ellipse (1.5cm and 2.5cm);
\node (MTL) at (0,1.75cm){$\FO(\negg)$};
\node[color=red] at (0,1.2cm) {$\sfH\sfU\sfL\sfS\sfQ$};
\draw (0,-0.99cm) ellipse (1.27cm and 1.81cm);
\node (BML) at (0,0.2cm){$\calB(\FO)$};
\node[color=red] at (0,-.4cm) {$\sfH\sfU\sfL$};
\draw (0,-1.8cm) ellipse (1cm and 1cm);
\node (ML) at (0,-1.8cm){$\FO$};
\node[color=red] at (0cm,-2.4cm) {$\sfH$};
\end{tikzpicture}
\caption{Lower arrow: Axiomatization of $\calB(\cdot)$ by adding the propositional axioms $\sfL$ to $\sfH^0,\sfH^\Box$ and $\sfH$ (Section~\ref{sec:boolean} and \ref{sec:fo}).
Upper arrow: Axiomatization of $\QPPTL, \MTL$ and $\FO(\negg)$ by reduction to the $\calB(\cdot)$ fragment (Section~\ref{sec:splitting}, \ref{sec:modal} and \ref{sec:qbf}).\label{fig:overview}}
\end{figure}

However, a crucial difference between between first-order logic and propositional or modal logic is the existence of \emph{sentences}.
They complicate the task of finding a complete proof system for $\calB(\FO)$.
In Section~\ref{sec:fo} this obstacle is overcome by adding the so-called \emph{unanimity axiom} $\sfU$.

Another corollary of the operator elimination is a purely syntactical proof of Galliani's theorem \cite{Galliani14} that (on non-empty teams) every $\FO(\negg)$-sentence is equivalent to an $\FO$-sentence.
Finally, in Section~\ref{sec:fragments}, we consider axiomatizable sublogics in the propositional and modal settings, in particular, logics of dependence, independence, inclusion, and exclusion.

\section{Preliminaries}

\label{sec:prelim}

We associate with every logic $\calL$ a triple $(\Phi_\calL, \frakA_\calL, \vDash_\calL)$, where $\Phi_\calL$ is the set of \emph{formulas} of $\calL$, $\frakA_\calL$ is the class of \emph{valuations}, and $\vDash_\calL$ is the \emph{satisfaction relation} between $\frakA_\calL$ and $\Phi_\calL$.
In what follows, we often omit $\calL$ in the components if the meaning is clear.

\medskip

Let $\Psi, \Theta \subseteq \Phi$ be sets of formulas, $\varphi,\psi \in \Phi$ formulas and $A \in \frakA$ a valuation.
$A \vDash \Psi$ means $A \vDash \psi$ for all $\psi \in \Psi$.
We say that $\Psi$ \emph{entails} $\Theta$, in symbols $\Psi \vDash \Theta$, if $A \vDash \Psi$ implies $A \vDash \Theta$ for all $A \in \frakA$.

We usually omit the set braces and simply write, \eg, $\varphi \vDash \psi$ instead of $\{\varphi\} \vDash \{\psi\}$.
Moreover, we write $\vDash \varphi$ and $\vDash \Psi$ instead of $\emptyset \vDash \varphi$ and $\emptyset \vDash \Psi$.
Finally, $\Psi \equiv \Theta$ means $\Psi \vDash \Theta$ and $\Theta \vDash \Psi$.
The class of valuations satisfying a formula $\varphi$, called \emph{models} of $\varphi$, is $\Mod{\varphi} \dfn \Set{A  \in \frakA | A \vDash \varphi}$.
The models $\Mod{\Psi}$ of a set $\Psi$ are defined similarly.
A formula $\varphi$ that has a model, \ie, $\Mod{\varphi} \neq \emptyset$, is called \emph{satisfiable}.
Dually, if $\Mod{\varphi} = \frakA$, then $\varphi$ is called \emph{valid} or \emph{tautology}.

A logic $\calL$ is \emph{compact} if for all $\Psi \subseteq \Phi$ it holds that $\Psi$ has a model if and only if every finite subset of $\Psi$ has a model.

For brevity, we will also write $\varphi \in \calL$ instead of $\varphi \in \Phi_\calL$ and $\Psi \subseteq \calL$ instead of $\Psi \subseteq \Phi_\calL$.
If two logics $\calL, \calL'$ share the same valuations, then $\calL \leq \calL'$ means that for every $\varphi \in \calL$ there is a $\varphi' \in \calL'$ such that $\Mod{\varphi} = \Mod{\varphi'}$. $\calL \equiv \calL'$ means $\calL \leq \calL'$ and $\calL'\leq \calL$.
By contrast, $\calL \subseteq \calL'$ means that $\Phi_\calL \subseteq \Phi_{\calL'}$, but the valuations and truth on the common formulas are identical. Then $\calL$ is a \emph{fragment} of $\calL'$.

\subsection{Propositional team logic}

Classical \emph{propositional logic} $\PL$ is built upon a countably infinite set $\PS$ of \emph{atomic propositions}, denoted by Latin letters $\{ a, b, c, \ldots \}$.
The syntax of \emph{quantified propositional logic} $\QPL$ is given as
\[
\alpha \ddfn \neg \alpha \mid (\alpha \imp \alpha) \mid \forall x \,\alpha \mid x \qquad (x \in \PS)\text{.}
\]
\emph{Propositional logic} $\PL$ is then the quantifier-free fragment of $\QPL$.
The valuations of $\QPL$ are Boolean \emph{assignments} $s \colon \PS \to \{ 1, 0 \}$ with the usual semantics.

\medskip

We extend $\QPL$ to \emph{quantified propositional team logic} $\QPTL$ \cite{hannula_complexity_2015,gandalf} and $\PL$ to \emph{propositional team logic} $\PTL$ \cite{yang_propositional_2017} as follows.
For clarity, in the following we reserve the letters $\alpha, \beta,\gamma,\ldots$ for classical formulas;
and we use $\varphi, \psi, \vartheta,\ldots$ for general formulas.

In team semantics, the valuations of $\QPL$ are sets $T$, called \emph{teams}, of propositional assignments.
For $\QPL$-formulas $\alpha$, a team $T$ satisfies $\alpha$ if and only if all its members satisfy it, \ie, $T \vDash \alpha$ if $\forall s \in T : s \vDash \alpha$.
In particular, the empty team satisfies every classical formula by definition.
$\QPTL$ now extends the syntax of $\QPL$ by
\[
\varphi \ddfn \alpha \mid \negg \varphi \mid (\varphi \timp \varphi) \mid (\varphi \limp \varphi) \mid \forall x \,\varphi \mid \shriek x \,\varphi \qquad (x \in \PS\text{, }\alpha \in \QPL)\text{.}
\]
$\PTL$ is then simply the quantifier-free fragment of $\QPTL$.

Note that, while $\QPL$ offers Boolean operators such as $\imp$ and $\neg$ on the level of single assignments, their meaning changes when switching to team semantics.
In particular, they do not longer correspond to the Boolean implication and negation.
For this reason, the \emph{strong negation} $\negg$ and the \emph{material implication} $\timp$ are introduced as Boolean operators on the level of teams.

On the other hand, the binary operator $\limp$ is another team-semantical generalization of the implication.
Unlike $\timp$, it is not a truth-functional connective, but quantifies over all possible \emph{divisions} of a team into two subteams.

For the semantics of team-wide quantifiers, we need the concept of \emph{supplementing functions}.
Given a team $T$, a supplementing function is a function $f \colon T \to \{\{0\},\{1\},\{0,1\}\}$.
The team $T^x_f \dfn \Set{ s^x_a | s \in T, a \in f(s)}$ is called \emph{supplementing team},
where $s^x_a(x) \dfn a$ and $s^x_a(y) \dfn s(y)$ for $y \in \PS \setminus \{ x\}$.
If $f(s) = A$ is constant, then we simply write $T^x_A$ instead of $T^x_f$.
\label{p:semantics}
The semantics of $\QPTL$ is then
\begin{alignat*}{2}
    &T \vDash \alpha&& \Leftrightarrow \forall s \in T : s \vDash_{\QPL} \alpha\text{,}\\
\intertext{for formulas $\alpha \in \QPL$, and otherwise}
    &T \vDash \negg\varphi&&\Leftrightarrow\; T \nvDash \varphi\text{,}\\
    &T \vDash \varphi \timp \psi&&\Leftrightarrow\; T \vDash \psi \text{ or }T\nvDash \varphi\text{,}\\
    &T \vDash \varphi \limp \psi\;&&\Leftrightarrow\; \text{for all } S, U \subseteq T : \text{ if }S \cup U = T \text{ and }S \vDash \varphi \text{, then }U \vDash \psi\text{,}\\
    &T \vDash \forall x \,\varphi &&\Leftrightarrow\; T^x_{\{0,1\}} \vDash \varphi \text{,}\\
    &T \vDash \shriek x \, \varphi &&\Leftrightarrow\; T^x_{f} \vDash \varphi \text{ for all  }f : T \to \{\{0\},\{1\},\{0,1\}\}\text{.}
\end{alignat*}

\subsection{Modal team logic}

Classical \emph{modal logic} $\ML$ extends the formulas of $\PL$ by the $\Box$-modality:
\[
\alpha \ddfn \neg \alpha \mid (\alpha \imp \alpha) \mid \Box \alpha \mid x \qquad (x \in \PS)
\]

Valuations of modal formulas are \emph{Kripke structures}, which are essentially labeled transition systems.
A \emph{frame} is a directed graph $(W, R)$ where $W$ is the set of \emph{worlds} or \emph{points} and $R \subseteq W \times W$ is a binary edge relation.
A Kripke structure $K = (W,R,V)$ then consists of a frame $(W,R)$ together with a labeling function $V \colon \PS \to \pow{W}$, where $\pow{\cdot}$ denotes the power set operation.
A \emph{pointed Kripke structure} is a pair $(K,w)$ where $K$ is a Kripke structure $(W,R,V)$ and $w \in W$ is its \emph{initial world} or \emph{initial state}.
$\ML$ then has the class of all pointed Kripke structures as evaluations and the usual Kripke semantics.

\emph{Modal team logic} $\MTL$ extends modal logic and was introduced by Müller~\cite{mtlthesis} and studied, \eg, by Kontinen et al.\ \cite{kontinen_van_2014}.
It is evaluated on pairs $(K,T)$, where $K = (W,R,V)$ is a Kripke structure, but with $T \subseteq W$ being a \emph{set} of worlds, called \emph{team (in $K$)}.
$\MTL$ extends $\ML$ by
\[
\varphi \ddfn \alpha \mid \negg \varphi \mid (\varphi \timp \varphi) \mid (\varphi \limp \varphi) \mid \Box \varphi \mid \triangle \varphi\qquad \text{($\alpha \in \ML$).}
\]
Let us turn to its semantics.
If $(W,R,V)$ is a Kripke structure and $T \subseteq W$ a team, then we define its \emph{image} $RT \dfn \Set{ w \in W | \exists v \in T : Rvw }$ and \emph{pre-image} $R^{-1}T\dfn\Set{ w \in W | \exists v \in T : Rwv }$.
A \emph{successor team} $T'$ of $T$ is a team such that $T' \subseteq RT$ and $T \subseteq R^{-1}T$.
Intuitively, a successor team of $T$ can be obtained by picking at least one successor of every element in $T$.
If $\alpha \in \ML$, $K = (W,R,V)$ and $T \subseteq W$, then
\begin{alignat*}{3}
    &(K,T) \vDash \alpha&& \Leftrightarrow\; &&\forall w \in T : (K,w) \vDash_{\ML} \alpha\text{,}\\
\intertext{for formulas $\alpha \in \ML$, and otherwise}
    &(K,T) \vDash \negg\varphi&&\Leftrightarrow &&(K,T) \nvDash \varphi\text{,}\\
    &(K,T) \vDash \varphi \timp \psi&&\Leftrightarrow &&(K,T) \vDash \psi \text{ or }(K,T)\nvDash \varphi\text{,}\\
    &(K,T) \vDash \varphi \limp \psi\;&&\Leftrightarrow &&\text{for all } S, U \subseteq T : \text{ if }S \cup U = T \text{ and }(K,S) \vDash \varphi \text{,}\\
    & && && \text{then }(K,U) \vDash \psi\text{,}\\
    &(K,T) \vDash \Box \varphi &&\Leftrightarrow &&(K,RT) \vDash \varphi \text{,}\\
    &(K,T) \vDash \triangle \varphi &&\Leftrightarrow && (K,S) \vDash \varphi \text{ for all successor teams }S\text{ of }T\text{.}
\end{alignat*}

\subsection{First-order team logic}

Classical \emph{first-order logic} $\FO$ consists of \emph{terms} and \emph{formulas} over some vocabulary $\tau$
of relation symbols $R_i$ and function symbols $f_i$, each with their respective arity $\arity{R_i} \geq 0$, $\arity{f_i}\geq 0$.
A function symbol of arity zero is a \emph{constant symbol} and usually denoted by $c$.

Let $\Var \dfn \{ x_1, x_2, \ldots \}$ be a countably infinite set of first-order variables.
We define the syntax of $\FO$ by
\begin{align*}
\alpha \ddfn \neg \alpha \mid (\alpha \imp \alpha) \mid \forall x \,\alpha \mid t = t \mid R(t_1,\ldots,t_{\arity{R}})\quad\text{(for $x \in \Var$, $\tau$-terms $t_i$)}
\end{align*}

A formula without free variables is called \emph{closed}.

Formulas are evaluated in the classical Tarski semantics.
We require \emph{first-order structures} $\calA = (A, \tau^\calA)$, consisting of a non-empty \emph{domain} $A$ (also denoted by $|\calA|$) and interpretations $\tau^\calA$ for the vocabulary.
A \emph{first-order interpretation} is then a pair $(\calA, s)$ of a structure $\calA = (A, \tau^\calA)$ and an assignment $s \colon \Var \to A$.
Given a term $t$, it evaluates to an element of $A$, denoted by $t^{(\calA,s)}$.

A first-order formula without free variables is called \emph{sentence}.
The set of all sentences is $\FO^0$.
If $\alpha \in \FO^0$, then we sometimes write $\calA \vDash \alpha$ instead of $(\calA,s) \vDash \alpha$.

\smallskip

\emph{First-order team logic} $\TL$ was introduced by Väänänen~\cite{vaananen_dependence_2007}.
We define it via the following syntax, where $\alpha \in \FO$, $t_1,\ldots,t_n$ are terms, $n \geq 1$, and $x \in \Var$.
\[
\varphi \ddfn \alpha \mid \negg \varphi \mid (\varphi \timp \varphi) \mid (\varphi \limp \varphi) \mid \forall x \,\varphi \mid \shriek x \,\varphi \mid \dep{t_1,\ldots,t_n}\text{.}
\]
A $\TL$-valuation is a pair $(\calA,T)$, where $\calA$ is a first-order structure, and $T$ a \emph{team}, \ie, a set of assignments $s : \Var \to \size{\calA}$.
Note that $\dep{t_1,\ldots,t_n}$ is an atomic formula, called \emph{dependence atom} \cite{vaananen_dependence_2007}.
Intuitively it states that in the team $t_n$ is functionally determined by $t_1,\ldots,t_{n-1}$.

For the semantics of the quantifiers, we require the first-order analog of supplementing functions $f$ for a team $T$, formally $f \colon T \to \pow{\size{\cal{A}}}\setminus \{\emptyset\}$.
Then the supplementing team $T^x_f$ and the duplicating team $T^x_{\size{\calA}}$ are defined as in the propositional case.
The semantics of $\TL$, where $\alpha \in \FO$, $x \in \Var$ and $t_1,\ldots,t_n$ are terms, are
\begin{alignat*}{3}
    &(\calA,T) \vDash \alpha&& \Leftrightarrow \; &&\forall s \in T : (\calA,s) \vDash_{\FO} \alpha\text{,}\\
    \intertext{for $\alpha \in \FO$, and otherwise}
    &(\calA,T) \vDash \dep{t_1,\ldots,t_n} && \Leftrightarrow \; &&\forall s,s' \in T : \text{if }t_i^{(\calA,s)} = t_i^{(\calA,s')}\text{ for all }1 \leq i < n\text{,}\\
    & && &&\text{then also }t_n^{(\calA,s)} = t_n^{(\calA,s')}\text{,}\\
    &(\calA,T) \vDash \negg\varphi&&\Leftrightarrow&& (\calA,T) \nvDash \varphi\text{,}\\
    &(\calA,T) \vDash \varphi \timp \psi&&\Leftrightarrow&& (\calA,T) \vDash \psi \text{ or }(\calA,T)\nvDash \varphi\text{,}\\
    &(\calA,T) \vDash \varphi \limp \psi\;&&\Leftrightarrow&& \text{for all } S, U \subseteq T : \text{ if }S \cup U = T \\
    & && && \text{ and }(\calA,S) \vDash \varphi\text{, then }(\calA,U) \vDash \psi\text{,}\\
    &(\calA,T) \vDash \forall x \,\varphi &&\Leftrightarrow &&T^x_{\size{\calA}} \vDash \varphi \text{,}\\
    &(\calA,T) \vDash \shriek x \, \varphi &&\Leftrightarrow&& T^x_{f} \vDash \varphi \text{ for all  }f : T \to \pow{\size{\calA}} \setminus \{\emptyset\} \text{.}
\end{alignat*}

\smallskip

If $\varphi$ is a formula and $\sigma,\sigma'$ are formulas or terms, then $\varphi[\sigma/\sigma']$ denotes the formula obtained from $\varphi$ by substituting in parallel every occurrence of $\sigma$ by $\sigma'$.

\medskip

We require some abbreviations.
For the truth-functional constants in propositional, modal logic, and first-order logic, we use the abbreviations $\top \dfn (x \imp x)$ resp.\ $\top \dfn \exists x x = x$ and $\bot \dfn \neg \top$, where $x$ is a fixed proposition resp.\ variable.
Moreover, we write $\alpha \lor \beta \dfn \neg \alpha \imp \beta$ for the disjunction, $\alpha \land \beta \dfn \neg(\alpha \imp \neg \beta)$ for the conjunction, and $\alpha \equ \beta \dfn (\alpha \imp \beta) \land (\beta \imp \alpha)$ for the equivalence.

Note that the above connectives are not truth-functional under team-semantics:
for example, $p \lor \neg p$ does not imply that either $p$ or $\neg p$ holds in the team.
Moreover, $\bot$ is true in the empty team.

For this reason, we define the \emph{strong falsum} $\falsum \dfn \negg \top$ which is false in all teams.
Likewise, we define proper Boolean connectives based on $\negg$ and $\timp$, \ie, $(\varphi \ovee \psi) \dfn (\negg\varphi\timp\psi)$ (disjunction), $(\varphi \owedge \psi) \dfn \negg(\varphi \timp \negg\psi)$ (conjunction) and $(\varphi \tequiv \psi) \dfn (\varphi \timp \psi) \owedge (\psi \timp \varphi)$ (equivalence).
Dually to the team-semantical interpretation of classical formulas, $\E\alpha \dfn \negg \neg \alpha$ expresses that at least one element of the team satisfies $\alpha$.

Moreover, we abbreviate $\varphi \tens \psi \dfn \negg(\varphi \limp \negg\psi)$.
The meaning of $\varphi \tens \psi$ is that the current team permits \emph{some} division into subteams $S$ satisfying $\varphi$ and $U$ satisfying $\psi$ (cf.\ \cite{vaananen_dependence_2007,yang_propositional_2017}).

We assume $\imp$, $\timp$ and $\limp$ as right-associative and $\land, \owedge, \lor, \ovee, \tens$ as left-associative.

Furthermore, the dual of the $\triangle$-modality is defined as $\Diamond\varphi\dfn\negg \triangle \negg \varphi$, which is true if \emph{some} successor team satisfies $\varphi$.
Likewise, the dual of $\shriek$ is $\exists$, \ie, $\exists x \, \varphi \dfn \negg \shriek x \negg \varphi$, which is true if there is \emph{some} supplementing function $f$ such that $T^x_f$ satisfies $\varphi$.

\smallskip

Väänänen's \emph{dependence logic} $\D$ can then be defined as the fragment of $\TL$ that is the closure of $\FO$ and $\depop$ under only $\tens,\oland,\exists$ and $\forall$ \cite{vaananen_dependence_2007}.
The logic $\FO(\negg)$ is then simply the $\depop$-free fragment of $\TL$.

\smallskip

Note that propositional and modal team logic possess a dependence atom as well, written $\dep{\alpha_1,\ldots,\alpha_n}$ for \emph{formulas} $\alpha_1,\ldots,\alpha_n$ instead of terms \cite{yang_propositional_2016,emdl,vaananen_modal_2008}.
It has the meaning that every subteam uniform in the truth of $\alpha_1,\ldots,\alpha_n$ is also uniform in the truth of $\beta$, in other words, that the truth of $\beta$ is a function of that of $\alpha_1,\ldots,\alpha_n$.
However, as this atom can expressed as
\[
\top \limp \left[\bigoland_{i=1}^n(\alpha_i \ovee \neg \alpha_i) \timp (\beta \ovee \neg \beta)\right]\text{,}
\]
we do not regard it as part of the syntax of $\QPPTL$ or $\MTL$ (see also Section~\ref{sec:fragments}).

\smallskip

In the first-order setting, the dependence atom cannot be defined by the other operators.
As $\D$ and $\TL$ are not axiomatizable \cite{vaananen_dependence_2007}, we will focus on the $\depop$-free fragment $\FO(\negg)$.

\smallskip

Note that propositional and modal team logics are often defined in the literature using only literals $p, \neg p$ as atoms.
The classical operators $\lor,\land,\Box,\Diamond,\forall$ and $\exists$ are then the primitive connectives (see \eg Väänänen, Sano, and Virtema \cite{sano_et_al,vaananen_modal_2008,vaananen_dependence_2007}).

The rationale behind deviating from this convention is twofold.
First, embedding the classical logics, not necessarily in negation normal form, as a "layer" in team logic allows to comfortably build onto existing proof systems.
Second, we make extensive use of introduction rules such as $\varphi \vdash \Box \varphi$.
Such rules would be unsound for $\Diamond,\tens$ and $\exists$.
For these reasons, we prefer $\triangle$, $\limp$ and $\shriek$ as primitive connectives.

\subsection{Proof systems}

A proof system is a tuple $\Omega = (\Xi, \Psi, I)$ where $\Xi$ is a set of \emph{judgments} (usually $\calL$-formulas), $\Psi \subseteq \Xi$ is a set of \emph{axioms}, and $I \subseteq \pow{\Xi} \times \Xi$ is a set of \emph{inference rules}.
Throughout this paper, $\Xi$, $\Psi$ and $I$ are all assumed countable and efficiently decidable.
The component-wise union of two systems $\Omega$, $\Omega'$ is written $\Omega\Omega'$.

An \emph{$\Omega$-proof $\calP$} of $\varphi \in \Xi$ from $\Phi \subseteq \Xi$ is a finite sequence $\calP = (P_0, \ldots, P_n)$ of finite sets $P_i \subseteq \Xi$ such that $\varphi \in P_n$, and $\xi \in P_i$ implies that either $\xi \in P_{i-1} \cup \Psi \cup \Phi$, or $(P_{i-1}', \xi) \in I$ for some $P_{i-1}' \subseteq P_{i-1}$.
We write $\Phi \vdash_\Omega \varphi$ if there is some $\Omega$-proof of $\varphi$ from $\Phi$, and usually omit $\Omega$ if it is clear.

If two formulas $\varphi$ and $\varphi'$ prove each other, \ie, $\{\varphi\} \vdash \varphi'$ and $\{\varphi'\}\vdash \varphi$, then we write $\varphi \dashv\vdash \varphi'$.
For sets we write $\Phi \dashv\vdash \Phi'$ if for every $\varphi \in \Phi$ it holds $\Phi' \vdash \varphi$, and for every $\varphi \in \Phi'$ it holds $\Phi \vdash \varphi$.
Instead of $\emptyset \vdash \varphi$, we also write $\vdash \varphi$.

A system $\Omega$ is \emph{sound} for a logic $\calL$ if for all $\Phi \subseteq \calL$ and $\varphi \in \calL$ it holds that $\Phi \vdash_\Omega \varphi$ implies $\Phi \vDash_\calL \varphi$.
It is \emph{complete} if $\Phi \vDash_\calL \varphi$ implies $\Phi \vdash_\Omega \varphi$.
Note that every logic $\calL$ with a sound and complete proof system is compact.

\medskip

The proof systems in this paper are based on the common Hilbert-style axiomatizations of propositional, modal and first-order logic (Figure~\ref{fig:hilbert-base}).
We use the system $\sfH^0$ (\Deriv{(A1)}--\Deriv{(A3)} and \Deriv{(E$\imp$)} for $\PL$, the system $\sfH^1$ ($\sfH^0$ and \Deriv{(A4)}) for $\QPL$, the system $\sfH^\Box$ ($\sfH^0$, \Deriv{(K)} and \Deriv{(Nec)}) for $\ML$, and the system $\sfH$ ($\sfH^0$, \Deriv{(A5)}--\Deriv{(A8}) and \Deriv{(UG$\forall$)}) for $\FO$.
More precisely, \Deriv{(A1)} stands for all substitution instances of the schema \Deriv{(A1)}, and so on.

Let us explain the inference rules \Deriv{(Nec)} and \Deriv{(UG$\forall$)}.
In contrast to \Deriv{(E$\imp$)}, they cannot be applied to arbitrary derived formulas.
Instead, they can only be applied to \emph{theorems} $\alpha$, meaning that $\alpha$ was derived from the axioms of the system only, not using any premises.
This ensures that the \emph{deduction theorem} holds.\footnote{For instance, if $\alpha$ is a tautology, then so is $\Box\alpha$, however, $\alpha \nvDash \Box\alpha$.
This phenomenon is discussed by Fitting and Hakli and Negri \cite{deduction_fail,patrick_blackburn_2_2007} as the "failure of the deduction theorem" in modal logic, and one way to remedy it is exactly the restriction of $\alpha$ to be a theorem.}

\begin{figure}[t]
\centering
\begin{tabular}{ll}\toprule
\Deriv{(A1)}&$\alpha \imp (\beta \imp \alpha)$\\
\Deriv{(A2)}&$(\alpha \imp (\beta \imp \gamma)) \imp (\alpha \imp \beta) \imp (\alpha \imp \gamma)$\\
\Deriv{(A3)}&$(\neg\alpha \imp \neg\beta)\imp(\beta \imp \alpha)$\\
\midrule
\Deriv{(A4)}&$\forall x \, \alpha \leftrightarrow (\alpha[x/\top] \land \alpha[x/\bot])$\\
\midrule
\Deriv{(A5)}&$\forall x \, \alpha \imp \alpha[x/t]$, $t$ term\\
\Deriv{(A6)}&$\forall x \, (\alpha \imp \beta) \imp (\alpha \imp \forall \, x \beta$), $x$ not free in $\alpha$\\
\Deriv{(A7)}&$x = x$\\
\Deriv{(A8)}&$x = y \imp (\alpha \imp \alpha[x/y])$\\
\midrule
\Deriv{(K)}&$\Box(\alpha\imp\beta) \imp (\Box\alpha \imp \Box\beta)$\\
\midrule
\Deriv{(E$\imp$)}&\begin{bprooftree}
\AxiomC{$\alpha$}
\AxiomC{$\alpha \imp \beta$}
\BinaryInfC{$\beta$}
\end{bprooftree}\vspace{8pt}\\
\Deriv{(Nec)}&\begin{bprooftree}
\AxiomC{$\alpha$}
\RightLabel{\small{}($\alpha$ theorem)}
\UnaryInfC{$\Box \alpha$}
\end{bprooftree}\vspace{8pt}\\
\Deriv{(UG$\forall$)}&\begin{bprooftree}
\AxiomC{$\alpha$}
\RightLabel{\small{}($\alpha$ theorem, $t$ term)}
\UnaryInfC{$\forall x \, \alpha[t/x]$}
\end{bprooftree}\vspace{8pt}\\
\bottomrule
\end{tabular}
\caption{Hilbert-style axiomatizations of $\QPPL$, $\ML$ and $\FO$}\label{fig:hilbert-base}
\end{figure}

\begin{proposition}\label{prop:base-completeness}
$\sfH^0$ is sound and complete for $\PL$. $\sfH^1$ is sound and complete for $\QPL$. $\sfH^\Box$ is sound and complete for $\ML$. $\sfH$ is sound and complete for $\FO$.
\end{proposition}

We defined classical logics to have the \emph{flatness} property, \ie, a classical formula is satisfied by a team $T$ under team semantics exactly when all of $T$'s members satisfy it in classical semantics.
In the following, we emphasize this by referring to flat logics as $\calF$.

Flatness has one particularly useful consequence regarding proof systems:

\begin{proposition}\label{prop:equal-semantics}
Let $\calF \in \{ \PL, \QPL, \ML, \FO\}$, $\Gamma \subseteq \calF, \alpha \in \calF$.
Then $\Gamma \vDash \alpha$ holds in classical semantics if and only if it holds under team semantics.
\end{proposition}
\begin{proof}
We prove only the $\FO$ case.
The other cases are proven similarly.
For "$\Rightarrow$", let $\Gamma \vDash \alpha$ in classical semantics.
Suppose that an arbitrary valuation $(\calA, T)$ satisfies $\Gamma$.
Then $(\calA, s)\vDash \Gamma$ for all $s\in T$.
By assumption, $(\calA, s) \vDash \alpha$ in for all $s\in T$.
Consequently, $(\calA, T)\vDash \alpha$.

Next, we prove "$\Leftarrow$" by contraposition.
If $\Gamma \nvDash \alpha$ in classical semantics, then there is a valuation $(\calA,s)$ such that $(\calA,s)\vDash \Gamma$ and $(\calA,s)\nvDash \alpha$.
But then also $(\calA,\{s\}) \vDash \Gamma$ and $(\calA,\{s\}) \nvDash \alpha$.
Consequently, $\Gamma \nvDash \alpha$ under team semantics.
\end{proof}

\begin{corollary}\label{cor:base-completeness-team}
Under team semantics, the systems $\sfH^0$, $\sfH^1$, $\sfH^\Box$ and $\sfH$ are sound and complete for $\PL$, $\QPL$, $\ML$ and $\FO$, respectively.
\end{corollary}

Accordingly, we will not distinguish between the two entailment relations for the rest of the paper. Other immediate consequences of flatness are the following:

\begin{proposition}\label{prop:downward-closure}
The logics $\QPL$, $\ML$ and $\FO$ are \emph{downward closed}:

If $\alpha \in \QPL$ and $T \vDash \alpha$, then $T' \vDash \alpha$ for all $T' \subseteq T$.

If $\alpha \in \ML$ and $(K,T) \vDash \alpha$, then $(K,T') \vDash \alpha$ for all $T' \subseteq T$.

If $\alpha \in \FO$ and $(\calA,T) \vDash \alpha$, then $(\calA,T') \vDash \alpha$ for all $T' \subseteq T$.
\end{proposition}

\begin{proposition}\label{prop:union-closure}
The logics $\QPL$, $\ML$ and $\FO$ are \emph{union closed}:

Let $\calT$ be a set of teams.

If $\alpha \in \QPL$ and $T \vDash \alpha$ for all $T \in \calT$, then $\bigcup\calT \vDash \alpha$.

If $\alpha \in \ML$ and $(K,T) \vDash \alpha$ for all $T \in \calT$, then $(K,\bigcup\calT) \vDash \alpha$.

If $\alpha \in \FO$ and $(\calA,T) \vDash \alpha$ for all $T \in \calT$, then $(\calA,\bigcup\calT) \vDash \alpha$.
\end{proposition}

\begin{proposition}\label{prop:flatness-tens}
Let $\calF \in \{\QPL, \ML, \FO\}$ and $\alpha, \beta \in \calF$.
Then $\alpha \lor \beta \equiv \alpha \tens \beta$.
\end{proposition}

\section{Axioms of the Boolean closure}\label{sec:boolean}

We begin the development of a proof system for team logic with the operators $\timp$ and $\negg$, \ie, for the Boolean closure of classical logic under team semantics.

\begin{definition}
If $\calF$ is a logic, then $\calB(\calF)$ is the \emph{Boolean closure} of $\calF$, defined by the grammar
$\varphi \ddfn \alpha \mid \negg\varphi \mid \varphi \timp \varphi$,
where $\alpha \in \calF$, and with the semantics
\begin{alignat*}{2}
    &A \vDash \alpha&& \Leftrightarrow A \vDash_{\calF} \alpha\;\text{ for $\alpha \in \calF$,}\\
    &A \vDash \negg\varphi&&\Leftrightarrow A \nvDash \varphi\text{,}\\
    &A \vDash \varphi \timp \psi&&\Leftrightarrow A \vDash \psi \text{ or }A\nvDash \varphi\text{.}
\end{alignat*}
\end{definition}

\begin{figure}
\centering
\begin{tabular}{ll}\toprule
\Deriv{(L1)}&$\varphi \timp (\psi \timp \varphi)$\\
\Deriv{(L2)}&$(\varphi \timp (\psi \timp \vartheta)) \timp (\varphi \timp \psi) \timp (\varphi \timp \vartheta)$\\
\Deriv{(L3)}&$(\negg\varphi \timp\negg\psi)\timp(\psi \timp\varphi)$\\
\midrule
\Deriv{(L4)}&$(\alpha\imp\beta)\timp (\alpha\timp\beta) \qquad\qquad \alpha,\beta \in \calF$\\
\midrule
\Deriv{(E$\timp$)}&\begin{bprooftree}
\AxiomC{$\varphi$}
\AxiomC{$\varphi \timp \psi$}
\BinaryInfC{$\psi$}
\end{bprooftree}\\
\bottomrule
\end{tabular}
\caption{The system $\sfL$\label{fig:boolean-axioms}}
\end{figure}

In this section, we develop a "template" proof system for $\calB(\cdot)$, viz.\ the system $\sfL$ \emph{(lifted propositional axioms)} depicted in Figure~\ref{fig:boolean-axioms}.
The axioms of $\sfL$ mostly resemble their classical counterparts in $\sfH^0$.
One exception is \Deriv{(L4)}, which relates the propositional and the team-semantical implication.
We demonstrate that a complete proof system for a logic $\calF$ can be augmented with $\sfL$ to obtain a complete system for $\calB(\calF)$.
This procedure, however, can only be a first step to full axiomatizations of $\QPTL$, $\MTL$ and $\FO(\negg)$, since clearly $\calB(\QPL) \subsetneq \QPTL$, $\calB(\ML) \subsetneq \MTL$, and $\calB(\FO) \subsetneq \FO(\negg)$.

While the systems $\sfH^0$, $\sfH^1$, $\sfH^\Box$ and $\sfH$ can only be applied to classical formulas $\alpha,\beta,\gamma,\ldots$, the axioms and rules in $\sfL$ are permitted for general team-logical formulas $\varphi,\psi,\vartheta,\ldots$.

\begin{figure}[b]
\centering
\fitchprf{
\pline[A ]{\xi \timp \alpha} \\
\pline[B ]{\xi \timp (\alpha \imp \beta)}
}
{
\pline[1 ]{(\alpha \imp \beta) \timp (\alpha \timp \beta)}[\Deriv{(L4)}]\\
\pline[2 ]{\xi \timp ((\alpha \imp \beta) \timp (\alpha \timp \beta))}[\Deriv{(L1)}]\\
\pline[3 ]{\xi \timp (\alpha \timp \beta)}[\Deriv{(L2)}, B, 2]\\
\pline[\slider]{\xi \timp \beta}[\Deriv{(L2)}, A, 3]
}
\caption{Example derivation in $\sfL$\label{fig:example-deriv}}
\end{figure}

Derivations are written down as in the example below (Figure~\ref{fig:example-deriv}).
The premises have the special line numbers A, B, \textellipsis, whereas \slider marks the conclusion.
The right column of each proof shows the applied rules with the line numbers of the arguments.
The format is
\[
(\mathsf{rule}_1), \ldots, (\mathsf{rule}_n), \; \mathsf{argument}_1,\ldots,\mathsf{argument}_n
\]
where the line numbers of the arguments are omitted if only the preceding lines are used.
The "rule" $\sfL$ means that several axioms and rules of the system $\sfL$ are used without stating the exact steps ($\sfL$ proves all Boolean tautologies, as shown later in Theorem~\ref{thm:completeness-bool}).
For the sake of readability, we omit most applications of modus ponens \Deriv{(E$\timp$)} in $\sfL$.

\subsection{The deduction theorem for team logics}

The first step to prove $\sfL$ complete is to establish a variant of the deduction theorem, \ie, that $\Phi \vdash (\varphi \timp \psi)$ if and only if $\Phi \cup \{\varphi\} \vdash \psi$.
The crucial point is that the deduction theorem implies \emph{Lindenbaum's lemma}, which permits the construction of maximal consistent sets required for the completeness proof.
We begin by identifying a family of proof systems that guarantee a deduction theorem, based on the ideas of Hakli and Negri \cite{deduction_fail}.

\begin{definition}
Let $\Omega = (\Xi, \Psi, I)$ be a proof system. A rule $(\{\xi_1,\ldots,\xi_k\},\psi) \in I$ \emph{has conditionalization} if $\Set{(\varphi \timp \xi_i) | 1 \leq i \leq k} \vdash (\varphi \timp \psi)$ for all $\varphi \in \Xi$.
\end{definition}

In other words, the rule can also be applied under arbitrary premises $\varphi$.
We say that a system $\Omega$ has conditionalization if all inference rules have it.

\begin{lemma}\label{lem:deduction-theorem}
If $\Omega$ is a proof system and $\Omega\sfL$ has conditionalization, then the deduction theorem holds for $\Omega\sfL$, \ie, $\Phi \vdash_{\Omega\sfL} (\varphi \timp \psi)$ if and only if $\Phi \cup \{\varphi\} \vdash_{\Omega\sfL} \psi$.
\end{lemma}

\begin{proof}
"$\Rightarrow$" is clear, as $\sfL$ has \Deriv{(E$\timp$)}.
We prove "$\Leftarrow$" by induction on the length $n$ of a shortest proof of $\psi$.
If $\psi \in \Phi$, $\psi = \varphi$, or if $\psi$ is an axiom, then $\Phi \vdash (\varphi \timp \psi)$ by \Deriv{(L1)} and \Deriv{(E$\timp$)}.
For $n = 1$ these are the only cases.
If $n > 1$, then $\psi$ could also be obtained by application of some inference rule $(\{\xi_1, \ldots, \xi_k\},\psi)$.
But then $\xi_1, \ldots, \xi_k$ each have a proof of length $\leq n-1$ from $\Phi \cup \{\varphi\}$, so by induction hypothesis, $\Phi \vdash \varphi \timp \xi_i$ for $1 \leq i \leq k$.
As $\Omega\sfL$ has conditionalization by assumption, $\Phi \vdash \varphi \timp \psi$.
\end{proof}

\begin{definition}
Let $\Omega$ and $\Omega'$ be proof systems.
$\Omega'$ is a \emph{conservative extension of} $\Omega$, in symbols $\Omega' \succeq \Omega$, if $\Omega'$ contains all judgments, rules, and axioms of $\Omega$, but all rules of $\Omega'$ that are not in $\Omega$ produce only theorems.
\end{definition}

For instance, $\sfH \succeq \sfH^0$ and $\sfH^\Box \succeq \sfH^0$, as the only rule possibly producing non-theorems, \Deriv{(E$\imp$)}, is already present in $\sfH^0$.
Note that $\succeq$ is a partial ordering.

\begin{theorem}\label{thm:ext-deduction}
Every conservative extension $\Omega$ of $\sfL$ or $\sfH^0\sfL$ has the deduction theorem.
\end{theorem}
\begin{proof}
By Lemma~\ref{lem:deduction-theorem}, it suffices to show that all inference rules of $\Omega$ have conditionalization.
There are three cases to distinguish:
the rule \Deriv{(E$\imp$)} in $\sfH^0$ (if $\Omega\succeq \sfH^0\sfL$), the rule \Deriv{(E$\timp$)} in $\sfL$, and an arbitrary rule that produces only theorems.
The latter case is clear, as for every theorem $\psi$, by \Deriv{(L1)} and \Deriv{(E$\timp$)} we can prove $\xi \timp \psi$ for arbitrary $\xi$.

Next, consider the rule \Deriv{(E$\timp$)} $= (\{\varphi, \varphi \timp \psi\}, \psi)$.
To demonstrate that it has conditionalization, we assume the premises $\xi \timp (\varphi \timp \psi)$ and $\xi \timp \varphi$, where $\xi$ is arbitrary.
By \Deriv{(L2)} and \Deriv{(E$\timp$)}, it is straightforward to derive $\xi \timp \psi$.
Finally, for \Deriv{(E$\imp$)}, conditionalization is proven in Figure~\ref{fig:example-deriv}.
\end{proof}

\subsection{Completeness of the Boolean closure}

As standard completeness proofs often use Lindenbaum's lemma to construct a \emph{maximal consistent set}, let us introduce the notion of consistency.

\begin{definition}
Let $\Omega = (\Xi, \Psi, I)$ be a proof system.
A set $\Phi$ is $\Omega$-\emph{inconsistent} if $\Phi \vdash \Xi$. $\Phi$ is $\Omega$-\emph{consistent} if it is not $\Omega$-inconsistent.
Moreover, $\Phi \subseteq \Xi$ is \emph{maximal $\Omega$-consistent} if it is $\Omega$-consistent and contains $\xi$ or $\negg\xi$ for every $\xi \in \Xi$.
\end{definition}

As before, we usually omit $\Omega$.
The following lemmas are standard, with their proofs also found in the appendix.

\begin{lemma}\label{lem:only-one-consistent}
Let $\Omega \succeq \sfL$ and let $\Phi$ be consistent.
Then $\Phi \nvdash \varphi$ implies that $\Phi \cup \{ \negg \varphi \}$ is consistent, and $\Phi \vdash \varphi$ implies that $\Phi \cup \{\varphi\}$ is consistent.
\end{lemma}

\begin{lemma}[Lindenbaum's Lemma]\label{lem:lindenbaum}
	If $\Omega \succeq \sfL$, then every $\Omega$-consistent set has a maximal $\Omega$-consistent superset.
\end{lemma}

The next step in standard completeness proofs is to construct an explicit model for any maximal consistent set.
The application of Lindenbaum's lemma is usually as follows:
if $\Phi$ is maximal consistent, then there is a model $M$ fulfilling all its atomic formulas.
By the maximality of $\Phi$, then also all Boolean combinations of atomic formulas, if they are in $\Phi$, are true in $M$.
The latter "inductive step" works as well for $\timp, \negg$.
However, more work is required for the induction basis---to construct the model $M$ that satisfies the atomic formulas.
The reason is that in our context an "atom" is, in fact, any formula of the underlying classical logic, such as $\QPL$, $\ML$ or $\FO$.
Due to this complication, we require the next property.

\begin{definition}
A logic $\calF$ admits \emph{counter-model merging} if, for arbitrary sets $\Gamma,\Delta \subseteq \calF$ the following holds:
Suppose that for every $\delta \in \Delta$ there is a model $M$ of $\Gamma$ such that $M \nvDash \delta$.
Then $\Gamma$ also has a model $M$ that falsifies every formula in $\Delta$.
\end{definition}

\begin{proposition}\label{prop:counter-models-ml-pl}
$\PL$, $\QPL$ and $\ML$, under team semantics, admit counter-model merging.
\end{proposition}
\begin{proof}
We prove the $\ML$ case.
Let $\Gamma, \Delta \subseteq \ML$, and for each $\delta \in \Delta$, let $(K_\delta, T_\delta)$ be a model of $\Gamma \cup \{\negg \delta\}$.
\Wloss the structures $K_\delta$ are pairwise disjoint; let $\calK$ denote their union.
The truth of $\ML$-formulas is invariant under disjoint union of structures \cite{Goranko2007249}; hence $(K_\delta,w) \vDash \alpha$ if and only if $(\calK,w) \vDash \alpha$ for all formulas $\alpha \in \ML$ and $w \in T_\delta$.
From the flatness property of $\ML$ it follows $(\calK,T_\delta) \vDash \Gamma$ and $(\calK,T_\delta) \nvDash \delta$.
Finally, consider the team $\calT \dfn \bigcup_{\delta \in \Delta} T_\delta$.
As $\ML$ is union closed (Proposition~\ref{prop:union-closure}), $(\calK,\calT)$ satisfies $\Gamma$, and as it is downwards closed (Proposition~\ref{prop:downward-closure}), $(\calK,\calT)$ falsifies each $\delta \in \Delta$.
\end{proof}

Let $\negg \calF$ denote the fragment of $\calB(\calF)$ that is restricted to the formulas in $\Set{ \negg \alpha | \alpha \in \calF }$.
Likewise, $\calF \cup \negg \calF$ denotes the fragment restricted to formulas in $\Set{\alpha,\negg\alpha | \alpha \in \calF}$.
Intuitively, $\calF \cup \negg \calF$ is the set of "literals."

\begin{definition}
A proof system $\Omega$ is \emph{refutation complete} for $\calL$ if for every unsatisfiable $\Phi \subseteq \calL$ there is a formula $\varphi$ such that $\Phi \vdash \{\varphi, \negg\varphi\}$.
\end{definition}

\begin{lemma}\label{lem:model-existence1}
If $\calF$ admits counter-model merging and $\Omega$ is complete for $\calF$, then $\Omega$ is refutation complete for $\calF \cup \negg \calF$.
\end{lemma}
\begin{proof}
Let $\Phi \subseteq \calF \cup \negg \calF$ be unsatisfiable.
Let $\Gamma \dfn \Phi \cap \calF$ and $\Delta \dfn \Phi \cap \negg\calF$.
There exists $\negg \delta \in \Delta$ such that $\Gamma \cup \{\negg\delta\}$ is unsatisfiable, since otherwise $\Phi$ would be satisfiable by counter-model merging.
But then $\Gamma \vDash \delta$, which implies $\Gamma \vdash \delta$ by completeness of $\Omega$ for $\calF$.
Consequently, $\Phi \vdash \{\delta, \negg\delta\}$.
\end{proof}

Note that $\FO$ does not admit counter-model merging.
In Section~\ref{sec:fo}, we give a counter-example.
However, it is still possible to construct a proof system that is refutation complete for $\FO \cup \negg \FO$ by introducing an additional axiom.

\medskip

Let us emphasize again the difference to classical logics such as $\PL$.
The atoms of $\PL$ are the set $\PS$; the analogously defined fragment $\PS \cup \neg \PS$ of literals is immediately refutation complete, as a set $\Gamma \subseteq \Set{ p, \neg p | p \in \PS }$ is inconsistent only if contains $p, \neg p$ for some proposition $p$.
Since team logic constitutes another "layer" on top of classical logic, the issue of refutation completeness becomes non-trivial.

\medskip

With the atoms handled correctly by the proof system (by refutation completeness of $\calF \cup \negg \calF$), the induction step goes through as for classical logic:

\begin{theorem}[Completeness of $\sfL$]\label{thm:completeness-of-L}
If $\Omega \succeq \sfL$ is refutation complete for $\calF \cup \negg \calF$, then it is complete for $\calB(\calF)$.
\end{theorem}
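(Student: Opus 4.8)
The plan is to stack the previously established lemmas and then convert refutation completeness into ordinary completeness using the contraposition built into the strong negation $\negg$. First I would observe that $\Omega\sfL \succeq \sfL$ holds trivially, since $\Omega\sfL$ contains all of $\sfL$ by the definition of the component-wise union. Combined with the assumed weakening of $\Omega\sfL$, \Cref{thm:deduction-theorem} then grants the deduction theorem for $\Omega\sfL$. Feeding this together with the hypothesised refutation completeness for $\calF \cup \negg\calF$ into \Cref{thm:model-existence2} upgrades $\Omega\sfL$ to being refutation complete for the whole Boolean closure $\calB(\calF)$.

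Next I would reduce completeness to this refutation completeness. Suppose $\Phi \subseteq \calB(\calF)$ and $\varphi \in \calB(\calF)$ with $\Phi \vDash \varphi$. Since $\negg$ is truth-functional with $A \vDash \negg\varphi \Leftrightarrow A \nvDash \varphi$, the entailment $\Phi \vDash \varphi$ is equivalent to saying that no valuation satisfies $\Phi \cup \{\negg\varphi\}$; that is, $\Phi \cup \{\negg\varphi\}$ is unsatisfiable. Refutation completeness for $\calB(\calF)$ then produces some formula proved together with its negation from $\Phi \cup \{\negg\varphi\}$, so by \Cref{thm:derive-contradiction} the set $\Phi \cup \{\negg\varphi\}$ is inconsistent; in particular $\Phi \cup \{\negg\varphi\} \vdash \falsum$.

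Finally I would extract $\varphi$ itself. The deduction theorem turns $\Phi \cup \{\negg\varphi\}\vdash\falsum$ into $\Phi \vdash (\negg\varphi \timp \falsum)$. Unfolding $\falsum \dfn \negg(\psi \timp \psi)$ and instantiating \Deriv{(L3)} so that its antecedent reads $\negg\varphi \timp \negg(\psi\timp\psi)$ gives the theorem $(\negg\varphi \timp \falsum) \timp ((\psi\timp\psi)\timp\varphi)$, so \Deriv{(E$\timp$)} yields $\Phi \vdash (\psi\timp\psi)\timp\varphi$. A short \Deriv{(L1)}/\Deriv{(L2)} textbook derivation proves $(\psi\timp\psi)$ outright, and one more \Deriv{(E$\timp$)} then delivers $\Phi \vdash \varphi$.

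I do not expect a serious obstacle, since the conceptual work already lives in the deduction theorem and the two model-existence lemmas; this theorem is the capstone that merely invokes them under the right hypotheses. The one place needing mild care is the final reduction: reading the equivalence \enquote{$\Phi \vDash \varphi$ iff $\Phi \cup \{\negg\varphi\}$ is unsatisfiable} correctly off the semantics of $\negg$, and then routing $\negg\varphi \timp \falsum$ through \Deriv{(L3)} so as to recover $\varphi$ and not its negation.
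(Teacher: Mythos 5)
Your proof is correct and rests on the same two pillars as the paper's: the deduction theorem obtained from weakening via \Cref{thm:deduction-theorem}, and the upgrade to refutation completeness for $\calB(\calF)$ via \Cref{thm:model-existence2}. The only difference is the direction of the final reduction. The paper argues by contraposition: from $\Phi \nvdash \varphi$ it gets consistency of $\Phi \cup \{\negg\varphi\}$ using \Cref{thm:only-one-consistent}, and then reads refutation completeness backwards (consistent implies satisfiable) to conclude $\Phi \nvDash \varphi$. You argue forwards: $\Phi \vDash \varphi$ makes $\Phi \cup \{\negg\varphi\}$ unsatisfiable, refutation completeness and \Cref{thm:derive-contradiction} give $\Phi \cup \{\negg\varphi\} \vdash \falsum$, and the deduction theorem plus \Deriv{(L3)} extract $\varphi$ syntactically. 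Your route avoids \Cref{thm:only-one-consistent} entirely at the cost of a short explicit derivation at the end (which is sound: instantiating \Deriv{(L3)} with $\negg\varphi \timp \negg(\psi\timp\psi)$ as antecedent does yield $(\psi\timp\psi)\timp\varphi$, and $\psi\timp\psi$ is a theorem of \Deriv{(L1)}/\Deriv{(L2)}). Both versions are equally valid; the paper's is marginally shorter because satisfiability of the consistent set does all the work semantically.
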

\begin{proof}
Let $\Phi' \subseteq \calB(\calF)$ and $\varphi \in \calB(\calF)$.
For completeness, we have to show that $\Phi' \nvdash \varphi$ implies $\Phi' \nvDash \varphi$, or in other words, that $\Phi \dfn \Phi' \cup \{\negg\varphi\}$ has a model.
First note that, if $\Phi' \nvdash \varphi$, then $\Phi'$ is consistent, and by Lemma~\ref{lem:only-one-consistent}, $\Phi$ is consistent, too.

Any consistent set $\Phi$ has a maximal consistent superset $\Phi^*$ by Lemma~\ref{lem:lindenbaum}.
Clearly, $\Phi^* \cap (\calF \cup \negg \calF)$ is then consistent as well.
By refutation completeness of $\Omega$ for $\calF \cup \negg \calF$, it has a model $A$.
We show that $\psi \in \Phi^* \Leftrightarrow A\vDash \psi$ for all $\psi \in \calB(\calF)$.
In particular, $\Phi$ is then satisfiable, which proves the theorem.
That $\psi \in \Phi^* \Leftrightarrow A\vDash \psi$ holds for $\psi \notin (\calF \cup \negg \calF)$ can be proven by induction on the length of $\psi$ (see the appendix).
\end{proof}

Conversely, we show that $\sfL$ also preserves the soundness of existing systems:

\begin{lemma}
Suppose that $\calF$ does not use $\negg$ or $\timp$.
If $\Omega$ and \Deriv{(E$\imp$)} are sound for $\calF$, then $\Omega\sfL$ is sound for $\calB(\calF)$.
\end{lemma}
\begin{proof}
We show that all axioms and inference rules of $\Omega\sfL$ are sound.
Then the soundness of $\Omega\sfL$ is easily shown by induction on the length of proofs.
The axioms and rules of $\Omega$ apply only to $\calF$, and for this reason are sound by assumption.
As \Deriv{(E$\imp$)} is sound, $\{\alpha, \alpha\imp\beta\} \vDash \beta$ for all $\alpha,\beta \in \calF$.
Equivalently, $\{\alpha\imp\beta\}\vDash (\alpha \timp\beta)$, hence \Deriv{($\sfL4$)} is sound.

The other axioms and the rules of $\sfL$ are sound by definition of $\negg$ and $\timp$.
\end{proof}

\begin{corollary}\label{cor:completeness-bpl-bml}
	$\sfH^0\sfL$ is sound and complete for $\calB(\PL)$.
	$\sfH^1\sfL$ is sound and complete for $\calB(\QPL)$.
	$\sfH^\Box\sfL$ is sound and complete for $\calB(\ML)$.
\end{corollary}
\begin{proof}
The soundness follows from the previous lemma and Corollary~\ref{cor:base-completeness-team}.
The completeness follows from Proposition~\ref{prop:counter-models-ml-pl}, Lemma \ref{lem:model-existence1} and Theorem~\ref{thm:completeness-of-L}.
\end{proof}

Next, we show that all Boolean tautologies over $\negg,\timp$ are provable in $\sfL$.
As a consequence, we can derive distributive laws, De Morgan's laws, double negation elimination and so on.

A logic $\calF$ is called \emph{free} if the set $\negg\Phi \cup (\calF\setminus \Phi)$ of $\calB(\calF)$-formulas is satisfiable for all $\Phi \subseteq \calF$.
An example of a free logic is the fragment $\PS$ of $\calB(\PL)$ that contains only propositions and no connectives.

\begin{theorem}\label{thm:completeness-of-l-2}
Let $\calF$ be free.
Then $\sfL$ is complete for $\calB(\calF)$.
\end{theorem}
\begin{proof}
We apply Theorem~\ref{thm:completeness-of-L}, since $\sfL$ is trivially refutation complete for $\calF \cup \negg \calF$: if a set $\Phi \subseteq \calF \cup\negg\calF$ is unsatisfiable, then $\alpha,\negg\alpha \in \Phi$ for some $\alpha \in \calF$, as $\calF$ is free.
\end{proof}

Let $\varphi \in \calB(\PS)$.
A formula $\varphi'$ is a \emph{substitution instance} of $\varphi$ if there are $\psi_1,\ldots,\psi_n$ such that $\varphi' = \varphi[p_1/\psi_1]\cdots[p_n/\psi_n]$ for propositions $p_1,\ldots,p_n$.

\begin{theorem}\label{thm:completeness-bool}
If $\vDash_{\calB(\PS)} \varphi$, then $\vdash_\sfL \varphi'$ for any substitution instance $\varphi'$ of $\varphi$.
\end{theorem}

\begin{example}
The distributive law $a \oland (b \ovee c) \tequiv (a \oland b) \ovee (a \oland c)$ is a tautology in $\calB(\PS)$.
Therefore it gives rise to the instances $\varphi \oland (\psi \ovee \vartheta) \tequiv (\varphi \oland \psi) \ovee (\varphi \oland \psi)$ being provable in $\sfL$ for all $\varphi,\psi,\vartheta$.
\end{example}

\begin{proof}[Proof of Theorem~\ref{thm:completeness-bool}]
Let $\varphi \in \calB(\PS)$ such that $\vDash_{\calB(\PS)} \varphi$.
Suppose that $\varphi'$ is a substitution instance of $\varphi$, \ie, $\varphi' = \varphi[p_1/\psi_1]\cdots[p_n/\psi_n]$.
For arbitrary formulas $\vartheta$, let $\vartheta' \dfn \vartheta[p_1/\psi_1]\cdots[p_n/\psi_n]$ denote the same substitution applied to $\vartheta$.

As $\PS$ is free, $\vdash_\sfL \varphi$ by Theorem~\ref{thm:completeness-of-l-2}.
We proceed with showing $\vdash_\sfL \varphi'$ by induction on the length of a shortest proof of $\varphi$ in $\sfL$.
If $\varphi$ is an instance of \Deriv{($\sfL1$)}, \Deriv{($\sfL2$)}, or \Deriv{($\sfL3$)}, then the same in the case for $\varphi'$.
(Being a $\calB(\PS)$ formula, $\varphi$ cannot be an instance of \Deriv{(L4)}.)

If $\varphi$ was derived from $\psi \timp \varphi$ and $\psi$ via \Deriv{(E$\timp$)}, then $\vdash_\sfL (\psi\timp\varphi)'$ and $\vdash_\sfL \psi'$ by induction hypothesis.
As $(\psi\timp\varphi)' = \psi' \timp\varphi'$, we can apply \Deriv{(E$\timp$)} to obtain $\varphi'$.
\end{proof}

\subsection{A remark on (para-)consistency in team logics}

Due to the two-layered nature of team logics, proof-theoretical subtleties can arise.
We use the term \emph{inconsistent} to describe that a set $\Phi \subseteq \calB(\calF)$ can derive all $\calB(\calF)$ formulas, including $\falsum$.
The ability to derive all formulas in a given system was coined \emph{absolute inconsistency} by Hilbert.

Mossakowski and Schröder \cite{mossakowski2015inconsistency} discussed the difference between absolute inconsistency and so-called \emph{$\bot$-inconsistency}, meaning that $\bot$ can be derived.
Furthermore, they call a set \emph{Aristotle inconsistent} for a given negation symbol $\neg$ if $\alpha$ and $\neg \alpha$ can be derived.
$\neg$ is called \emph{proof-theoretic negation} if $\neg\alpha$ is derivable from $\alpha \vdash \bot$.
Likewise, $\bot$ is called \emph{proof-theoretic falsum} if any formula can be proven from it.
They have to be distinguished from a \emph{semantic} falsum and negation.
Here, $\negg$ and $\falsum$ are both a semantic and proof-theoretic falsum resp.\ negation.

\smallskip

In classical logic using $\bot$ and $\neg$, all above notions of inconsistency coincide with unsatisfiability.
Under team semantics, all above notions of inconsistency still coincide; however, every set of formulas is true in the empty team.
Consequently, classical logics with team semantics have falsum and negation in the proof-theoretic sense, but not in the semantical sense.

A possible workaround is to exclude the empty team from the class of valuations.
If $\calF_+$ is the restriction of the logic $\calF$ (under team semantics) to valuations with non-empty team, then clearly $\Gamma \vDash_\calF \alpha$ implies $\Gamma \vDash_{\calF_+} \alpha$ for all $\Gamma \subseteq \calF$ and $\alpha \in \calF$.
Since valuations with empty team trivially satisfy $\alpha$, the converse is also true.
As a consequence, the consistent sets are then again exactly the satisfiable sets:

\begin{proposition}
Let $\calF \in \{\QPL,\ML,\FO\}$ and $\Gamma \subseteq \calF$.
The following are equivalent:
\begin{itemize}
\item $\Gamma \vdash \calF$
\item $\Gamma \vdash \bot$
\item $\Gamma \vdash \alpha,\neg\alpha$ for some $\alpha \in \calF$
\item $\Gamma$ is unsatisfiable is classical semantics.
\item $\Gamma$ has no team-semantical model with a non-empty team.
\end{itemize}
\end{proposition}

However, while $\bot$ is a semantical falsum when excluding the empty team, clearly $\neg$ is still no semantical negation.
In particular, the law of excluded middle fails, \ie, there are formulas $\alpha$ and valuations satisfying neither $\alpha$ nor $\neg \alpha$ under team semantics.
As an example, consider the propositional team $T = \{ p \mapsto 0, p \mapsto 1 \}$ and the formulas $p$ and $\neg p$.

\begin{corollary}
A proof system is sound and complete for $\calF  \in \{\QPL,\ML,\FO\}$ if and only if it is sound and complete for $\calF_+$.
\end{corollary}

The above corollary is explained by the fact that there simply is no formula of $\QPL, \ML$ or $\FO$ expressing non-emptiness of teams (cf.\ Proposition~\ref{prop:downward-closure}).

With the Boolean closure $\calB(\calF)$ however, the picture changes:
The operators $\negg$ and $\falsum$ assume the role of semantic and proof-theoretic negation and falsum.
\begin{proposition}
Let $\calF \in \{\QPL,\ML \}$ and $\Phi \subseteq \calB(\calF)$.
The following are equivalent:
\begin{itemize}
\item $\Phi \vdash \calB(\calF)$
\item $\Phi \vdash \falsum$
\item $\Phi \vdash \varphi,\negg\varphi$ for some $\varphi \in \calB(\calF)$
\item $\Phi$ is unsatisfiable under team semantics.
\end{itemize}
\end{proposition}

Here, the empty team is again permitted as a valuation.
The connectives $\neg$ and $\bot$ behave interestingly:
Despite clearly being Aristotle inconsistent, $\{\alpha,\neg\alpha\}$ is not 
absolutely inconsistent anymore, \ie, $\{\alpha,\neg\alpha\} \vdash \calF$, but $\{\alpha,\neg\alpha\} \nvdash \calB(\calF)$.
Mossakowski and Schröder call such an operator $\neg$  \emph{paraconsistent negation}.
Similarly, $\bot \nvdash \calB(\calF)$.

The term "paraconsistent" is a little inappropriate for team semantics, as any model of $\{\alpha,\neg\alpha\}$ or $\bot$ can only have an empty team and thus is not very meaningful.
In fact, removing the empty team as a valuation establishes $\bot \vDash \calB(\calF)$ and avoids paraconsistency,
with the formula $\NE \dfn \negg \bot$ (which expresses non-emptiness of teams) added as an axiom.
On the other hand, the empty team cannot be easily excluded from, say, $\MTL$, unless the successor relation is total and provides all teams in all Kripke structures with a non-empty image.
Likewise, the semantics of the splitting operator $\limp$ would have to be changed in order to avoid empty teams.
This also implies unwanted side-effects such as $\bot \tens \top$ being equivalent to $\falsum \tens \top$, and hence being contradictory instead of valid, thereby violating Proposition~\ref{prop:flatness-tens}.

As a consequence, for the rest of this paper, we permit the empty team and tolerate proof systems that are paraconsistent in the above sense.

\section{Axioms of splitting}

\allowdisplaybreaks

\label{sec:splitting}

\begin{figure}[b]
	\centering
\scalebox{.95}{
	\begin{tabular}{lll}
		\toprule
		\Deriv{(F$\tens$)}&$(\alpha\tens\beta) \tequiv (\alpha\lor\beta)$&Flatness of $\tens$\\
		\Deriv{(F$\limp$)}&$\alpha \timp (\varphi \limp \alpha)$&Downwards closure\\
		\Deriv{(Lax)}&$\varphi \timp (\varphi \limp \psi) \timp (\vartheta \limp \psi)$&Lax semantics\\
		\Deriv{(Ex$\limp$)}&$(\varphi \limp \psi \limp \vartheta) \timp (\psi \limp \varphi \limp \vartheta)$&Exchange of hypotheses\\
		\Deriv{(C$\limp$)}&$(\varphi \limp \negg\psi) \timp (\psi \limp\negg \varphi)$&Contraposition\\
		\Deriv{(Dis$\limp$)}&$(\varphi \limp (\psi \timp \vartheta)) \timp (\varphi \limp \psi) \timp (\varphi \limp \vartheta)$&Distribution axiom\\
		\midrule
		\Deriv{(Nec$\limp$)}&\begin{bprooftree}
\AxiomC{$\varphi$}
\RightLabel{\small{}($\varphi$ theorem)}
\UnaryInfC{$\psi \limp \varphi$}
\end{bprooftree}
&Necessitation\\
		\bottomrule
	\end{tabular}}
	\caption{The system $\sfS$}\label{fig:splitting}
\end{figure}

In the previous section, we added team-semantical Boolean connectives to classical logics.
Hodges \cite{Hodges1997} and Väänänen \cite{vaananen_dependence_2007} introduced the \emph{splitting disjunction} $\tens$, also called \emph{splitjunction} or \emph{tensor}.
Formally, $T \vDash \varphi \tens \psi$ if $T$ can be divided into (possibly overlapping) subteams $S, U$ such that $S \vDash \varphi$ and $U \vDash \psi$.
Intuitively, $\tens$ is a "member-wise disjunction": each element in $T$ chooses $\varphi$ or $\psi$ or both (cf.\ Proposition~\ref{prop:flatness-tens}).
Galliani~\cite{galliani_inclusion_2012} referred to this semantics of $\tens$ as \emph{lax semantics}.
By contrast, in the so-called \emph{strict} semantics the division must form a partition; hence the strict $\tens$ rather is a member-wise "exclusive or".

\smallskip

This section is devoted to axiomatizing $\limp$ and hence $\tens$, as $\varphi \tens \psi \dfn \negg(\varphi \limp \negg \psi)$.
\label{p:yang}
In our approach, we interpret $\limp$ as countably many unary modalities of the form "$\psi \limp$" instead of a disjunction-like operator.
This permits a natural axiomatization by the system $\sfS$ (see Figure~\ref{fig:splitting}).

\medskip

With a model-theoretic argument, Yang \cite[Theorem 4.6.4.]{yang_extensions_2014} showed that every $\PTL$ formula can be brought into a normal form over Boolean conjunctions ($\oland$), disjunctions ($\ovee$), splitting ($\tens$), and non-emptiness atoms ($\NE \dfn \negg \bot$).
She argued that the axiomatization of this fragment is easier than for full $\PTL$, as it avoids arbitrary negation.
On the other hand, this fragment demands a rather complicated set of rules for many special cases, in particular to handle $\NE$.

\bigskip

\begin{theorem}\label{thm:ptl-soundness}
The proof system $\sfH^0\sfL\sfS$ is sound for $\PTL$.
\end{theorem}
\begin{proof}
The proof is straightforward and can be found in the appendix.
\end{proof}

The idea for proving completeness is to reduce the problem to the completeness for better-behaved fragment.
More precisely, every $\PTL$-formula will be broken down into a $\calB(\PL)$-formula (cf.\ Figure~\ref{fig:overview}).
This is formally stated in the next theorem, and the remaining parts of this section will culminate in a proof.

\begin{theorem}\label{thm:ptl-to-bpl-equiv}
Let $\varphi \in \PTL$.
Then there is $\psi \in \calB(\PL)$ such that $\varphi \eqpr_{\sfH^0\sfL\sfS} \psi$.
\end{theorem}

The following lemma shows that such a translation in principle is sufficient for showing completeness, provided the system is also sound.

\begin{lemma}\label{lem:translate-completeness}
Let $\calL,\calL'$ be logics such that $\calL' \subseteq \calL$.
Let $\Omega$ be a proof system that is sound for $\calL$ and complete for $\calL'$, and such that every $\calL$-formula is provably equivalent to an $\calL'$-formula in $\Omega$.
Then $\Omega$ is also complete for $\calL$.
\end{lemma}
\begin{proof}
Assume $\Phi \subseteq \calL$ and $\varphi \in \calL$.
For completeness we have to show that $\Phi \vDash \varphi$ implies $\Phi \vdash \varphi$.
By assumption, every $\calL$-formula is provably equivalent to an $\calL'$-formula, hence $\Phi \eqpr \Phi'$ for some set $\Phi' \subseteq \calL'$.
Likewise, $\varphi \eqpr \varphi'$ for some $\varphi' \in \calL'$.
Since these equivalences are proven between (sets of) $\calL$-formulas, soundness for $\calL$ implies $\Phi \equiv \Phi'$ and $\varphi \equiv \varphi'$.
Consequently, $\Phi' \vDash \varphi'$.
By completeness of $\Omega$ for $\calL'$, we obtain $\Phi' \vdash \varphi'$.
Altogether, then $\Phi \vdash \Phi' \vdash \varphi' \vdash \varphi$.
As $\vdash$ is transitive, the lemma follows.
\end{proof}

Due to the above lemma, Theorem~\ref{thm:ptl-soundness} and \ref{thm:ptl-to-bpl-equiv}, and Corollary~\ref{cor:completeness-bpl-bml}, we obtain an axiomatization of $\PTL$:

\begin{corollary}\label{cor:ptl-completeness}
The proof system $\sfH^0\sfL\sfS$ is sound and complete for $\PTL$.
As a consequence, $\PTL$ is axiomatizable and compact.
\end{corollary}

The remainder of this section is devoted to proving the required Theorem~\ref{thm:ptl-to-bpl-equiv}.

However, we will restrict ourselves to lax semantics instead of strict semantics.
One reason is that the former enjoys several natural properties such as the \emph{locality property}: If two teams agree on their assignments \wrt some variables $p_1, \ldots, p_n$, then they satisfy the same formulas over these variables (see also Yang and Väänänen~\cite{yang_propositional_2017}).

For any propositional team $T$ and propositions $p_1,\ldots,p_n \in \PS$, we define
\[
\rel(T, (p_1,\ldots,p_n)) \dfn \{ (s(p_1),\ldots,s(p_n)) \mid s \in T \}\text{.}
\]
Then we can state the locality property as follows:
\begin{proposition}[Locality]\label{prop:locality}
Let $T,T'$ be propositional teams and $\varphi \in \PTL$ such that $\varphi$ contains the propositions $p_1, \ldots, p_n$.
Then $\rel(T,(p_1,\ldots,p_n)) = \rel(T',(p_1,\ldots,p_n))$ implies $T \vDash \varphi \Leftrightarrow T' \vDash \varphi$ in lax semantics.
\end{proposition}

A proof is found in the appendix.
Under strict semantics, locality is not true in general:

\begin{example}
Under strict semantics, $\psi\dfn \negg p \tens \negg p$ states that the team contains at least two assignments $s,s'$ with $s(p) = s'(p) = 0$.

Now, for an assignment $s$ with $s(p) = 0$, consider the teams $\{s\}$ and $\{ s^{q}_0, s^{q}_1 \}$, where $q \neq p$.
Clearly, $\rel(\{ s \},(p)) = \{ (0) \} = \rel(\{ s^{q}_0, s^{q}_1 \}, (p))$.
However, $\{ s \} \nvDash \psi$ and $\{ s^{q}_0, s^{q}_1 \} \vDash \psi$, violating locality.
\end{example}

Note that lax and strict semantics coincide for $\calB(\PL)$.

\label{p:count}

\begin{corollary}\label{cor:no-counting}
In strict semantics, $\negg p \tens \negg p$ is not equivalent to any $\calB(\PL)$-formula.
\end{corollary}

Observe that the axiom \Deriv{(Lax)} is not provable from the remaining axioms:
The system $\sfH^0\sfL\sfS$ except \Deriv{(Lax)} is easily proven sound for strict semantics, and consequently cannot prove \Deriv{(Lax)}, as the latter is not a theorem in strict semantics.
For this reason, an explicit axiom for lax semantics must necessarily be added.

\subsection{Splitting elimination}

As a specific instance of Lemma~\ref{lem:translate-completeness}, we introduce \emph{$\mathfrak{f}$-elimination}:

\begin{definition}\label{def:f-elim}
	Let $\calL$ be a logic and $\Omega$ a proof system. Let $\mathfrak{f}$ be an $n$-ary connective.
	We say that $\calL$ has \emph{$\mathfrak{f}$-elimination in $\Omega$} if for all formulas $\xi_1,\ldots,\xi_n \in \calL$ there exists some $\varphi \in \calL$ such that $\mathfrak{f}(\xi_1, \ldots, \xi_n) \eqpr_\Omega \varphi$.
\end{definition}
In other words, if $\xi_1, \ldots, \xi_n$ are $\calL$-formulas, then $\mathfrak{f}(\xi_1, \ldots, \xi_n)$ is as well equivalent to an $\calL$-formula.

In this subsection, we aim at proving that $\calB(\PL)$ has $\limp$-elimination in order to prove Theorem~\ref{thm:ptl-to-bpl-equiv}.

As we let the elimination start at the innermost subformulas, we additionally require the next definition.

\begin{definition}
Let $\mathfrak{g}$ be an $n$-ary connective.
Say that a proof system $\Omega$ has \emph{substitution in }$\mathfrak{g}$ if for all $\varphi_1,\psi_1,\ldots,\varphi_n,\psi_n$ it holds that $\varphi_1 \eqpr \psi_1, \ldots, \varphi_n \eqpr \psi_n$ implies $\mathfrak{g}(\varphi_1, \ldots, \varphi_n) \eqpr \mathfrak{g}(\psi_1, \ldots, \psi_n)$.
\end{definition}

\label{pg:metarules}

In order to prove $\limp$-elimination and substitution, we require several auxiliary results, such as in the following lemma.
Note that the deduction theorem is available for any system $\Omega \succeq \sfL\sfS$.
By means of the latter and the system $\sfS$, the proof of the following meta-rules is straightforward and can be found in the appendix.

\begin{figure}[b]\centering
\begin{tabular}{lll}
		\toprule
		\Deriv{(Com$\tens$)}&$(\varphi \tens \psi) \tequiv (\psi \tens \varphi)$&Commutative law for $\tens$\\
		\Deriv{(Ass$\tens$)}&$((\varphi \tens \psi)\tens\vartheta) \tequiv (\varphi \tens(\psi\tens \vartheta))$&Associative law for $\tens$\\
		\Deriv{(D$\oland\tens$)}&$\alpha \oland (\varphi \tens \psi) \tequiv (\alpha \oland \varphi) \tens (\alpha \oland \psi)$&Distr. law for $\tens$ and $\oland$\\
		\Deriv{(D$\ovee\tens$)}&$\varphi \tens (\psi \ovee \vartheta) \tequiv (\varphi \tens \psi) \ovee (\varphi \tens \vartheta)$&Distr. law for $\tens$ and $\ovee$\\
		\Deriv{(Aug$\tens$)}&$(\varphi \tens \psi) \oland (\varphi \limp \vartheta) \timp (\varphi \tens (\psi \oland \vartheta))$&Augment splitting\\
		\Deriv{(Abs$\tens$)}&$(\E \alpha \tens \varphi) \timp \E\alpha$&Absorption law of $\tens$\\
		\Deriv{(Join$\E$)}&$(\alpha \oland \E\beta) \timp \E(\alpha \land \beta)$&\\
		\Deriv{(Isolate$\E$)}&$(\varphi \tens (\alpha \oland \E\beta)) \tequiv (\varphi \tens \alpha) \oland \E(\alpha\land\beta)$&\\
		\bottomrule
	\end{tabular}
	\caption{Provable laws $\sfS'$ of $\limp$ and $\tens$}\label{fig:splitting2}
\end{figure}

\begin{lemma}\label{lem:meta-ptl}
Let $\Omega \succeq \sfL\sfS$ be a proof system.
Them $\Omega$ has substitution in $\negg$, $\timp$ and $\limp$.
Furthermore, $\Omega$ admits the following meta-rules:
\begin{itemize}
	\item Reductio ad absurdum \Deriv{(RAA)}:
	If $\Phi \cup \{\varphi\} \vdash \{\psi, \negg \psi\}$, then $\Phi \vdash\negg\varphi$.
	If $\Phi \cup \{\negg\varphi\} \vdash \{\psi, \negg \psi\}$, then $\Phi \vdash \varphi$.
	\item Modus ponens in $\limp$ \Deriv{(MP$\limp$)}:
	If $\;\vdash \varphi \timp \psi$ and $\Phi \vdash \vartheta \limp \varphi$, then $\Phi \vdash \vartheta \limp \psi$.
	\item Modus ponens in $\tens$ \Deriv{(MP$\tens$)}:
	If $\;\vdash \varphi \timp \psi$ and $\Phi \vdash \vartheta \tens \varphi$, then $\Phi \vdash \vartheta \tens \psi$.
\end{itemize}
\end{lemma}

Moreover, the axioms $\sfS$ allow to derive basic laws regarding $\limp$, its dual $\tens$, and the remaining connectives, with the derivations again found in the appendix:

\begin{lemma}\label{lem:ptl-laws}
Let $\Omega \succeq \sfH^0\sfL\sfS$. Then all instances of the axioms in $\sfS'$ are theorems of $\Omega$.
\end{lemma}

In the rest of the section, we prove that the system admits $\limp$-elimination.
The proof spans over several lemmas.
We implicitly apply Lemma~\ref{lem:meta-ptl} when using substitution in $\timp, \negg$ and $\limp$ and make use of the laws in Lemma~\ref{lem:meta-ptl} and the system $\sfS'$.
Roughly speaking, we pull $\tens$ inside any Boolean connectives.
The first step is the \emph{and/or lemma}.

\begin{lemma}[And/Or lemma]\label{lem:vee-swap}
If $\Omega \succeq \sfH^0\sfL\sfS$, then
\[
\bigowedge_{i=1}^n \E\beta_i \tequiv \bigtens_{i=1}^n \E\beta_i
\]
is a theorem of $\Omega$ for all $\beta_1,\ldots,\beta_n \in \calF$.
\end{lemma}
\begin{proof}
Using the deduction theorem, we show $\bigowedge_{i=1}^n \E\beta_i \eqpr \bigtens_{i=1}^n \E\beta_i$.
We begin with the direction "$\vdash$", and proceed by induction on $n$, where $n = 1$ is trivial.
For $n > 1$, by induction hypothesis and substitution in $\owedge$, it suffices to prove $(\bigtens_{i=1}^{n-1} \E\beta_i) \oland \E\beta_{n} \vdash \bigtens_{i=1}^n \E\beta_i$.

In $\sfL$, we can decompose the conjunction.
Then, assuming $\bigtens_{i=1}^{n-1}\E\beta_i$ and $\E\beta_n$ as premises, we prove $\bigtens_{i=1}^n \E\beta_i$ by \Deriv{(RAA)}.
From its negation, viz.\ $\bigotimes_{i=1}^{n-1}\E\beta_i \limp \negg\E\beta_n$,
we derive $\top \limp \negg\E\beta_n$ with \Deriv{(Lax)}.
By \Deriv{(C$\limp$)}, then $\E\beta_n \limp \negg \top$ follows.
Finally, again by \Deriv{(Lax)}, we obtain $\top \limp \negg \top$.
However, $\top \lor \top$, and hence $\top \tens \top = \negg(\top \limp \negg \top)$, is a theorem of $\sfH^0\sfS$ as well.
By \Deriv{(RAA)}, we conclude $\bigtens_{i=1}^n \E\beta_i$.

The other direction "$\ltimp$" is shown by a separate derivation of each conjunct with \Deriv{(Abs$\tens$)}, \Deriv{(Ass$\tens$)} and \Deriv{(Com$\tens$)}, which in $\sfL$ then yields the conjunction.
\end{proof}

\begin{lemma}[Generalized distributive law]\label{lem:distribute-alpha}
If $\Omega\succeq \sfH^0\sfL\sfS$, then
\[
\alpha \oland \left( \bigoland_{i=1}^n \E \beta_i\right) \tequiv \bigotimes_{i=1}^n (\alpha \oland \E\beta_i)
\]
is a theorem of $\Omega$ for all $\alpha,\beta_1,\ldots,\beta_n \in \calF$.
\end{lemma}
\begin{proof}
First we apply the previous lemma to replace the large conjunction by a large splitting disjunction.
Then we distribute $\alpha$ with repeated application of \Deriv{(D$\oland\tens$)}, \Deriv{(Ass$\tens$)} and \Deriv{(Com$\tens$)}.
\end{proof}

\begin{lemma}[$\E$ isolation]\label{lem:isolate-alpha}
If $\Omega\succeq \sfH^0\sfL\sfS$, then
\[
\bigtens_{i=1}^n \left( \alpha_i \oland \E \beta_i\right) \tequiv \left(\bigtens_{i=1}^n \alpha_i\right) \oland \bigoland_{i=1}^n \E (\alpha_i \land \beta_i)
\] is a theorem of $\Omega$ for all $\alpha_1,\ldots,\alpha_n,\beta_1,\ldots,\beta_n \in \calF$.
\end{lemma}
\begin{proof}
For "$\vdash$", we obtain $\bigtens_{i=1}^n \alpha_i$ from $\bigtens_{i=1}^n \left( \alpha_i \oland \E \beta_i\right)$ by the application of \Deriv{(Ass$\tens$)}, \Deriv{(Com$\tens$)} and \Deriv{(MP$\tens$)}, as $(\alpha_i\oland\E\beta_i) \vdash\alpha_i$ for all $i \in \{1, \ldots, n\}$.

Next, we apply \Deriv{(Join$\E$)} to similarly derive $\bigtens_{i=1}^n \E(\alpha_i \land \beta_i)$, which by Lemma~\ref{lem:vee-swap} yields $\bigoland_{i=1}^n \E(\alpha_i \land \beta_i)$.
For "$\dashv$", we repeatedly apply the theorem \Deriv{(Isolate$\E$)} of $\sfS'$, $(\varphi \tens \alpha) \oland \E(\alpha \land \beta) \timp \varphi \tens (\alpha \oland \E \beta)$, as follows:
Assume that the formula has the following form after $k$ applications.
\[
\left(\bigtens_{i=1}^k(\alpha_i \oland \E\beta_i) \tens \bigtens_{i=k+1}^{n} \alpha_i\right)\oland \bigoland_{i=k+1}^{n}\E(\alpha_i \land \beta_i)\text{.}
\]
For $k = 0$, this is obvious. With commutative and associative laws we isolate a single subformula on each side:
\[
\left[\left(\bigtens_{i=1}^k(\alpha_i \oland \E\beta_i)  \tens \bigtens_{i=k+2}^{n} \alpha_i\right) \tens \alpha_{k+1}\right]\oland  \E(\alpha_{k+1}\land \beta_{k+1})\oland\bigoland_{i=k+2}^{n}\E(\alpha_i \land \beta_i)
\]
Then we apply \Deriv{(Isolate$\E$)}, resulting in
\[
\left[\left(\bigtens_{i=1}^k(\alpha_i \oland \E\beta_i)  \tens \bigtens_{i=k+2}^{n} \alpha_i\right)\tens  (\alpha_{k+1}\oland\E \beta_{k+1})\right]\oland\bigoland_{i=k+2}^{n}\E(\alpha_i \land \beta_i)\text{,}
\]
and again with commutative and associative laws in
\[
\left(\bigtens_{i=1}^{k+1}(\alpha_i \oland \E\beta_i)  \tens \bigtens_{i=k+2}^{n} \alpha_i\right)\oland\bigoland_{i=k+2}^{n}\E(\alpha_i \land \beta_i)\text{,}
\]
where we can repeat the above steps until $k = n$.
\end{proof}

\begin{lemma}[Flatness of $\tens$]\label{lem:flatness-transform}
If $\Omega \succeq \sfH^0\sfL\sfS$, then
\[
\bigtens_{i=1}^n\alpha_i \tequiv \bigvee_{i=1}^n \alpha_i
\]
is a theorem of $\Omega$ for all $\alpha_1, \ldots,\alpha_n \in \calF$.
\end{lemma}
\begin{proof}
By induction on $n$, where $n = 1$ is trivial.
For $n > 1$, first let $\varphi \dfn \bigtens_{i=1}^{n-1}\alpha_i$ and $\gamma\dfn \bigvee_{i=1}^{n-1} \alpha_i$.
Then, by induction hypothesis, $\varphi \eqpr \gamma$.
By \Deriv{(Com$\tens$)} and \Deriv{(MP$\tens$)}, $\varphi\tens\alpha_n \eqpr \gamma\tens \alpha_n$ follows.
Finally, by \Deriv{(F$\tens$)} we obtain $\varphi \tens \alpha_n \eqpr \gamma\tens \alpha_n \eqpr \gamma\lor \alpha_n$.
\end{proof}

\begin{lemma}[Flatness of $\owedge$]\label{lem:flatness-transform2}
If $\Omega \succeq \sfH^0\sfL\sfS$, then
\[
\bigoland_{i=1}^n\alpha_i \tequiv \bigwedge_{i=1}^n\alpha_i
\]
is a theorem of $\Omega$ for all $\alpha_1, \ldots,\alpha_n \in \calF$.
\end{lemma}
\begin{proof}
The proof is again by induction on $n$.
Analogously as before, let $\varphi \dfn \bigoland_{i=1}^{n-1}\alpha_i$ and $\gamma\dfn \bigwedge_{i=1}^{n-1} \alpha_i$, where $\varphi \eqpr \gamma$.
Then $\varphi\oland\alpha_n \eqpr \gamma\oland \alpha_n$ in $\sfL$ by substitution.
Next, to prove the lemma, we show $\gamma\land\alpha_n \eqpr \gamma\oland\alpha_n$.

Clearly from $\gamma\land \alpha_n$ we can derive $\gamma$ and $\alpha_n$ in $\sfH^0$, and then $\gamma \oland\alpha_n$ in $\sfL$.
For the other direction, \ie, to prove $\gamma\land\alpha_n$ from $\gamma\oland\alpha_n$, we use \Deriv{(RAA)} and assume the premises $\gamma\oland\alpha_n$ and $\negg(\gamma\land\alpha_n) = \negg \neg(\gamma\imp \neg\alpha_n) = \E(\gamma\imp\neg\alpha_n)$.

By $\sfL$, we have $\gamma \oland \alpha_n \vdash \gamma$ and $\gamma \oland \alpha_n \vdash \alpha_n$.
Two applications of \Deriv{(Join$\E$)} then produce $\E(\gamma \land \alpha_n \land (\gamma\imp\neg\alpha_n))$.
Clearly, this yields $\E\bot = \negg \neg \bot$ in $\sfH^0\sfL$.
At the same time, $\top$ and consequently $\neg \bot$ is derivable in $\sfH^0$.
By \Deriv{(RAA)}, we conclude $\gamma\land\alpha_n$ from $\neg \bot$ and $\negg \neg \bot$.
\end{proof}

With the above lemmas, we are finally ready to prove the $\limp$-elimination.

\begin{lemma}[$\limp$-elimination]\label{lem:tensor-elim}
Let $\calF$ be a logic closed under $\neg, \lor, \land$.
Let $\Omega\succeq \sfH^0\sfL\sfS$.
Then $\calB(\calF)$ has $\limp$-elimination in $\Omega$.
\end{lemma}
\begin{proof}
To prove $\limp$-elimination, suppose that $\varphi = \psi \limp \vartheta$ is a formula where $\psi,\vartheta \in \calB(\calF)$.
By substitution, $\varphi \eqpr \psi \limp \negg \negg \vartheta$, and by Theorem~\ref{thm:completeness-bool}, we can apply De Morgan's laws and distributive laws on both $\psi$ and $\negg\vartheta$.
This allows to replace $\psi$ and $\negg\vartheta$ by formulas $\psi'$, $\vartheta'$ in disjunctive normal form (DNF) over $\oland,\ovee$.

We arrive at the following provably equivalent form of $\varphi$,
\[
\negg\left[\bigovee_{i=1}^n\; \left(\bigoland_{j=1}^{o_i}\alpha_{i,j} \oland \bigoland_{j=1}^{m_i} \E\beta_{i,j} \right) \;\tens \;\bigovee_{i=1}^{n'}\; \left( \bigoland_{j=1}^{o'_i}\alpha'_{i,j} \oland \bigoland_{j=1}^{m'_i} \E\beta'_{i,j} \right)\right]\text{,}
\]
with the negative literals represented with $\E$, since $\sfH^0$ can introduce $\neg\neg$ if necessary.
For suitable $\alpha_i$ and $\alpha'_i$ in $\calF$, we derive in $\sfH^0\sfL\sfS$:
	\begin{align*}
	\Apply{(Lemma~\ref{lem:flatness-transform2})}\eqpr \quad &\negg\left[\bigovee_{i=1}^n \left(\alpha_i \oland \bigoland_{j=1}^{m_i} \E\beta_{i,j} \right) \;\tens \;\bigovee_{i=1}^{n'}\; \left(\alpha'_i \oland \bigoland_{j=1}^{m'_i} \E\beta'_{i,j} \right)\right]\\
	\Apply{(D$\ovee\tens$)}\eqpr \quad &\negg\!\!\!\bigovee_{\substack{1 \leq i \leq n\\1\leq i' \leq n'}}\!\! \left[ \left( \alpha_{i} \oland \bigoland_{j=1}^{m_i} \E \beta_{i,j} \right) \tens \left( \alpha'_{i'} \oland \bigoland_{j=1}^{m'_{i'}} \E \beta'_{i',j} \right) \right]\\
	\Apply{(Lemma~\ref{lem:distribute-alpha})}\eqpr \quad &\negg\!\!\!\bigovee_{\substack{1 \leq i \leq n\\1\leq i' \leq n'}} \!\! \left( \bigtens_{j=1}^{m_i} \bigg( \alpha_{i} \oland \E \beta_{i,j} \bigg) \tens \bigtens_{j=1}^{m'_{i'}} \bigg( \alpha'_{i'} \oland \E \beta'_{i',j} \bigg)\right)\\
	\text{\scriptsize(Renaming)}\; = \quad &\negg \bigovee_{i = 1}^{\ell}\;  \bigtens_{j=1}^{k_i} \left(\gamma_{i,j} \oland \E \delta_{i,j} \right)\\
		\Apply{(Lemma~\ref{lem:isolate-alpha})}\eqpr \quad &\negg \bigovee_{i = 1}^{\ell}  \left(\bigtens_{j=1}^{k_i} \gamma_{i,j} \oland \bigoland_{j=1}^{k_i} \E \left(\gamma_{i,j} \land \delta_{i,j} \right)\right)\\
		\Apply{(Lemma~\ref{lem:flatness-transform})}\eqpr \quad &\negg \bigovee_{i = 1}^{\ell} \left(\bigvee_{j=1}^{k_i} \gamma_{i,j} \oland \bigoland_{j=1}^{k_i} \E \left(\gamma_{i,j} \land \delta_{i,j} \right)\right) \quad =: \varphi' \in \calB(\calF)\text{.}
	\end{align*}
\end{proof}

We are now ready to prove the main theorem of this section:

\begin{reptheorem}{thm:ptl-to-bpl-equiv}
Let $\varphi \in \PTL$.
Then there is $\psi \in \calB(\PL)$ such that $\varphi \eqpr_{\sfH^0\sfL\sfS} \psi$.
\end{reptheorem}
\begin{proof}
Given $\varphi \in \PTL$, we construct $\psi \in \calB(\PL)$ by induction on $\varphi$.
If $\varphi = \varphi_1 \timp \varphi_2$ or $\varphi = \negg\varphi_1$, then by induction hypothesis, $\varphi_1$ and/or $\varphi_2$ are provably equivalent to $\calB(\PL)$ formulas $\psi_1$ and $\psi_2$.
By Lemma~\ref{lem:meta-ptl}, we obtain a provably equivalent formula $\psi \in \calB(\PL)$ by substitution in $\timp$ and $\negg$.

The remaining case is $\varphi = \varphi_1 \limp \varphi_2$.
By induction hypothesis, $\varphi_1 \eqpr \psi_1$ and $\varphi_2\eqpr \psi_2$ for some $\psi_1,\psi_2 \in\calB(\calF)$.
Here, the theorem follows by $\limp$-elimination (Lemma~\ref{lem:tensor-elim}).
\end{proof}

\subsection{Examples in propositional team logic}

Constraints such as dependence, independence or inclusion on teams are definable in $\PTL$.
As a consequence, laws such as Armstrong's axioms for functional dependence can be proved in our system.

\begin{example}
The dependency atom $\dep{x,y}$ ("$y$ is a function of $x$") can be written as $\top \limp (\dep{x} \timp \dep{y})$, where $\dep{\alpha} \dfn \alpha \ovee \neg \alpha$.
Figure~\ref{fig:example-dep} depicts a proof of one of Armstrong's axioms of dependence \cite{armstrong} in the system, namely the axiom of transitivity.
It states that from $\dep{x,y}$ and $\dep{y,z}$ we can infer $\dep{x,z}$.
\end{example}

\begin{figure}[b!]
\centering
{
\setlength{\fitchprfwidth}{3.63in}

\fitchprf{
\pline[A ]{\dep{x,y}} \\
\pline[B ]{\dep{y,z}}
}
{
	\pline[1 ]{\top \limp (\dep{x} \timp \dep{y})}[Def., A]\\
	\pline[2 ]{\top \limp (\dep{y} \timp \dep{z})}[Def., B]\\
	\subproof{}
	{
		\pline[3 ]{
			\brokenform{
				(\dep{x} \timp \dep{y})
			}{
				\formula{ \timp ((\dep{y}\timp\dep{z})\timp (\dep{x} \timp \dep{z}))}
			}
		}[$\sfL$]\\
		\pline[4 ]{
			\brokenform{
				\top \limp \big(((\dep{x} \timp \dep{y})
			}{
			\formula{ \timp ((\dep{y}\timp\dep{z}) \timp (\dep{x} \timp \dep{z}))\big)}
			}
		}[\Deriv{(Nec$\limp$)}]\\
		\pline[5 ]{
			\brokenform{
				\big(\top \limp (\dep{x} \timp \dep{y}) \big)
			}{
				\formula{\timp \big(\top \limp ((\dep{y}\timp\dep{z})\timp (\dep{x} \timp \dep{z})) \big)}
			}
		}[\Deriv{(Dis$\limp$)}]
	}
	\\
\pline[6 ]{\top \limp \big( (\dep{y}\timp\dep{z}) \timp (\dep{x} \timp \dep{z})\big)}[\Deriv{(E$\timp$)}, 1, 5]\\
\pline[7 ]{\big(\top \limp (\dep{y}\timp\dep{z}) \big) \timp\big( \top \limp (\dep{x} \timp \dep{z})\big)}[\Deriv{(Dis$\limp$)}]\\
\pline[8 ]{\top \limp (\dep{x} \timp \dep{z}))}[\Deriv{(E$\timp$)}, 2, 7]\\
\pline[\slider]{\dep{x,z}}[Def.]
}
}\caption{Example derivation of the transitivity of dependence\label{fig:example-dep}}
\end{figure}

\begin{example}
For $\alpha,\beta \in \PL$, the formula $(\alpha \limp \beta) \timp \beta$ is a theorem of $\PTL$.
It is easy to see that it is valid:
$\alpha$ is satisfied by the empty team, and as every team $T$ has the trivial division into $\emptyset$ and $T$, having $T \vDash \alpha \limp \beta$ implies $T \vDash \beta$.

We sketch a proof in the system $\sfH^0\sfL\sfS$.
First, clearly $\vdash_{\sfH^0} \bot \imp \alpha$.
This implies $\vdash_{\sfH^0\sfL} \negg\alpha \timp \negg\bot$ by contraposition.
As this formula is a theorem, \Deriv{(MP$\limp$)} is applicable.
Moreover, $\alpha \limp \negg\negg\beta$ follows form $\alpha \limp\beta$ by substitution in $\limp$, which by \Deriv{(C$\limp$)} yields $\negg\beta \limp \negg\alpha$.
By \Deriv{(MP$\limp$)}, we obtain $\negg \beta \limp \negg \bot$.

Finally, $\beta$ is proved with \Deriv{(RAA)} by assuming $\negg \beta$.
From $\negg \beta$ and $\negg \beta \limp \negg \bot$ we obtain $\top \limp \negg \bot$ by \Deriv{(Lax)}.
However, this contradicts $\top \tens \bot \dfn \negg(\top \limp \negg \bot)$, which itself follows from $\vdash_{\sfH^0} \top \lor \bot$ and \Deriv{(F$\tens$)}.
\end{example}

\section{Modal team logic}

\label{sec:modal}

Modal team logic generalizes the modal operators $\Diamond$ (here defined via $\triangle$) and $\Box$ to act on teams.
Analogously to $\limp$ in Theorem~\ref{thm:ptl-to-bpl-equiv}, we axiomatize the modalities $\triangle$ and $\Box$ in order to eliminate them from formulas.

By a model-theoretic argument, Kontinen, Müller, Schnoor and Vollmer~\cite{kontinen_van_2014} showed $\MTL \equiv \calB(\ML)$, \ie, that every $\MTL$-formula is equivalent to a $\calB(\ML)$-formula.
Their idea is that every $\MTL$-formula $\varphi$ can be written as a Boolean combination (over $\negg,\timp$) of finitely many so-called \emph{Hintikka formulas} of the bisimulation types of the models of $\varphi$.
These formulas essentially characterize a Kripke structure up to bounded bisimulation (see also Goranko and Otto~\cite{Goranko2007249}).
As Hintikka formulas are $\ML$-formulas, Kontinen et al.\ conclude that every $\MTL$-formula has an equivalent $\calB(\ML)$-formula.

Analogously as for $\PTL$, we give a purely syntactical proof of $\MTL \equiv \calB(\ML)$.
This translation utilizes the system $\sfM$, depicted in Figure~\ref{fig:modal}.

\begin{figure}
	\centering
	\begin{tabular}{lll}
		\toprule
		\Deriv{(Lin$\Box$)}&$\Box \negg \varphi \tequiv \negg\Box\varphi $&The image team is unique.\\
		\Deriv{(F$\Diamond$)}&$\Diamond\alpha \tequiv \neg \Box\neg\alpha$&Flatness of $\Diamond$\\
		\Deriv{(D$\Diamond\tens$)}&$\Diamond (\varphi \tens \psi) \tequiv \Diamond\varphi \tens \Diamond \psi$&$\Diamond$ distributes over splitting.\\
		\Deriv{(E$\Box$)}&$\Box\alpha \timp \triangle \alpha$&Successor teams are subteams \\
        & & of the image.\\
		\Deriv{(I$\Box$)}&$\Diamond \varphi \timp (\triangle\psi \timp \Box \psi)$&If there is some successor team,\\
        & & then the image is a successor team.\\
		\Deriv{(Dis$\Box$)}&$\Box (\varphi \timp \psi) \timp (\Box \varphi \timp \Box\psi)$&Distribution axiom\\
		\Deriv{(Dis$\triangle$)}&$\triangle (\varphi \timp \psi) \timp (\triangle \varphi \timp \triangle\psi)$&Distribution axiom\\
		\midrule
		\Deriv{(Nec$\Box$)}
		&\begin{bprooftree}
\AxiomC{$\varphi$}
\RightLabel{\small{}($\varphi$ theorem)}
\UnaryInfC{$\Box \varphi$}
\end{bprooftree}\vspace{15pt}
&Necessitation\\
\Deriv{(Nec$\triangle$)}
&\begin{bprooftree}
\AxiomC{$\varphi$}
\RightLabel{\small{}($\varphi$ theorem)}
\UnaryInfC{$\triangle \varphi$}
\end{bprooftree}
&Necessitation\\
		\bottomrule
	\end{tabular}
	\caption{The system $\sfM$}\label{fig:modal}
\end{figure}

\begin{theorem}\label{thm:mtl-completeness}
	The proof system $\sfH^\Box\sfL\sfS\sfM$ is sound for $\MTL$.
\end{theorem}
\begin{proof}
As for $\PTL$, a soundness proof is not difficult and can be found in the appendix.
\end{proof}

Let us state the main theorem of this section.
As before, its proof then extends over several lemmas.

\begin{theorem}\label{thm:mtl-is-equiv-to-bml}
Let $\varphi \in \MTL$.
Then there is $\psi \in \calB(\ML)$ such that $\varphi \eqpr_{\sfH^\Box\sfL\sfS\sfM} \psi$.
\end{theorem}

Analogously as for $\PTL$, with Corollary~\ref{cor:completeness-bpl-bml} and Lemma~\ref{lem:translate-completeness} this then yields a complete axiomatization for $\MTL$, settling an open question of Kontinen et~al.~\cite{kontinen_van_2014}.

\begin{corollary}\label{cor:mtl-completeness}
The proof system $\sfH^\Box\sfL\sfS\sfM$ is sound and complete for $\MTL$.
As a consequence, $\MTL$ is axiomatizable and compact.
\end{corollary}

\subsection{Proving the modality elimination}

Note that the term \emph{$\Box$-elimination} resp.\ \emph{$\triangle$-elimination} should not be taken literally; the idea rather is to "push inside" modal operators into classical $\ML$-subformulas.
In fact, it is not hard to prove that the total nesting depth of modalities cannot in general decrease in any semantics preserving translation from $\MTL$ to $\calB(\ML)$.

Like $\limp$, the modal operators of $\MTL$ admit several provable meta-rules.
The proof of the following lemma can be found in the appendix.

\begin{lemma}\label{lem:meta-mtl}
Let $\Omega\succeq \sfL\sfS\sfM$ be a proof system.
Then $\Omega$ has substitution in $\timp, \negg, \limp, \Box$ and $\triangle$.
Furthermore, $\Omega$ admits the following meta-rules:
\begin{itemize}
	\item Modus ponens in $\Box$ \Deriv{(MP\,$\Box$)}:
	If \,$\vdash \varphi \timp \psi$ and $\Phi \vdash \Box \varphi$, then $\Phi \vdash \Box \psi$.
	\item Modus ponens in $\triangle$ \Deriv{(MP$\triangle$)}:
	If \,$\vdash \varphi \timp \psi$ and $\Phi \vdash \triangle \varphi$, then $\Phi \vdash \triangle \psi$.
	\item Modus ponens in $\Diamond$ \Deriv{(MP$\Diamond$)}:
	If \,$\vdash \varphi \timp \psi$ and $\Phi \vdash \Diamond \varphi$, then $\Phi \vdash \Diamond \psi$.
\end{itemize}
\end{lemma}

We proceed with proving that every $\MTL$-formula can be translated to a $\calB(\ML)$-formula.
Note that the $\limp$-elimination shown in Lemma~\ref{lem:tensor-elim} also applies to $\MTL$, since $\sfH^\Box\sfL\sfS\sfM$ is a conservative extension of $\sfH^0\sfL\sfS$.
It remains to establish the corresponding elimination lemmas for $\Box$ and $\triangle$.

The axioms of the system $\sfM$, depicted in Figure~\ref{fig:modal}, characterize the modal operators $\Box$ and $\triangle$ and their relationship with the other team-logical connectives.
As in the previous section, we require several auxiliary laws.
They are gathered in the system $\sfM'$ which is depicted in Figure~\ref{fig:modal2}.

\begin{lemma}\label{lem:mtl-laws}
Let $\Omega \succeq \sfH^\Box\sfL\sfS\sfM$. Then all instances of the axioms in $\sfM'$ are theorems of $\Omega$.
\end{lemma}
\begin{proof}
The "$\timp$" part of \Deriv{(D$\Box$$\timp$)} is \Deriv{(Dis$\Box$)}.
See the appendix for the other derivations.
\end{proof}

\begin{figure}[t]\centering
\scalebox{.95}{
\begin{tabular}{lcl}
		\toprule
        \Deriv{(D$\Box$$\timp$)}&$\Box(\varphi \timp \psi) \tequiv (\Box \varphi \timp \Box \psi)$&Distributive law for $\Box$ and $\timp$\\
		\Deriv{(D$\Diamond\ovee$)}&$\Diamond(\varphi \ovee \psi) \tequiv (\Diamond \varphi \ovee \Diamond \psi)$&Distributive law for $\Diamond$ and $\ovee$\\
		\Deriv{($\Diamond$Isolate$\E$)}&$\Diamond(\alpha \oland \E\beta) \tequiv \Diamond \alpha \oland \E\neg\Box\neg(\alpha \land \beta)$&\\
		\bottomrule
	\end{tabular}}
	\caption{Provable laws $\sfM'$ of $\Box$, $\triangle$ and $\Diamond$}\label{fig:modal2}
\end{figure}

\begin{lemma}\label{lem:step-box-mtl}
Let $\Omega \succeq \sfL\sfS\sfM$. Then $\calB(\ML)$ has $\Box$-elimination in $\Omega$.
\end{lemma}
\begin{proof}
Suppose $\varphi \in \calB(\ML)$.
To prove the lemma, we have to show that $\Box\varphi \eqpr \psi$ for some $\calB(\ML)$.
We repeatedly apply \Deriv{(D$\Box$$\timp$)} and \Deriv{(Lin$\Box$)} to $\Box\psi$ in order to push $\Box$ inside any $\timp$ and $\negg$ operators.
By Lemma~\ref{lem:meta-mtl}, this is also possible inside subformulas.
Since afterwards $\Box$ only occurs in classical subformulas, and since the above laws are symmetric, we conclude that $\Box\varphi$ is provably equivalent to a $\calB(\ML)$-formula.
\end{proof}

\begin{lemma}\label{lem:step-diamond-mtl}
Let $\Omega \succeq \sfH^\Box\sfL\sfS\sfM$. Then $\calB(\ML)$ has $\triangle$-elimination in $\Omega$.
\end{lemma}
\begin{proof}
Suppose $\varphi \in \calB(\ML)$.
We prove that $\triangle\varphi \eqpr \psi$ for some $\psi \in \calB(\ML)$.
By Lemma~\ref{lem:meta-mtl}, we can again perform substitution.

With $\sfL$ and \Deriv{(MP$\triangle$)}, we can show $\triangle\varphi\eqpr \negg\negg\triangle\negg\negg\varphi = \negg\Diamond\negg\varphi$.
By Theorem~\ref{thm:completeness-bool}, we can prove $\negg\varphi$ equivalent to a formula in disjunctive normal form, analogously to the proof of Lemma~\ref{lem:tensor-elim}:
\begin{align*}
	\quad&\phantom{\negg\bigdiamond} \bigovee_{i = 1}^{n}\;  \left(\bigoland_{j=1}^{o_i}\alpha_{i,j} \oland \bigoland_{j = 1}^{k_i}\E \beta_{i,j} \right)\\
	\intertext{Then $\triangle\varphi$ itself is provably equivalent to:}
	\quad&\negg\bigdiamond  \bigovee_{i = 1}^{n}\;  \left(\bigoland_{j=1}^{o_i}\alpha_{i,j} \oland \bigoland_{j = 1}^{k_i}\E \beta_{i,j} \right)\\
\intertext{For suitable $\alpha_i,\mu_{i,j},\nu_{i,j}\in \ML$:}
	\Apply{(Lemma~\ref{lem:flatness-transform2})}\eqpr \quad&\negg\bigdiamond  \bigovee_{i = 1}^{n}\;  \left(\alpha_{i} \oland \bigoland_{j = 1}^{k_i}\E \beta_{i,j} \right)\\
	\Apply{(Lemma~\ref{lem:distribute-alpha})}\eqpr \quad &\negg\bigdiamond \bigovee_{i = 1}^{n} \bigtens_{j=1}^{k_i} \bigg( \alpha_{i} \oland \E \beta_{i,j} \bigg)\\
	\Apply{(D$\Diamond\ovee$)}\eqpr \quad&\negg \bigovee_{i = 1}^{n}\; \bigdiamond \bigtens_{j=1}^{k_i} \left(\alpha_{i} \oland \E \beta_{i,j} \right)\\
	\Apply{(D$\Diamond\tens$)}\eqpr \quad&\negg \bigovee_{i = 1}^{n}\; \bigtens_{j=1}^{k_i} \Diamond \left(\alpha_{i} \oland \E \beta_{i,j} \right)\\
	\Apply{(Lemma~\ref{lem:mtl-laws})} \eqpr \quad &\negg\bigovee_{i = 1}^{n}\;  \bigtens_{j=1}^{k_i} (\Diamond\alpha_i \oland \E\neg\Box\neg(\alpha_i\land\beta_{i,j}))\\
	\Apply{(F$\Diamond$)} \eqpr \quad &\negg\bigovee_{i = 1}^{n}\;  \bigtens_{j=1}^{k_i} (\neg\Box\neg\alpha_i \oland \E\neg\Box\neg(\alpha_i\land\beta_{i,j}))\\
	\Apply{(Renaming)} = \quad &\negg\bigovee_{i = 1}^{n}\;  \bigtens_{j=1}^{k_i}(\mu_{i,j} \oland \E\nu_{i,j}) \\
	\Apply{(Lemma~\ref{lem:isolate-alpha})}\eqpr \quad &\negg\bigovee_{i = 1}^{\ell} \; \left(\bigtens_{j=1}^{k_i} \mu_{i,j} \oland \bigoland_{j=1}^{k_i} \E \left(\mu_{i,j} \land \nu_{i,j} \right)\right)\\
	\Apply{(Lemma~\ref{lem:flatness-transform})}\eqpr \quad &\negg\bigovee_{i = 1}^{\ell} \; \left(\bigvee_{j=1}^{k_i} \mu_{i,j} \oland \bigoland_{j=1}^{k_i} \E \left(\mu_{i,j} \land \nu_{i,j} \right)\right) \quad \in \calB(\ML)\text{.}\qedhere
\end{align*}
\end{proof}

We are now ready to prove the main theorem of this section:

\begin{reptheorem}{thm:mtl-is-equiv-to-bml}
Let $\varphi \in \MTL$.
Then there is $\psi \in \calB(\ML)$ such that $\varphi \eqpr_{\sfH^\Box\sfL\sfS\sfM} \psi$.
\end{reptheorem}
\begin{proof}
By induction on $\varphi$.
Suppose $\varphi \notin \calB(\ML)$.
If $\varphi$ is of the form $\varphi_1 \timp \varphi_2$ resp.\ $\negg\varphi_1$, then by induction hypothesis, $\varphi_1 \eqpr \psi_1$ for some $\psi_1 \in \calB(\ML)$ (and likewise $\varphi_2 \eqpr \psi_2$ for some $\psi_2 \in \calB(\ML)$).
By substitution, then $\varphi \eqpr \psi_1 \timp \psi_2$ resp.\ $\varphi \eqpr \negg \psi_1$.

If $\varphi$ is of the form $\Box\varphi_1$, $\triangle\varphi_1$ or $\varphi_1 \limp \varphi_2$, then again we can assume $\psi_1$ (resp.\ $\psi_1$ and $\psi_2$) as above.
By substitution, $\varphi$ is then again provably equivalent to $\Box\psi_1$, $\triangle\psi_1$, or $\psi_1 \limp \psi_2$, respectively.
By Lemma~\ref{lem:tensor-elim}, \ref{lem:step-box-mtl} and \ref{lem:step-diamond-mtl},
 $\calB(\ML)$ has elimination of $\limp$, $\Box$ and $\triangle$.
Consequently, $\varphi$ has a provably equivalent $\calB(\ML)$-formula.
\end{proof}
 
\section{First-order logic}

\label{sec:fo}

First-order logic $\FO$ does not enjoy the counter-model merging property (cf.\ Proposition~\ref{prop:counter-models-ml-pl}).
Consider, for instance, the sentences $R(c)$ and $\neg R(c)$, where $c$ is a constant.
Clearly, either of them can be falsified by an appropriate interpretation in team semantics, but to falsify both in the same structure is impossible regardless of the assigned teams.
The crucial point is that $R(c)$ and $\neg R(c)$ are contradicting \emph{sentences}.

In this section, we show that sentences are in fact the \emph{only} obstacle for axiomatizing $\calB(\FO)$ in the spirit of Section~\ref{sec:boolean}.
The problem can be remedied by the introduction of an additional axiom, the \emph{unanimity axiom}.
Then, we can prove a contradiction from formulas which, roughly speaking, already contradict on the level of sentences.

\medskip

\begin{center}
\begin{tabular}{lcl}
\toprule
\Deriv{(U)}&$\negg\alpha \timp \neg \alpha$ \quad {\small{}($\alpha$ sentence)}&\\
\bottomrule
\end{tabular}
\end{center}

We will refer to the above system simply as $\sfU$.
Similar to classical first-order logic, the truth of a sentence depends only on the underlying structure itself and not on the assignments in a given team:

\begin{lemma}\label{lem:sentences}
For any $\alpha \in \FO^0$ and structure $\calA$, the following are equivalent:
\begin{enumerate}
	\item $(\calA,T) \vDash \alpha$ for some non-empty team $T$.
	\item $(\calA,T) \vDash \alpha$ for all teams $T$.
	\item $(\calA,s) \vDash \alpha$ for some $s : \Var \to \size{\calA}$.
	\item $(\calA,s) \vDash \alpha$ for all $s : \Var \to \size{\calA}$.
\end{enumerate}
\end{lemma}
\begin{proof}
By definition of team semantics on classical formulas, 2.\ is equivalent to 4., and 1.\ is equivalent to 3.
Furthermore, 4.\ implies 3.
For this reason, it remains to show that 3.\ implies 4.
Suppose that $\calA$ is a structure, $\alpha$ is a sentence, and $s$ is as assignment such that $(\calA, s) \vDash \alpha$.
In classical semantics, it is well-known that $(\calA,s)$ and $(\calA,s')$ satisfy the same sentences for arbitrary assignments $s,s'$.
Consequently, 4.\ follows.
\end{proof}

The above lemma allows to prove \Deriv{U} sound.

\begin{lemma}[Soundness of \Deriv{U}]\label{lem:u-soundness}
Let $\alpha \in \FO^0$, and let $\calA$ be a structure.
Then $(\calA,T) \vDash \negg\alpha$ implies $(\calA,T)\vDash \neg \alpha$ for all teams $T$.
Moreover, for all non-empty teams $T$, we have $(\calA,T)\vDash \negg \alpha$ if and only if $(\calA,T) \vDash \neg \alpha$.
\end{lemma}
\begin{proof}
Assume $\alpha \in \FO^0$ and $\calA$ as above.
For the first part of the lemma, suppose $(\calA, T)\vDash \negg\alpha$.
By definition, then $(\calA,s)\vDash \neg\alpha$ for some $s \in T$.
Since $\alpha$ is a sentence, so is $\neg\alpha$, and by Lemma~\ref{lem:sentences} and the non-emptiness of $T$, $(\calA,T)\vDash \neg\alpha$.

For the second part of the lemma, we also prove the reverse direction.
For this reason, assume $T \neq \emptyset$ due to some $s \in T$, and let $(\calA, T)\vDash \neg \alpha$.
Then in particular $(\calA, s)\vDash \neg \alpha$, which implies $(\calA,T)\vDash \negg\alpha$.
\end{proof}

We proceed by investigating the fragment $\negg \FO=\{ \negg \alpha \mid \alpha \in \FO\}$.
The next proposition and the subsequent lemma show that the system $\sfH\sfU$ is not only sound, but also "complete" for $\FO$-entailments from sets of $\negg\FO$-formulas:

\begin{proposition}
Let $\Delta \subseteq \negg \FO$ be non-empty, and let $\Delta \vDash \alpha$ for some $\alpha\in \FO$.
Then there is a sentence $\varepsilon$ such that $\Delta \vDash \negg \varepsilon$, $\negg \varepsilon \vDash \neg \varepsilon$ and $\neg \varepsilon \vDash \alpha$.
\end{proposition}
\begin{proof}
Define $\varepsilon \dfn \exists x_1 \cdots \exists x_n \neg \alpha$, where $x_1,\ldots,x_n$ are the free variables of $\alpha$.
Clearly, $\neg \varepsilon \equiv \forall x_1 \cdots \forall x_n \alpha$.
In particular, $\neg \varepsilon \vDash \alpha$.
Moreover, $\negg \varepsilon \vDash \neg \varepsilon$ by the previous lemma.

It remains to prove $\Delta \vDash \negg \varepsilon$.
Suppose $(\calA, T)\vDash \Delta$ for some team $T$ and first-order structure $\calA$.
Let $V=\Set{ s | s : \Var \to \size{\calA} }$ be the team of \emph{all} assignments.
Then $T \subseteq V$, and $(\calA, V)\vDash \Delta$ by Proposition~\ref{prop:downward-closure}.
By assumption, also $(\calA, V)\vDash \alpha$.

The next step is to show that $\calA \vDash \neg\varepsilon$: Since $V$ contains all assignments, it also contains a non-empty duplicating team, \ie, a team of the form $U^{x_1}_{\size{\calA}}\ldots^{x_n}_{\size{\calA}}$ for non-empty $U$, that then satisfies $\alpha$ as well by downward closure.
By definition of $\forall$ in team semantics, $(\calA,U) \vDash \forall x_1 \cdots \forall x_n \, \alpha$, implying $(\calA,U) \vDash \neg \varepsilon$.
Note that $T \neq \emptyset$, as $T$ satisfies at least one $\negg\FO$-formula.
By Lemma~\ref{lem:sentences},  $(\calA, U)\vDash \neg \varepsilon$ implies $(\calA,T) \vDash \neg \varepsilon$, and by Lemma~\ref{lem:u-soundness}, we conclude $(\calA, T)\vDash \negg \varepsilon$.
\end{proof}

The above proposition exhibits an important property of $\negg\FO$: If a subset $\Delta  \subseteq \negg \FO$ is not satisfiable, then it already entails contradicting \emph{sentences}.
This fact is exploited in the next lemma.
It is the first step to prove the refutation completeness of the fragment $\FO \cup \negg \FO$, which is required in order to utilize Theorem~\ref{thm:completeness-of-L} for completeness of $\calB(\FO)$.

\begin{lemma}\label{lem:negg-fo-completeness}
$\sfH\sfU\sfL$ is refutation complete for $\negg\FO$.
\end{lemma}
\begin{proof}
Let $\Delta \subseteq \negg\FO$ be unsatisfiable.
Note that $\negg\delta \vdash_{\sfH\sfL} \negg \bot$ for all $\delta \in \FO$.
As $\Delta$ necessarily contains at least one formula, which is of the form $\negg\delta$, demonstrating $\Delta \vdash \bot$ then shows its inconsistency.

For the rest of the proof, we write $\delta(x_1, \ldots, x_n)$ to indicate that $\delta$ has the free variables $x_1, \ldots, x_n$.
Then we define a set $\Gamma \subseteq \FO^0$ by
\[
\Gamma \dfn \Set{\exists x_1 \cdots \exists x_n \neg\delta(x_1,\ldots,x_n) | \negg\delta(x_1,\ldots,x_n) \in \Delta}\text{.}
\]
The remaining proof of $\Delta \vdash \bot$ is split into showing $\Delta \vdash_{\sfH\sfU\sfL} \Gamma$ and $\Gamma \vdash_{\sfH} \bot$.
For the first part, note that $\forall x_1\ldots\forall x_n\, \delta(x_1,\ldots,x_n) \vdash_\sfH \delta(x_1,\ldots,x_n)$ for all $\delta(x_1,\ldots,x_n) \in \FO$ by Proposition~\ref{prop:base-completeness}.
Consequently, for all $\exists x_1 \cdots \exists x_n \neg\delta(x_1, \ldots, x_n) \in \Gamma$,
\begin{alignat*}{3}
 &	\Delta && \vdash &&\;\negg\delta(x_1,\ldots,x_n)\\
 &	&& \vdash_{\sfH\sfL} &&\;\negg \forall x_1\cdots \forall x_n\, \delta(x_1,\ldots,x_n)\\
&	 && \vdash_\sfU &&\;\neg \forall x_1 \cdots \forall x_n\, \delta(x_1,\ldots,x_n)\\
& && \vdash_\sfH &&\; \exists x_1 \cdots \exists x_n \neg\delta(x_1, \ldots, x_n)\text{.}
\end{alignat*}

It remains to prove $\Gamma \vdash \bot$, \ie, that $\Gamma$ is unsatisfiable under classical semantics.
For the sake of contradiction, assume that $\Gamma$ has a model $(\calA,s)$.
For every formula $\exists x_1 \cdots \exists x_n \neg\delta(x_1,\ldots,x_n) \in \Gamma$, let $S_\delta \dfn \Set{ s \colon \Var \to \size{\calA} | (\calA, s) \vDash \neg\delta(x_1,\ldots,x_n)}$.
By assumption, each such $S_\delta$ is non-empty, which implies $(\calA, S_\delta) \vDash \negg \delta(x_1,\ldots,x_n)$.
By Proposition~\ref{prop:downward-closure}, $(\calA,\bigcup_{\negg\delta \in \Delta}S_\delta) \vDash \Delta$, which contradicts the assumption that $\Delta$ is unsatisfiable.
\end{proof}

\subsection{From compactness to completeness}

In our approach to establish refutation completeness for $\FO\cup\negg\FO$ instead of only $\negg \FO$, we require that $\FO \cup \negg\FO$ is compact.
This is achieved by translating this fragment to classical first-order logic in a specific way.
The idea is to replace free variables in the formulas by fresh constants.
Since $\Phi$ may contain all constants in the vocabulary, we first show that an infinite set of constants can be excluded from $\Phi$.

If the vocabulary contains constants $c_0,c_1,c_2,\ldots$, and $\varphi$ is a formula, then we define $\varphi^\mathsf{even}$ as the formula where every occurrence of a constant $c_i$ is replaced by $c_{2i}$.
Analogously, $\Phi^\mathsf{even} \dfn \{ \varphi^\mathsf{even} \mid \varphi \in \Phi \}$ for sets $\Phi \subseteq \FO\cup \negg\FO$.

\begin{lemma}
If $\Phi^\mathsf{even}$ has a finite unsatisfiable subset, then $\Phi$ has a finite unsatisfiable subset of the same size.
\end{lemma}
\begin{proof}
Let $\Phi' \subseteq \Phi^\mathsf{even}$ be finite and unsatisfiable.
Then $\Phi'$ is of the form $(\Phi'')^\mathsf{even}$ for some $\Phi'' \subseteq \Phi$, since every constant in $\Phi'$ is of the form $c_{2i}$.
Moreover, $\size{\Phi'} = \size{\Phi''}$.
For this reason, it suffices to prove for arbitrary finite sets $\Phi \subseteq \FO\cup \negg\FO$ that $\Phi$ is unsatisfiable if $\Phi^\mathsf{even}$ is unsatisfiable.
But this is straightforward by contraposition, since any model $\calA$ of $\Phi^\mathsf{even}$ can be transformed into a model of $\Phi$ by reassigning the constants accordingly.
\end{proof}

\smallskip

Let now $\Phi \subseteq \FO \cup \negg\FO$.
In order to prove that $\Phi$ is either satisfiable or has a finite unsatisfiable subset, we perform a translation of $\Phi$ to a set $\Phi_f \subseteq \FO^0$.
The idea is to encode the assignments of a given team directly into the model using new constant symbols $c^x_\delta$ as explained below.
By the above lemma, we can assume that $\Phi$ excludes an infinite set of constants.
Furthermore, \wloss no variable occurs both bound and free in a formula of $\Phi$.
Then
\begin{alignat*}{4}
&\Phi_f \dfn&& &&\;\{ \gamma(c^{x_1}_\delta, \ldots, c^{x_n}_\delta) &&\;|\; \gamma(x_1, \ldots, x_n) \in \Phi\cap\FO, \negg\delta \in \Phi\cap\negg\FO\}\\
& &&\;\cup \; &&\; \{ \neg\delta(c^{x_1}_\delta, \ldots, c^{x_n}_\delta) &&\;| \;\negg\delta(x_1, \ldots, x_n) \in \Phi\cap\negg\FO\}\text{,}
\end{alignat*}
where the $c^x_\delta$ are pairwise distinct constant symbols not occurring in $\Phi$, and where $\alpha(t_1, \ldots, t_n) \dfn \alpha[x_1/t_1]\cdots[x_n/t_n]$.

\begin{lemma}
Let $\Phi \subseteq \FO \cup \negg\FO$. Then $\Phi$ is satisfiable in team semantics if and only if $\Phi_f$ is satisfiable in classical semantics.
\end{lemma}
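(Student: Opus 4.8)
The plan is to read a team-semantic model of $\Phi$ off the interpretations of the fresh witness constants in a classical model of $\Phi_f$, and conversely. First I would dispose of the degenerate case $\Phi \cap \negg\FO = \emptyset$: here both unions defining $\Phi_f$ are empty, so $\Phi_f = \emptyset$ is trivially satisfiable, while the empty team satisfies every formula of $\Phi \subseteq \FO$, all quantifications over its (non-existent) members being vacuous. So assume from now on $\Phi \cap \negg\FO \neq \emptyset$.

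The backbone of both directions is the standard first-order substitution lemma: if $\calM$ expands a $\tau$-structure $\calA$ by interpreting each fresh constant $c^x_\delta$ as some element $a^x_\delta \in A$, and $s$ is an assignment with $s(x) = a^x_\delta$ for the relevant variables, then $\calM \vDash \gamma(c^{x_1}_\delta, \ldots, c^{x_n}_\delta)$ if and only if $(\calA, s) \vDash \gamma$. The standing assumption that no bound variable of a formula shares a name with one of its free variables is exactly what lets this substitution pass through the quantifiers without capture, and I would invoke it here.

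For the direction from team satisfiability to classical satisfiability, suppose $(\calA, T) \vDash \Phi$. For each $\negg\delta \in \Phi \cap \negg\FO$ we have $(\calA, T) \nvDash \delta$; since $T$ must then be nonempty (otherwise $\delta$ would hold vacuously), there is a witness $s_\delta \in T$ with $(\calA, s_\delta) \nvDash \delta$. Because the constants $c^x_\delta$ are indexed by $\delta$, the blocks belonging to distinct negated formulas never collide, so I may expand $\calA$ to $\calM$ by setting $(c^x_\delta)^\calM \dfn s_\delta(x)$. Every positive $\gamma \in \Phi \cap \FO$ is satisfied by each $s_\delta \in T$, hence the substitution lemma gives $\calM \vDash \gamma(c^{x_1}_\delta, \ldots, c^{x_n}_\delta)$; and $(\calA, s_\delta) \nvDash \delta$ gives $\calM \vDash \neg\delta(c^{x_1}_\delta, \ldots, c^{x_n}_\delta)$. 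Thus $\calM \vDash \Phi_f$. Conversely, let $\calM \vDash \Phi_f$. Reading off $s_\delta(x) \dfn (c^x_\delta)^\calM$ for each $\negg\delta$ and letting $\calA$ be the reduct of $\calM$ to the original vocabulary, I put $T \dfn \Set{ s_\delta | \negg\delta \in \Phi \cap \negg\FO }$. The substitution lemma turns $\calM \vDash \gamma(c^{x_1}_\delta, \ldots, c^{x_n}_\delta)$ into $(\calA, s_\delta) \vDash \gamma$ for every positive $\gamma$ and every $\delta$, so all members of $T$ satisfy every positive formula and $(\calA, T) \vDash \gamma$; likewise $\calM \vDash \neg\delta(c^{x_1}_\delta, \ldots, c^{x_n}_\delta)$ yields $(\calA, s_\delta) \nvDash \delta$, so the member $s_\delta \in T$ witnesses $(\calA, T) \nvDash \delta$, that is $(\calA, T) \vDash \negg\delta$. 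Hence $(\calA, T) \vDash \Phi$.

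I expect the only real subtlety to be bookkeeping rather than mathematical depth: keeping the per-$\delta$ blocks of constants genuinely disjoint so that the witnesses $s_\delta$ can be chosen independently of one another, and using the nonemptiness of $T$ (automatic once a negated formula is present) to supply those witnesses. Everything else is the substitution lemma applied blockwise.
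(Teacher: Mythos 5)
Your proposal is correct and follows essentially the same route as the paper's proof: pick a witness $s_\delta \in T$ for each $\negg\delta$, encode it via the constants $c^x_\delta$, and conversely read the team off the constant interpretations, using flatness (union/downward closure) to pass between single assignments and the whole team. The only differences are presentational — you make the degenerate case $\Phi \cap \negg\FO = \emptyset$ and the appeal to the substitution lemma explicit, which the paper leaves implicit.
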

\begin{proof}

Let $\Gamma \dfn \Phi \cap \FO$ and $\Delta \dfn \Phi \cap \negg\FO$.
\Wloss for all $\gamma \in \Gamma$ and $\negg \delta \in \Delta$ the formulas $\gamma$ and $\delta$ are distinct.

"$\Rightarrow$":
Suppose $(\calA,T)\vDash \Phi$ for some first-order structure $\calA$ and team $T$.
Extend $\calA$ to a structure $\calA'$ by interpreting the new constants as follows.
For each $\negg\delta \in \Delta$, there is a non-empty set $S_\delta \dfn \Set{ s \in T | (\calA,s)\nvDash \delta}$.
Using the axiom of choice, we let $s_\delta \in S_\delta$ be fixed and assign $(c^y_\delta)^{\calA'} \dfn s_\delta(y)$ for all $y \in \Var$.
Then $\calA' \vDash \Phi_f$ by the following argument:
\begin{align*}
	&\gamma(c^{x_1}_\delta, \ldots, c^{x_n}_\delta)  \in \Phi_f\\
	\Rightarrow \;&\gamma(x_1, \ldots, x_n) \in \Gamma\\
	\Rightarrow \;&\forall s \in T \,:\, (\calA, s) \vDash \gamma(x_1, \ldots, x_n)\\
	\Rightarrow \;&(\calA, s_\delta) \vDash \gamma(x_1, \ldots, x_n)\\
	\Rightarrow \;&\calA' \vDash \gamma(c^{x_1}_\delta, \ldots, c^{x_n}_\delta)
\end{align*}
and
\begin{align*}
	&\neg\delta(c^{x_1}_\delta, \ldots, c^{x_n}_\delta)  \in \Phi_f\\
	\Rightarrow \;&\negg\delta(x_1, \ldots, x_n) \in \Delta\\
	\Rightarrow \;&(\calA, s_\delta) \nvDash \delta(x_1, \ldots,x_n)\\
	\Rightarrow \;&(\calA, s_\delta) \vDash \neg\delta(x_1, \ldots,x_n)\\
	\Rightarrow \;&\calA' \vDash \neg\delta(c^{x_1}_\delta, \ldots, c^{x_n}_\delta)\text{.}
\end{align*}

\medskip

"$\Leftarrow$": Suppose $\calA \vDash \Phi_f$ for a first-order structure $\calA$.
For every $\negg\delta \in \Delta$, we define an assignments $s_\delta$ by $s_\delta(x) \dfn (c^{x}_\delta)^\calA$.
Then
\begin{align*}
	&\gamma(x_1, \ldots, x_n) \in \Gamma\\
	\Rightarrow\;&\forall\, \negg\delta \in \Delta \,:\; \gamma(c^{x_1}_\delta, \ldots, c^{x_n}_\delta)  \in \Phi_f\\
	\Rightarrow\;&\forall\, \negg\delta \in \Delta \,:\; \calA \vDash \gamma(c^{x_1}_\delta, \ldots, c^{x_n}_\delta)\\
	\Rightarrow\;&\forall\, \negg\delta \in \Delta \,:\; (\calA,\{s_\delta\}) \vDash \gamma(x_1,\ldots,x_n)
\end{align*}
and
\begin{align*}
	&\negg \delta(x_1, \ldots, x_n) \in \Delta\\
	\Rightarrow\;&\neg\delta(c^{x_1}_\delta, \ldots, c^{x_n}_\delta)  \in \Phi_f\\
	\Rightarrow\;&\calA \vDash \neg\delta(c^{x_1}_\delta, \ldots, c^{x_n}_\delta)\\
	\Rightarrow\;&(\calA, s_\delta) \vDash \neg\delta(x_1, \ldots, x_n)\\
	\Rightarrow\;&(\calA, \{s_\delta\}) \vDash \negg\delta(x_1, \ldots, x_n)\text{.}
\end{align*}

Define $T \dfn \Set{ s_\delta | \negg\delta \in \Delta}$.
By contraposition of Proposition~\ref{prop:downward-closure} (downward closure), $(\calA,T) \vDash \negg\delta(x_1,\ldots,x_n)$ for all $\negg\delta \in \Delta$.
By Proposition~\ref{prop:union-closure} (union closure), $(\calA, T)\vDash \gamma(x_1,\ldots,x_n)$ for all $\gamma\in \Gamma$.
Consequently, $\Phi = \Gamma \cup \Delta$ is satisfiable.
\end{proof}

From the above construction we also obtain a generalization of the empty team property (which itself states that every $\Phi \subseteq \FO$ is satisfied by the empty team).

\begin{corollary}
Every satisfiable set $\Phi\subseteq\FO \cup \negg\FO$ is satisfied in a structure with a team of cardinality $\size{\Phi\cap\negg\FO}$.
\end{corollary}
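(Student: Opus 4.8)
The plan is to read the cardinality bound directly off the construction already carried out in the proof of the preceding lemma, so that this corollary is essentially a bookkeeping observation rather than a new argument. Since that lemma establishes that $\Phi$ is satisfiable in team semantics precisely when its first-order translation $\Phi_f$ is satisfiable in classical semantics, I would begin from the hypothesis that $\Phi$ is satisfiable and conclude that $\Phi_f$ has a classical model $\calA = (A, \tau^\calA)$.

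Next I would reproduce the "$\Leftarrow$" direction of that proof: writing $\Gamma \dfn \Phi \cap \FO$ and $\Delta \dfn \Phi \cap \negg\FO$, for each $\negg\delta \in \Delta$ I take the assignment $s_\delta \colon \mathrm{Var} \to A$ given by $s_\delta(x) \dfn (c^x_\delta)^\calA$, and set $T \dfn \Set{ s_\delta | \negg\delta \in \Delta }$. The verification $(\calA, T) \vDash \Phi$ is exactly the one performed there: union closure of $\FO$ (\Cref{thm:union-closure}) yields $(\calA, T) \vDash \gamma$ for every $\gamma \in \Gamma$, while upward closure of $\negg\FO$---a consequence of downward closure of $\FO$ (\Cref{thm:downward-closure})---yields $(\calA, T) \vDash \negg\delta$ for every $\negg\delta \in \Delta$. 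Hence $T$ is a team satisfying all of $\Phi = \Gamma \cup \Delta$.

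The decisive observation for the cardinality claim is then immediate: the team $T$ is indexed by the formulas of $\Delta$, contributing at most one assignment $s_\delta$ per $\negg\delta \in \Delta$, so $\size{T} \leq \size{\Delta} = \size{\Phi \cap \negg\FO}$. There is no real obstacle here; the only point worth a word is that two distinct formulas $\negg\delta$ may be served by the same witness $s_\delta$, so the stated count is an upper bound on $\size{T}$ rather than an exact size. I would finally note that this specializes correctly to the empty team property: when $\Phi \cap \negg\FO = \emptyset$ the team $T$ is empty, recovering that every $\Phi \subseteq \FO$ is satisfied by the empty team.
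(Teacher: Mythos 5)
Your proposal is correct and follows exactly the route the paper intends: the corollary is left without an explicit proof precisely because the cardinality bound is read off the team $T = \{ s_\delta \mid \negg\delta \in \Delta \}$ constructed in the ``$\Leftarrow$'' direction of the preceding lemma, which is what you do. Your added remark that distinct $\negg\delta$ may share a witness, so the count is really an upper bound, is a fair (and accurate) sharpening of the paper's slightly loose phrasing.
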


In particular the team can always be chosen countable.

\begin{lemma}[Compactness of $\FO \cup \negg \FO$]
	If a set $\Phi \subseteq \FO \cup \negg \FO$ is unsatisfiable, then it has a finite unsatisfiable subset.
\end{lemma}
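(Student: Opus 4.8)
The plan is to reduce the statement to the compactness theorem of classical first-order logic, exploiting the satisfiability-preserving translation $\Phi \mapsto \Phi_f$ established in the previous lemma. First I would observe that if $\Phi \subseteq \FO \cup \negg\FO$ is unsatisfiable in team semantics, then by that lemma its first-order translation $\Phi_f$ is unsatisfiable in the classical sense. Classical $\FO$ is compact: this follows from the completeness of $\sfH$ (\Cref{thm:base-completeness}) together with the finiteness of proofs, since an unsatisfiable set classically proves a contradiction and that proof invokes only finitely many premises, which then already form an unsatisfiable subset by soundness. Hence $\Phi_f$ has a finite unsatisfiable subset $\Psi$.

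Next I would pull $\Psi$ back to a finite subset of $\Phi$. Every formula of $\Phi_f$ is built from at most two formulas of $\Phi$: a formula of the shape $\gamma(c^{x_1}_\delta,\ldots,c^{x_n}_\delta)$ arises from the pair $\gamma \in \Phi \cap \FO$ and $\negg\delta \in \Phi \cap \negg\FO$, while a formula $\neg\delta(c^{x_1}_\delta,\ldots,c^{x_n}_\delta)$ arises from $\negg\delta \in \Phi\cap\negg\FO$ alone. Let $\Phi_0 \subseteq \Phi$ collect all such generating formulas ranging over the finitely many members of $\Psi$; then $\Phi_0$ is finite, containing at most $2\size{\Psi}$ formulas.

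Then I would verify that $\Phi_0$ is itself unsatisfiable. Because the constant symbols $c^x_\delta$ are indexed canonically by $x$ and $\delta$ and were chosen outside $\Phi \supseteq \Phi_0$, the translation $(\Phi_0)_f$ is computed with exactly the same constants, so that $(\Phi_0)_f \subseteq \Phi_f$ and, by the construction of $\Phi_0$, every formula of $\Psi$ reappears in $(\Phi_0)_f$, i.e. $\Psi \subseteq (\Phi_0)_f$. A superset of an unsatisfiable set is unsatisfiable, hence $(\Phi_0)_f$ is classically unsatisfiable; applying the previous lemma now to $\Phi_0$ yields that $\Phi_0$ is unsatisfiable in team semantics. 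This $\Phi_0$ is the desired finite unsatisfiable subset of $\Phi$.

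The only point requiring care — the main obstacle, though a mild one — is the bookkeeping in the pull-back step: one must ensure that a formula of $\Psi$ of the $\gamma$-type is actually recovered in $(\Phi_0)_f$, which forces \emph{both} of its generators $\gamma$ and $\negg\delta$ to be placed into $\Phi_0$, and one must check that no reindexing of the fresh constants happens when passing from $\Phi$ to the subset $\Phi_0$. Once the constant indexing is seen to be independent of which subset is translated, the inclusion $\Psi \subseteq (\Phi_0)_f$ is immediate and the argument closes.
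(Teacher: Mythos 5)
Your proposal is correct and follows the same route as the paper: translate to $\Phi_f$, invoke classical compactness, and pull the finite unsatisfiable subset of $\Phi_f$ back to a finite subset of $\Phi$ whose translation still contains it. The paper compresses your careful pull-back step into the single phrase "extend $\Phi'$ to a finite set of the form $\Phi^*_f$"; your explicit bookkeeping about including both generators $\gamma$ and $\negg\delta$ and about the canonical indexing of the constants $c^x_\delta$ is exactly what justifies that phrase.
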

\begin{proof}
Let $\Phi$ be unsatisfiable and infinite.
By the previous lemma, $\Phi_f$ is unsatisfiable as well.
By the compactness property of classical first-order logic, there exists a finite unsatisfiable subset $\Phi' \subseteq \Phi_f$.
It is now easy to show that there exists a finite set $\Phi^* \subseteq \Phi$ such that $\Phi' \subseteq (\Phi^*)_f$.
As $\Phi^*$ is then unsatisfiable, this proves the lemma.
\end{proof}

We are now ready to prove that $\calB(\FO)$ has a sound and complete proof system.

\begin{lemma}
	The system $\sfH\sfU\sfL$ is refutation complete for $\FO \cup \negg \FO$.
\end{lemma}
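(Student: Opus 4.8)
The plan is to reduce an arbitrary unsatisfiable subset of $\FO \cup \negg\FO$ to a \emph{purely negative} set, so that \Cref{thm:negg-fo-completeness} can be applied directly. Let $\Phi \subseteq \FO \cup \negg\FO$ be unsatisfiable. First I would invoke the compactness property \Cref{thm:compactness-fo-neggfo} to pass to a finite unsatisfiable subset $\Phi' \subseteq \Phi$; since any formula derivable from $\Phi'$ is derivable from the larger $\Phi$, and since inconsistency is equivalent to deriving some $\varphi, \negg\varphi$ by \Cref{thm:derive-contradiction}, it suffices to prove $\Phi'$ inconsistent. Write $\Phi' = \Gamma \cup \Delta$ with $\Gamma \dfn \Phi' \cap \FO$ and $\Delta \dfn \Phi' \cap \negg\FO = \Set{\negg\delta_1, \ldots, \negg\delta_m}$. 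Because $\Gamma$ is now finite I may form the single $\FO$-formula $\gamma \dfn \bigwedge \Gamma$ (reading $\gamma \dfn \top$ if $\Gamma = \emptyset$) and build the purely negative set $\Delta^* \dfn \Set{ \negg(\gamma \imp \delta_i) | 1 \le i \le m} \subseteq \negg\FO$.

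Next I would check that $\Gamma \cup \Delta$ and $\Delta^*$ are equisatisfiable. For one direction, any $(\calA, T)$ satisfying $\Gamma \cup \Delta$ has every member satisfy $\gamma$ and, for each $i$, some member falsify $\delta_i$; that member then falsifies $\gamma \imp \delta_i$, so $(\calA, T) \vDash \Delta^*$. Conversely, from a model of $\Delta^*$ I collect for each $i$ a witness $s_i$ with $s_i \vDash \gamma$ and $s_i \nvDash \delta_i$; the finite team $\Set{s_1, \ldots, s_m}$ then satisfies all of $\Gamma$ by flatness (each member satisfies $\gamma$) and falsifies each $\delta_i$, hence satisfies $\Gamma \cup \Delta$. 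Taking the contrapositive, unsatisfiability of $\Phi'$ yields unsatisfiability of $\Delta^*$, so \Cref{thm:negg-fo-completeness} makes $\Delta^*$ inconsistent in $\sfH\sfU\sfL$.

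Finally I would derive $\Delta^*$ from $\Phi'$, which forces $\Phi'$ itself to be inconsistent. For each $i$, the premises $\Gamma$ prove $\gamma$ in $\sfH$ (a finite conjunction), and from $\gamma$ together with the premise $\negg\delta_i$ the $\sfL$-theorem $\gamma \timp (\negg\delta_i \timp \negg(\gamma \timp \delta_i))$ --- a Boolean tautology provable by \Cref{thm:completeness-bool} --- yields $\negg(\gamma \timp \delta_i)$ by two applications of \Deriv{(E$\timp$)}. The bridging axiom \Deriv{(L4)}, namely $(\gamma \imp \delta_i) \timp (\gamma \timp \delta_i)$, together with its $\sfL$-provable contraposition $\negg(\gamma \timp \delta_i) \timp \negg(\gamma \imp \delta_i)$, then delivers $\negg(\gamma \imp \delta_i)$. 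Thus $\Phi' \vdash \Delta^*$, and since $\Delta^*$ is inconsistent so is $\Phi'$. The main obstacle is precisely this interface between the material implication $\imp$ of the embedded first-order logic and the team-logical implication $\timp$ --- the single axiom \Deriv{(L4)} is the only bridge between them --- together with correctly folding the positive constraints $\Gamma$ into each negated sentence so that the witness-team argument closes; along the way one must also confirm that $\sfH\sfU\sfL$ enjoys the deduction theorem via \Cref{thm:monotone-systems}, the only non-modus-ponens rule \Deriv{(UG$\forall$)} producing theorems only.
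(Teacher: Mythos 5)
Your proposal is correct and follows essentially the same route as the paper: compactness to pass to a finite subset, collapsing $\Gamma$ into a single conjunction $\gamma$, adjoining $\gamma$ to each negated formula to obtain a purely negative set (your $\negg(\gamma \imp \delta_i)$ is just the paper's $\negg(\delta_i \lor \neg\gamma)$ in different notation), showing that set is both derivable from $\Phi'$ and unsatisfiable, and then invoking the refutation completeness of $\sfH\sfU\sfL$ for $\negg\FO$. The only differences are cosmetic, such as your more explicit use of \Deriv{(L4)} and the $\sfL$-tautology in the syntactic step.
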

\begin{proof}
We have to show that any unsatisfiable $\Phi \subseteq \FO \cup \negg \FO$ is inconsistent.
However, if such $\Phi$ is unsatisfiable, then by the previous lemma, already some finite $\Phi' \subseteq \Phi$ is unsatisfiable.
Let $\Gamma \dfn \Phi' \cap \FO$ and $\Delta \dfn \Phi' \cap \negg \FO$.
As $\Gamma$ is finite, by completeness of $\sfH$, \wloss $\Gamma = \{\gamma\}$.
The following set $\Delta^\gamma \subseteq \negg \FO$ "adjoins" $\gamma$ to all formulas in $\Delta$:
\[
\Delta^\gamma \dfn \Set{ \negg(\neg \gamma \lor \delta) | \negg\delta \in \Delta} \equiv \Set{ \E(\gamma \land \neg \delta) | \negg \delta \in \Delta }
\]

The remainder of the proof shows that $\{\gamma\}\cup\Delta \vdash \Delta^\gamma$ and that $\Delta^\gamma$ is unsatisfiable.
As $\sfH\sfU\sfL$ is refutation complete for $\negg \FO$ by Lemma~\ref{lem:negg-fo-completeness}, then $\Delta^\gamma$ and consequently $\Phi$ is inconsistent.

As $\{\gamma,\neg\gamma \lor \delta\} \vdash_\sfH \delta$, we have $\Phi \vdash \{\gamma, \negg \delta\} \vdash_{\sfH\sfL} \negg(\neg\gamma\lor\delta)$ for all $\negg(\delta\lor\neg\gamma) \in \Delta^\gamma$.

 Next, assume for the sake of contradiction that $\Delta^\gamma$ is satisfiable, say, in $(\calA, T)$ for a first-order structure $\calA$ and team $T$.
For each $\negg\delta\in\Delta$, there is $s \in T$ such that $(\calA,s) \nvDash \neg\gamma \lor\delta $, \ie, $(\calA, s)\vDash \gamma\land \neg\delta $.
However, if $T' \dfn \Set{ s \in T | (\calA,s) \vDash \gamma }$, then $(\calA,T') \vDash \gamma$ by Proposition~\ref{prop:union-closure} and $(\calA,T')\vDash \Delta$ by Proposition~\ref{prop:downward-closure}, which contradicts the unsatisfiability of $\{\gamma\} \cup \Delta$.
\end{proof}

\begin{theorem}\label{thm:completeness-b-fo}
$\sfH\sfU\sfL$ is sound and complete for $\calB(\FO)$.
\end{theorem}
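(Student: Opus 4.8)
The plan is to assemble this theorem from the machinery already in place, handling soundness and completeness separately; almost all of the real work has been done in the preceding lemmas, so the proof is mostly bookkeeping. For soundness, I would decompose $\sfH\sfU\sfL$ into its three parts. The subsystem $\sfH$ acts only on classical $\FO$ formulas and is sound for $\FO$ in team semantics by \Cref{thm:base-completeness-team}; since it has \Deriv{(E$\imp$)}, the earlier soundness theorem for $\Omega\sfL$ already gives that $\sfH\sfL$ is sound for $\calB(\FO)$. The one genuinely new axiom is the unanimity axiom $\sfU$, whose soundness I would verify by hand.

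For the soundness of $\sfU$ I would use flatness together with the restriction of $\alpha$ to sentences $\FO^0$. Fix a valuation $(\calA, T)$ and a sentence $\alpha$. If $T = \emptyset$, the implication $\negg\alpha \timp \neg\alpha$ holds vacuously, since the empty team satisfies every flat formula and hence $(\calA, \emptyset) \nvDash \negg\alpha$. If $T \neq \emptyset$, then because $\alpha$ has no free variables its truth value does not depend on the assignment, so $(\calA, T) \vDash \alpha$ iff $\calA \vDash \alpha$. Hence $(\calA, T) \vDash \negg\alpha$ iff $\calA \nvDash \alpha$ iff $\calA \vDash \neg\alpha$ iff $(\calA, T) \vDash \neg\alpha$, and in particular $\negg\alpha \timp \neg\alpha$ is valid. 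The restriction to sentences is what makes this work: for a formula with free variables the assignments in $T$ may disagree on $\alpha$, and $\negg\alpha$ and $\neg\alpha$ would no longer coincide.

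For completeness I would invoke the abstract criterion \Cref{thm:completeness-of-L}, which yields that $\Omega\sfL$ is complete for $\calB(\calF)$ as soon as it is refutation complete for $\calF \cup \negg\calF$ and has weakening. I take $\Omega = \sfH\sfU$. Weakening follows from \Cref{thm:monotone-systems}: the only inference rule of $\sfH\sfU\sfL$ other than \Deriv{(E$\imp$)} and \Deriv{(E$\timp$)} is \Deriv{(UG$\forall$)}, which produces only theorems, so that lemma applies and its proof establishes weakening (and the deduction theorem). Refutation completeness of $\sfH\sfU\sfL$ for $\FO \cup \negg\FO$ is precisely the immediately preceding lemma. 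Both hypotheses of \Cref{thm:completeness-of-L} are therefore met, and it delivers completeness of $\sfH\sfU\sfL$ for $\calB(\FO)$.

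The main obstacle is not this final step but the refutation-completeness lemma it rests on. Since $\FO$ lacks counter-model merging --- two contradictory sentences such as $R(c)$ and $\neg R(c)$ admit separate counter-models but no common one --- the argument used for $\PL$ and $\ML$ via \Cref{thm:counter-models-ml-pl} is unavailable, and the unanimity axiom together with the compactness of $\FO \cup \negg\FO$ (\Cref{thm:compactness-fo-neggfo}) had to take its place. Once that lemma is secured, the present theorem is a matter of combining it with \Cref{thm:completeness-of-L}.
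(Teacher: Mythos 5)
Your proposal is correct and follows essentially the same route as the paper: soundness of $\sfH$ and $\sfL$ is inherited from the earlier results, the unanimity axiom is verified directly using that sentences have assignment-independent truth values, and completeness is obtained by combining weakening (via \Cref{thm:monotone-systems}) with refutation completeness for $\FO \cup \negg\FO$ through \Cref{thm:completeness-of-L}. Your treatment of the empty-team case and the explicit equivalence $\negg\alpha \tequiv \neg\alpha$ on non-empty teams is slightly more detailed than the paper's one-directional argument, but the substance is identical.
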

\begin{proof}
\Deriv{U} is sound by Lemma~\ref{lem:u-soundness}.
The systems $\sfH$ and $\sfL$ are easily checked sound.
By Theorem~\ref{thm:completeness-of-L} and the above lemma, $\sfH\sfU\sfL$ is complete.
\end{proof}

\section{Quantifier elimination}

\label{sec:qbf}

As shown in Figure~\ref{fig:foquant}, quantifiers (both propositional and first-order) are axiomatizable in a similar fashion as the modalities.
$\forall$ behaves like $\Box$, and $\exists$ behaves like $\Diamond$.
For this reason, all proofs in the spirit of Lemma~\ref{lem:meta-mtl} and \ref{lem:mtl-laws} go through for $\sfQ$ as well.
The axioms \Deriv{(I$\Box$)} and \Deriv{(I$\forall$)} differ slightly due to the fact that a duplicating team is always a supplementing team, whereas an image team is not always a successor team.

\begin{figure}[b!]
	\centering
	\begin{tabular}{ccl}
		\toprule
		\Deriv{(Lin$\forall$)}&$\forall x \negg \varphi \tequiv \negg\forall x \varphi $&The duplicating team is unique.\\
		\Deriv{(F$\exists$)}&$\exists x \alpha \tequiv \neg \forall\neg\alpha$&Flatness of $\exists$.\\
		\Deriv{(D$\exists\tens$)}&$\exists x (\varphi \tens \psi) \tequiv \exists x\varphi \tens \exists x \psi$&$\exists$ distributes over splitting.\\
		\Deriv{(E$\forall$)}&$\forall x\alpha \timp \shriek x \alpha$&Supplementing teams are subteams\\
		&& of the duplicating team.\\
		\Deriv{(I$\forall$)}&$\shriek x\psi \timp \forall x \psi$&The duplicating team is a \\
		&& supplementing team.\\
		\Deriv{(Dis$\forall$)}&$\forall x (\varphi \timp \psi) \timp (\forall x \varphi \timp \forall x \psi)$&Distribution axiom\\
		\Deriv{(Dis$\shriek$)}&$\shriek x(\varphi \timp \psi) \timp (\shriek x \varphi \timp \shriek x\psi)$&Distribution axiom\\
		\midrule
		\Deriv{(UG$\shriek$)}
		&\begin{bprooftree}
\AxiomC{$\varphi$}
\RightLabel{\small{}($\varphi$ theorem)}
\UnaryInfC{$\shriek x \varphi$}
\end{bprooftree}
&Universal generalization\\
		\bottomrule
	\end{tabular}
	\caption{The system $\sfQ$}\label{fig:foquant}
\end{figure}

\begin{theorem}\label{thm:soundness-fo}
$\sfH^1\sfL\sfS\sfQ$ is sound for $\QPTL$ and $\sfH\sfU\sfL\sfS\sfQ$ is sound for $\FO(\negg)$.
\end{theorem}
\begin{proof}
The proof is straightforward and can be found in the appendix.
\end{proof}

In this section, we regard $\forall x$ and $\shriek x$ as infinitely many unary connectives and prove elimination in the spirit of Definition~\ref{def:f-elim}.
Moreover, we refer to both propositional and first-order variables simply as "variables".

As for the other logics, we require a substitution lemma for the new logical connectives:

\begin{lemma}\label{lem:fo-meta}
Let $\Omega\succeq \sfL\sfS\sfQ$.
Then $\Omega$ has substitution in $\timp$, $\negg$, $\limp$, $\forall x$ and $\shriek x$.
Furthermore, $\Omega$ admits the following meta-rules:
\begin{itemize}
	\item Modus ponens in $\forall$ \Deriv{(MP$\forall$)}:
	If $\vdash \varphi \timp \psi$ and $\Phi \vdash \forall x\, \varphi$, then $\Phi \vdash \forall x \psi$.
	\item Modus ponens in $\shriek$ \Deriv{(MP$\shriek$)}:
	If $\vdash \varphi \timp \psi$ and $\Phi \vdash \shriek x\, \varphi$, then $\Phi \vdash \shriek x \psi$.
	\item Modus ponens in $\exists$ \Deriv{(MP$\exists$)}:
	If $\vdash \varphi \timp \psi$ and $\Phi \vdash \exists x\, \varphi$, then $\Phi \vdash \exists x \psi$.
\end{itemize}
\end{lemma}
\begin{proof}
Proven similarly to Lemma~\ref{lem:meta-mtl}.
\end{proof}

A logic $\calF$ is \emph{closed under $\forall$} if for every $\varphi \in \calF$ and variable $x$ we have $\forall x\, \varphi \in \calF$.

\begin{lemma}[$\forall x$-elimination]\label{lem:forall-elim}
Let $\calF$ be a logic closed under $\forall$. Let $\Omega \succeq \sfL\sfS\sfQ$.
Then $\calB(\calF)$ has $\forall x$-elimination in $\Omega$.
\end{lemma}
\begin{proof}
Proven as for $\Box$ in Lemma~\ref{lem:step-box-mtl} with the axioms $\sfQ$ instead of $\sfM$;
we simply "push" $\forall x$ inside $\calF$-subformulas through the enclosing $\negg$ and $\timp$.
\end{proof}

Likewise, $\shriek x$-elimination is essentially proven similarly as $\triangle$-elimination using $\sfQ$ instead of $\sfM$ and $\sfH$ resp.\ $\sfH^1$ instead of $\sfH^\Box$.

\begin{lemma}[$\shriek x$-elimination]\label{lem:shriek-elim}
Let $\calF$ be a logic closed under $\neg, \lor, \land$ and $\forall$. If $\Omega\succeq \sfH^1\sfL\sfS\sfQ$ or $\Omega \succeq \sfH\sfL\sfS\sfQ$, then $\calB(\calF)$ has $\shriek x$-elimination in $\Omega$.
\end{lemma}
\begin{proof}
Proven similarly to Lemma~\ref{lem:step-diamond-mtl}.
\end{proof}

\begin{theorem}\label{thm:qfo-to-bfo}
If $\varphi \in \FO(\negg)$, then there is $\psi \in \calB(\FO)$ such that $\varphi \eqpr_{\sfH\sfL\sfS\sfQ} \psi$.
If $\varphi \in \QPTL$, then there is $\psi \in \calB(\QPL)$ such that $\varphi \eqpr_{\sfH^1\sfL\sfS\sfQ} \psi$.
\end{theorem}
\begin{proof}
Proven analogously to Theorem~\ref{thm:mtl-is-equiv-to-bml} by using Lemma~\ref{lem:forall-elim} and \ref{lem:shriek-elim}.
\end{proof}

Similarly as for $\PTL$ and $\MTL$, we lift the completeness results for $\calB(\FO)$ and $\calB(\QPL)$ up to the full logic.

\begin{theorem}\label{thm:qfo-completeness}
$\sfH\sfU\sfL\sfS\sfQ$ is sound and complete for $\FO(\negg)$.
$\sfH^1\sfL\sfS\sfQ$ is sound and complete for $\QPTL$.
As a consequence, $\FO(\negg)$ and $\QPTL$ are axiomatizable and compact.
\end{theorem}
\begin{proof}
We have completeness for $\calB(\QPL)$ resp.\ $\calB(\FO)$ by Corollary~\ref{cor:completeness-bpl-bml} and Theorem~\ref{thm:completeness-b-fo}, and soundness by Theorem~\ref{thm:soundness-fo}.
Combining Theorem~\ref{thm:qfo-to-bfo} and  Lemma~\ref{lem:translate-completeness} proves completeness for $\FO(\negg)$ and $\QPTL$.
\end{proof}

\begin{corollary}
Both the validity problem and the entailment problem of $\FO(\negg)$ are complete for $\Sigma^0_1$, the class of recursively enumerable sets.
\end{corollary}

We also obtain a proof of Galliani's theorem \cite{Galliani14}, which states that closed $\FO(\negg)$-formulas are only as expressive as $\FO$-sentences (on non-empty teams).

\begin{theorem}
If $\varphi \in \FO(\negg)$ is closed, then $\vdash \NE \timp (\varphi \tequiv \alpha)$ for some $\alpha \in \FO^0$.
\end{theorem}
\begin{proof}
First note that the presented proofs of $\limp$-, $\forall x$- and $\shriek x$-elimination of $\calB(\FO)$ do not introduce new free variables.
Consequently, by Theorem~\ref{thm:qfo-to-bfo}, \wloss $\varphi \in \calB(\FO^0)$.
Next, we obtain $\alpha$ from $\varphi$ by replacing every occurrence of $\negg$ with $\neg$ and $\timp$ with $\imp$.
By completeness, it suffices to show that $\varphi$ and $\alpha$ are equivalent on non-empty teams.
This is by induction: on non-empty teams, $\negg\alpha'\equiv \neg\alpha'$ for all $\alpha' \in \FO^0$ by Lemma~\ref{lem:u-soundness}, and similarly, $\alpha' \timp \alpha'' \equiv \negg (\alpha' \owedge \negg \alpha'') \equiv \neg (\alpha' \land \neg \alpha'') \equiv \alpha' \imp \alpha''$.
\end{proof}
 
\section{Dependence, independence, inclusion and exclusion logic}\label{sec:fragments}

In (quantified) propositional and modal team logic, we can express atoms of dependence, independence, inclusion and exclusion in terms of other operators.
For this reason, $\QPTL$ and $\MTL$ in fact subsume a whole family of logics of dependence and independence, each obtained by adding one or more logical atoms to modal logic with team semantics.
In what follows, let $\calF$ denote $\PL$, $\ML$, or $\QPL$, respectively.

The first one is the \emph{propositional/modal dependence atom} $\dep{\vec{\alpha},\beta}$ \cite{yang_propositional_2016,vaananen_modal_2008,emdl}, where $\vec{\alpha} = (\alpha_1,\ldots,\alpha_n)$, $n \geq 0$, and $\alpha_1,\ldots,\alpha_n,\beta \in \calF$.
Next, Kontinen et~al.~\cite{mind} introduced the atom $\vec\alpha \indep_{\vec \beta} \vec \gamma$ as an equivalent to the first-order \emph{independence atom}~\cite{gradel2013dependence}.
Here, $\vec\alpha,\vec\beta,\vec\gamma$ are finite sequences of $\calF$-formulas with $\vec\alpha,\vec\gamma$ non-empty.
Finally, analogously to Galliani~\cite{galliani_inclusion_2012}, the \emph{inclusion atom} and \emph{exclusion atom} are $\vec{\alpha} \subseteq \vec{\beta}$ and $\vec{\alpha} \exclusion \vec{\beta}$~\cite{HellaS15}, where $\size{\vec\alpha} = \size{\vec\beta} > 0$.

\medskip

The semantics of these atoms is defined in terms of truth vectors.
If $\vec \alpha = (\alpha_1,\ldots,\alpha_n)$ and $s$ is an element of a team, then let $s(\vec \alpha) \dfn (s(\alpha_1),\ldots,s(\alpha_n))$, where $s(\alpha_i) \dfn 1$ if $s \vDash \alpha_i$, and $s(\alpha_i) \dfn 0$ otherwise.
For a team $T$ of propositional assignments, then
\begin{alignat*}{2}
    &T \vDash \dep{\vec\alpha,\beta} \quad && \Leftrightarrow \quad \forall s, s' \in T : s(\vec\alpha) = s'(\vec\alpha) \Rightarrow s(\beta) = s'(\beta),\\
    &T\vDash \vec\alpha \perp_{\vec \beta} \vec\gamma \quad  && \Leftrightarrow\quad  \forall s,s' \in T : s(\vec\beta) = s'(\vec\beta) \Rightarrow\\
    & && \qquad\qquad \exists s'' \in T \; s(\vec\alpha\vec\beta) = s''(\vec\alpha\vec\beta) \text{ and } s'(\vec\beta\vec\gamma) = s''(\vec\beta\vec\gamma),\\
    &T \vDash \vec\alpha \subseteq \vec \beta \quad && \Leftrightarrow \quad \forall s \in T \; \exists s' \in T : s(\vec\alpha) = s'(\vec\beta)\text{,}\\
    &T \vDash \vec\alpha \mid \vec \beta \quad && \Leftrightarrow \quad \forall s, s' \in T  : s(\vec\alpha) \neq s'(\vec\beta)\text{.}
\end{alignat*}
For teams in a Kripke structure, the definitions are analogous.

\medskip

Based on the dependence atom, the \emph{modal dependence logic} $\MDL$ \cite{vaananen_modal_2008} has the syntax
\[
\varphi \ddfn \alpha \mid \varphi \tens \varphi \mid \varphi \oland \varphi \mid \Diamond \varphi \mid \Box \varphi \mid \dep{p_1,\ldots,p_n,q}\text{,}
\]
where $n \geq 1$, $p_1,\ldots,p_n,q \in \PS$, and where $\alpha$ is an $\ML$-formula in negation normal form, \ie, with $\neg$ only occurring in front of propositional symbols $p \in \PS$.

If we allow arbitrary $\ML$-formulas in $\depop$~\cite{emdl}, then the above grammar generates \emph{extended modal dependence logic} $\EMDL$ \cite{emdl}, and $\dep{\alpha_1,\ldots,\alpha_n,\beta}$ is then called \emph{extended dependence atom}.

By analogously adding the independence atom $\vec{\alpha} \indep_{\vec{\beta}} \vec{\gamma}$, we obtain \emph{modal independence logic} $\MIL$~\cite{mind}, and with
the inclusion atom $\vec{\alpha} \subseteq \vec{\beta}$, we have modal inclusion logic $\MINC$~\cite{HellaS15}.
Adding both the inclusion and exclusion atom results in \emph{modal inclusion/exclusion logic} $\MINCEX$, the modal analogon to Galliani's I/E-logic~\cite{galliani_inclusion_2012}.

\smallskip

The modality-free fragments of $\EMDL$, $\MIL$, $\MINC$ and $\MINCEX$ are \emph{propositional dependence logic} $\PDL$~\cite{yang_propositional_2016},
\emph{propositional independence logic} $\PIL$~\cite{hannula_complexity_2015},
\emph{propositional inclusion logic} $\PINC$~\cite{hannula_complexity_2015}, and
\emph{propositional inclusion/exclusion logic} $\PINCEX$, respectively.
Adding propositional quantifiers $\exists p$ and $\forall p$ to these fragments yields \emph{quantified propositional dependence logic} $\QPDL$~\cite{gandalf},
\emph{quantified propositional independence logic} $\QPIND$~\cite{gandalf},
\emph{quantified propositional inclusion logic} $\QPINC$~\cite{gandalf} and
\emph{quantified propositional inclusion/exclusion logic} $\QPINCEX$.

Figure~\ref{fig:dep-atoms} depicts the axiom system $\sfD$ that defines the above atoms in terms of propositional/modal team logic.
Let us abbreviate $\mathbf{2}^n \dfn \{0,1\}^n$, \ie, the set of all $n$-ary truth vectors. If $\vec s \in \{0,1\}^n$, then the formula $\vec \alpha = \vec s\,$ is shorthand for
\begin{align*}
\bigwedge^n_{\substack{i=1 \\s_i = \,1}} \alpha_i \land \bigwedge^n_{\substack{i=1\\s_i = \,0}} \neg\alpha_i\text{.}
\end{align*}

\begin{figure}[t]
\centering
\scalebox{0.95}{
\begin{tabular}{rlc}
\toprule
$\dep{\vec \alpha, \beta} \; $&$\tequiv \; \top \limp \big((\bigoland\limits_{i=1}^{\size{\vec\alpha}} \dep{\alpha_i}) \timp \dep{\beta}\big)$ \quad   & \Deriv{(D1)}\\
$\dep{\beta} \; $&$\tequiv \; \beta \ovee \neg \beta$ \quad  & \Deriv{(D2)}\vspace{10pt}\\
$\vec{\alpha} \indep_{\vec{\beta}} \vec{\gamma} \; $&$\tequiv \; \bigtens\limits_{\vec s \in \mathbf{2}^{\size{\vec \beta}}} (\vec\beta=\vec s \oland \vec{\alpha} \indep \vec{\gamma})$ \quad & \Deriv{(Ind1)}\\
$\vec{\alpha} \indep \vec{\beta} \; $&$\tequiv \bigoland\limits_{\substack{\vec{s} \in \mathbf{2}^{\size{\vec\alpha}}\\ \vec{t} \in \mathbf{2}^{\size{\vec\beta}}}} \E \left(\vec{\alpha} = \vec{s}\,\right) \timp \E \left(\vec{\beta} = \vec{t}\, \right) \timp \E \left(\vec{\alpha} = \vec{s} \land \vec{\beta} = \vec{t}\, \right)$&\Deriv{(Ind2)}\vspace{10pt}\\
$\vec{\alpha} \subseteq \vec{\beta} $&$\tequiv \bigoland\limits_{\vec{s} \in \mathbf{2}^{\size{\vec\alpha}}} \E \left(\vec{\alpha} = \vec{s}\right) \timp \E \left(\vec{\beta} = \vec{s}\right)$ \quad & \Deriv{(Inc)}\vspace{10pt}\\
$\vec{\alpha} \,\, | \,\, \vec{\beta} $&$\tequiv \bigoland\limits_{\vec{s} \in \mathbf{2}^{\size{\vec\alpha}}} \E \left(\vec{\alpha} = \vec{s}\,\right) \timp \neg\! \left(\vec{\beta} = \vec{s}\,\right)$ \quad & \Deriv{(Exc)}\vspace{10pt}\\
\bottomrule
\end{tabular}}
\caption{The system $\sfD$\label{fig:dep-atoms}}
\end{figure}

\begin{theorem}
Let $\calL \in \{\PTL, \QPTL, \MTL\}$.
Let $\calL'$ be the extension of $\calL$ by $\depop, \, \indep, \subseteq$ and $\,\mid$.
Then $\calL'$ has a sound and complete proof system.
\end{theorem}
\begin{proof}
We show that there is a proof system $\Omega$ that is sound for $\calL'$, complete for $\calL$, and in which every $\calL'$-formula is provably equivalent to a $\calL$-formula.
This implies completeness by Lemma~\ref{lem:translate-completeness}.

Let $\Omega$ be the system $\sfH^0\sfL\sfS\sfD$, $\sfH^1\sfL\sfS\sfQ\sfD$ or $\sfH^\Box\sfL\sfS\sfM\sfD$, respectively.
$\Omega$ is sound for $\calL'$, and as it is a conservative extension, it admits substitution and has the deduction theorem.
As $\Omega$ can eliminate $\depop,\indep,\subseteq$ and $\exclusion$ in $\sfD$, the theorem follows.
\end{proof}

\begin{corollary}
The logics $\PDL$, $\PIL$, $\PINC$, $\PINCEX$,
$\QPDL$, $\QPIL$, $\QPINC$, $\QPINCEX$,
$\MDL$, $\EMDL$, $\MIL$, $\MINC$ and $\MINCEX$ are axiomatizable and compact.
\end{corollary}

\section{Conclusion}

\tikzset{Log/.style={circle,draw=black,fill=white,minimum width=8mm}}

\begin{figure}[b!]
\centering
\begin{tikzpicture}[->,>=latex]
\node[Log, fill=black!10] (FO) at (0,0) {};
\node[Log, fill=black!10] (QFO) at (-2cm, 2cm) {};
\node[Log, fill=black!40] (D) at (2cm, 2cm) {};
\node[Log, fill=black!70] (TL) at (0cm, 4cm) {};

\node[below of = FO] (LFO) {$\FO$};
\node[left = 0.5cm of QFO] (LQFO) {$\FO(\negg)$};
\node[right = 0.2cm of D] (LD) {$\D$};
\node[above = 0.2cm of TL] (LTL) {$\TL$};

\node[below left = -1.2cm and -0cm of LFO] (AFO) {\begin{tabular}{c}\small{}\emph{$\Sigma^0_1$-complete}\\ \small{}\emph{axiomatizable}\\ \small{}\emph{compact}\end{tabular}};
\node[above left = .2cm and -1cm of LQFO] (AQFO) {\begin{tabular}{c}\small{}\emph{$\Sigma^0_1$-complete}\\ \small{}\emph{axiomatizable} \\ \small{}\emph{compact} \end{tabular}};
\node[above right = .2cm and -1cm of LD] (AD) {\begin{tabular}{c}\small{}\emph{non-arithmetic}\\ \small{}\emph{not axiomatizable}\\ \small{}\emph{compact}\end{tabular}};
\node[above = 0.3 cm of LTL] (ATL) {\begin{tabular}{c}\small{}\emph{non-arithmetic}\\ \small{}\emph{not axiomatizable}\\ \small{}\emph{not compact}\end{tabular}};

\draw[dotted,rounded corners=15pt,black,rotate around={45:(-2cm,2cm)}] (-2.5cm,2.5cm) rectangle (1.32cm,1.5cm);
\draw[dotted,rounded corners=15pt,black,rotate around={45:(0cm,0cm)}] (-0.5cm,-0.5cm) rectangle (3.32cm,0.5cm);

\draw[draw,every node/.style={sloped,anchor=south,auto=false,inner sep=2pt},->]
    (FO) edge[->] node {\small{}$ \mathbf{\negg}$} (QFO)
    (FO) edge node {\small{}$\depop$} node[inner sep=6mm,anchor=north] {{\small{}\emph{downward closed}}} (D)
    (QFO) edge node[anchor=north] {\small{}$\depop$} node[inner sep=5.2mm] {\small{}\emph{negation closed}} (TL)
    (D) edge node[anchor=north] {\small{}$\negg$} (TL);
\end{tikzpicture}
\caption{Fragments of Väänänen's team logic $\TL$.
The arrows indicate that $\negg$ resp.\ $\depop$ is added to the syntax.}\label{fig:conclusion}
\end{figure}

\begin{figure}
\centering
\begin{tabular}{lll}\toprule
$\sfL$&\Deriv{(L1)}&$\varphi \timp (\psi \timp \varphi)$\\
&\Deriv{(L2)}&$(\varphi \timp (\psi \timp \vartheta)) \timp (\varphi \timp \psi) \timp (\varphi \timp \vartheta)$\\
&\Deriv{(L3)}&$(\negg\varphi \timp\negg\psi)\timp(\psi \timp\varphi)$\\
&\Deriv{(L4)}&$(\alpha\imp\beta)\timp (\alpha\timp\beta)$\vspace{5pt}\\
&\Deriv{(E$\timp$)}&\begin{bprooftree}
\AxiomC{$\varphi$}
\AxiomC{$\varphi \timp \psi$}
\BinaryInfC{$\psi$}
\end{bprooftree}\\
\midrule
$\sfS$&\Deriv{(F$\tens$)}&$(\alpha\tens\beta) \tequiv (\alpha\lor\beta)$\\
&\Deriv{(F$\limp$)}&$\alpha \timp (\varphi \limp \alpha)$\\
&\Deriv{(Lax)}&$\varphi \timp (\varphi \limp \psi) \timp (\vartheta \limp \psi)$\\
&\Deriv{(Ex$\limp$)}&$(\varphi \limp \psi \limp \vartheta) \timp (\psi \limp \varphi \limp \vartheta)$\\
&\Deriv{(C$\limp$)}&$(\varphi \limp \negg\psi) \timp (\psi \limp\negg \varphi)$\\
&\Deriv{(Dis$\limp$)}&$(\varphi \limp (\psi \timp \vartheta)) \timp (\varphi \limp \psi) \timp (\varphi \limp \vartheta)$\vspace{5pt}\\
&\Deriv{(Nec$\limp$)}&\begin{bprooftree}
\AxiomC{$\varphi$}
\RightLabel{\small{}($\varphi$ theorem)}
\UnaryInfC{$\psi \limp \varphi$}
\end{bprooftree}\\
\midrule
$\sfM$&\Deriv{(Lin$\Box$)}&$\Box \negg \varphi \tequiv \negg\Box\varphi $\\
&\Deriv{(F$\Diamond$)}&$\Diamond\alpha \tequiv \neg \Box\neg\alpha$\\
&\Deriv{(D$\Diamond\tens$)}&$\Diamond (\varphi \tens \psi) \tequiv \Diamond\varphi \tens \Diamond \psi$\\
&\Deriv{(E$\Box$)}&$\Box\alpha \timp \triangle \alpha$\\
&\Deriv{(I$\Box$)}&$\Diamond \varphi \timp (\triangle\psi \timp \Box \psi)$\\
&\Deriv{(Dis$\Box$)}&$\Box (\varphi \timp \psi) \timp (\Box \varphi \timp \Box\psi)$\\
&\Deriv{(Dis$\triangle$)}&$\triangle (\varphi \timp \psi) \timp (\triangle \varphi \timp \triangle\psi)$\vspace{5pt}\\
&\Deriv{(Nec$\Box$)}
&\begin{bprooftree}
\AxiomC{$\varphi$}
\RightLabel{\small{}($\varphi$ theorem)}
\UnaryInfC{$\Box \varphi$}
\end{bprooftree}\vspace{10pt}\\
&\Deriv{(Nec$\triangle$)}
&\begin{bprooftree}
\AxiomC{$\varphi$}
\RightLabel{\small{}($\varphi$ theorem)}
\UnaryInfC{$\triangle \varphi$}
\end{bprooftree}\\
\midrule
$\sfU$&\Deriv{(U)}&$\negg\alpha \timp \neg \alpha$ \quad {\small{}($\alpha$ sentence)}\\
\midrule
$\sfQ$&\Deriv{(Lin$\forall$)}&$\forall x \negg \varphi \tequiv \negg\forall x \varphi $\\
&\Deriv{(F$\exists$)}&$\exists x \alpha \tequiv \neg \forall\neg\alpha$\\
&\Deriv{(D$\exists\tens$)}&$\exists x (\varphi \tens \psi) \tequiv \exists x\varphi \tens \exists x \psi$\\
&\Deriv{(E$\forall$)}&$\forall x\alpha \timp \shriek x \alpha$\\
&\Deriv{(I$\forall$)}&$\shriek x\psi \timp \forall x \psi$\\
&\Deriv{(Dis$\forall$)}&$\forall x (\varphi \timp \psi) \timp (\forall x \varphi \timp \forall x \psi)$\\
&\Deriv{(Dis$\shriek$)}&$\shriek x(\varphi \timp \psi) \timp (\shriek x \varphi \timp \shriek x\psi)$\vspace{5pt}\\
&\Deriv{(UG$\shriek$)}
&\begin{bprooftree}
\AxiomC{$\varphi$}
\RightLabel{\small{}($\varphi$ theorem)}
\UnaryInfC{$\shriek x \varphi$}
\end{bprooftree}\\
\bottomrule
\end{tabular}
\caption{The systems $\sfL$, $\sfS$, $\sfM$, $\sfU$ and $\sfQ$}\label{fig:all-axioms}
\end{figure}

Figure~\ref{fig:conclusion} visualizes the landscape of fragments of Väänänen's team logic $\TL$ and dependence logic $\D$ \cite{vaananen_dependence_2007}.
We showed that $\FO(\negg)$, \ie, $\TL$ with lax semantics and without dependence atom, collapses to $\calB(\FO)$ and tremendously loses expressive power.

Galliani \cite{Galliani14} called $\FO(\negg)$ a natural "stopping point" of well-behaved first-order logic with team semantics, and argued that together with nothing more than a unary dependence atom, it is already as strong as full second-order logic $\SO$.
The result that $\FO(\negg)$ is axiomatizable, recursively enumerable and compact confirms that it is well-behaved.

The team-semantical extensions of propositional logic $\PL$, quantified propositional logic $\QPL$ and modal logic $\ML$, \ie, $\PTL$, $\QPTL$ and $\MTL$, have been studied as well. They have been shown axiomatizable using the fact that they collapse to the Boolean closures of their classical base logics in a similar fashion as $\FO(\negg)$, \ie, to $\calB(\PL)$ and $\calB(\ML)$.
Figure~\ref{fig:all-axioms} depicts an overview on the involved axioms.

\smallskip

For our results, using lax semantics was crucial.
In strict semantics, team divisions are defined via partitions \cite{galliani_inclusion_2012}; successor teams pick exactly one successor per world \cite{minc}; and supplementing functions have range $A$ instead of $\pow{A} \setminus \{\emptyset\}$ \cite{galliani_inclusion_2012,vaananen_dependence_2007}.
The semantics of $\tens$ would then allow to \emph{count} certain elements in the team (cf.\ p.~\pageref{p:count}).
Since this cannot be finitely expressed in $\calB(\cdot)$ (see Corollary~\ref{cor:no-counting}), no completeness proof based on a similar collapse result can exist for strict semantics.

If we permit both the lax and the strict variants of the above operators simultaneously, then $\FO(\negg)$ lies strictly between $\calB(\FO)$ and $\TL$ in terms of its expressive power.
For this reason, in future work it would be interesting to either confirm or refute whether this logic still has the above "nice properties."
One possible approach could be a proof system for $\calB(\cdot)$ that permits counting, and to extend it towards the full logic.

\section*{Acknowledgements}

The author wishes to thank Anselm Haak and Juha Kontinen, as well as the anonymous referees, for numerous comments and hints and for pointing out helpful references.

\printbibliography

\clearpage

\appendix

\renewcommand*{\thesection}{\Alph{section}}

\section*{Appendix}

The appendix contains several technical or standard proofs omitted from the previous sections of this paper.
Moreover, several derivations in the introduced proof systems are listed.
Consider the following example, viz.\ the derivation of \Deriv{(MP$\tens$)} in the system $\sfL\sfS$.

\medskip

{
\setlength{\fitchprfwidth}{2.2in}
\fitchprf{
\pline[A ]{\varphi \timp \psi \thm} \\
\pline[B ]{\vartheta \tens \varphi}
}
{
\pline[1 ]{\negg \psi \timp \negg \varphi \thm }[$\sfL$, A]\\
\pline[2 ]{(\vartheta\limp\negg\psi)\timp(\vartheta \limp \negg\varphi) \thm}[\Deriv{(Nec$\limp$)}, \Deriv{(Dis$\limp$)}, 1]\\
\pline[3 ]{\negg(\vartheta \limp \negg \varphi)}[Def., B]\\
\pline[4 ]{\negg(\vartheta\limp\negg\psi)}[$\sfL$]\\
\pline[\slider]{\vartheta \tens \psi}[Def.]
}
}

\medskip

Recall that in this paper the premises of the proof have the special line numbers A, B, \textellipsis, and that \slider marks the conclusion.
Here, "Def." means that a non-primitive logical connective such as $\tens$ or $\ovee$ is replaced by its definition.
A judgment that is a theorem (\ie, derived without using premises) is marked with "{\scriptsize(thm)}".

\section{Proof details for Section~\ref{sec:boolean}}

\begin{lemma}\label{lem:derive-contradiction}
Let $\Omega \succeq \sfL$. The following statements are equivalent:
\begin{enumerate}
	\item $\Phi \vdash \varphi$ and $\Phi\vdash \negg\varphi$ for some $\varphi$,
	\item $\Phi$ is inconsistent,
	\item $\Phi\vdash \falsum$.
\end{enumerate}
\end{lemma}
\begin{proof}
For 1.\ $\Rightarrow$ 2., we show $\Phi \vdash \xi$ for arbitrary $\xi$.
As a first step, $\Phi \vdash (\negg\xi \timp \negg\varphi)$ follows from $\Phi \vdash\negg\varphi$ and \Deriv{(L1)}.
Next, $\Phi \vdash (\varphi \timp \xi)$ follows by \Deriv{(L3)}, and by  \Deriv{(E$\timp$)} then $\Phi \vdash \xi$.
2.\ $\Rightarrow$ 3.\ is obvious.
For 3.\ $\Rightarrow$ 1., we derive $\top$ by a standard proof, as $\falsum = \negg\top$.
\end{proof}

\begin{replemma}{lem:only-one-consistent}
Let $\Omega \succeq \sfL$ and let $\Phi$ be consistent.
Then $\Phi \nvdash \varphi$ implies that $\Phi \cup \{ \negg \varphi \}$ is consistent, and $\Phi \vdash \varphi$ implies that $\Phi \cup \{\varphi\}$ is consistent.
\end{replemma}
\begin{proof}
For the first part, suppose for the sake of contradiction that $\Phi\nvdash \varphi$, but $\Phi \cup \{ \negg \varphi \}$ is inconsistent.
Then $\Phi \cup \{\negg\varphi\} \vdash \negg\psi$ for any axiom $\psi$.
Consequently, by Theorem~\ref{thm:ext-deduction}, $\Phi \vdash (\negg \varphi \timp \negg\psi)$.
But by \Deriv{(L3)}, $\Phi \vdash \psi \timp \varphi$, and ultimately $\Phi \vdash \varphi$, since $\psi$ is an axiom.
Contradiction to $\Phi\nvdash \varphi$.
As a result, $\Phi \cup \{ \negg \varphi \}$ is consistent.

The second statement is proven similarly:
Suppose that $\Phi \vdash \varphi$, but $\Phi\cup\{\varphi\}$ is inconsistent.
Then $\Phi\cup\{\varphi\} \vdash \falsum$ by Lemma~\ref{lem:derive-contradiction}, and again by Theorem~\ref{thm:ext-deduction}, $\Phi \vdash \varphi \timp \falsum$.
As a result, $\Phi \vdash \falsum$, contradicting Lemma~\ref{lem:derive-contradiction}, since $\Phi$ is consistent.
\end{proof}

\begin{replemma}{lem:lindenbaum}[Lindenbaum's Lemma]
	If $\Omega \succeq \sfL$, then every $\Omega$-consistent set has a maximal $\Omega$-consistent superset.
\end{replemma}
\begin{proof}
Let $\Phi$ be $\Omega$-consistent, $\Omega = (\Xi,\Psi,I)$, and $\Xi = \{ \xi_1, \xi_2, \ldots\}$.
Define $\Phi_0 \dfn \Phi$, and for each $i \geq 1$,
\[
\Phi_{i} \dfn \begin{cases}\Phi_{i-1} \cup \{ \xi_i \} & \text{if }\Phi_{i-1} \vdash  \xi_i \text{,}\\
\Phi_{i-1} \cup \{ \negg \xi_i \} & \text{otherwise.}\end{cases}
\]

By Lemma~\ref{lem:only-one-consistent}, the $\Omega$-consistency of $\Phi_{i-1}$ implies that of $\Phi_{i}$.
Consequently, $\Phi_i$ is $\Omega$-consistent for all $i$, and hence $\Phi^* \dfn \bigcup_{n \geq 0}\Phi_n$ is $\Omega$-consistent as well.
By construction, $\Phi^*$ is maximal $\Omega$-consistent.
\end{proof}

\begin{reptheorem}{thm:completeness-of-L}
If $\Omega \succeq \sfL$ is refutation complete for $\calF \cup \negg \calF$, then it is complete for $\calB(\calF)$.
\end{reptheorem}
\begin{proof}
Let $\Phi' \subseteq \calB(\calF)$ and $\varphi \in \calB(\calF)$.
For completeness, we have to show that $\Phi' \nvdash \varphi$ implies $\Phi' \nvDash \varphi$, \ie, that $\Phi \dfn \Phi' \cup \{\negg\varphi\}$ has a model.
First note that, if $\Phi' \nvdash \varphi$, then $\Phi'$ is consistent, and by Lemma~\ref{lem:only-one-consistent}, $\Phi$ as well.

$\Phi$ has a maximal consistent superset $\Phi^*$ by Lemma~\ref{lem:lindenbaum}.
Clearly, $\Phi^* \cap (\calF \cup \negg \calF)$ is consistent as well, and by refutation completeness for $\calF \cup \negg \calF$, it has a model $A$.
In what follows, we show that $\psi \in \Phi^* \Leftrightarrow A\vDash \psi$ for all $\psi \in \calB(\calF)$.
In particular, $\Phi$ is then satisfiable, which proves the theorem.

The proof is by induction on $\psi$.
Suppose $\psi \in \calF$.
If $\psi \in\Phi^*$, then $A\vDash\psi$ by definition of $A$.
If $\psi \notin \Phi^*$, then $\negg\psi \in \Phi^*$ by maximality of $\Phi^*$, and $A \nvDash \psi$ by definition of $A$.

For the induction step, let $\psi \notin \calF$.
The case $\psi = \negg \vartheta$ is clear as $\Phi^*$ is maximal consistent.
Next, let $\psi = \psi_1 \timp \psi_2$.
If $\psi\in\Phi^*$, then either $\psi_1\notin\Phi^*$ or $\negg\psi_2\notin\Phi^*$, as otherwise $\Phi^*$ is inconsistent.
But then $A\vDash\psi_1\timp\psi_2$ by induction hypothesis.

If $\psi\notin\Phi^*$, then $\negg\psi\in\Phi^*$.
By consistency, $\Phi^* \nvdash \psi$.
For the sake of contradiction, suppose that $A \vDash \psi$, \ie, $A \vDash \psi_2$ or $A \nvDash \psi_1$.
If $A\vDash\psi_2$, then $\psi_2 \in\Phi^*$ by induction hypothesis.
By \Deriv{(L1)}, we can then derive $\psi$, contradiction.
If $A\nvDash\psi_1$, then $\negg\psi_1 \in \Phi^*$ by induction hypothesis.
Again with \Deriv{(L1)}, we can then infer $\negg\psi_2 \timp \negg\psi_1$, and by \Deriv{(L3)} obtain $\psi$, contradicting $\Phi^* \nvdash \psi$.
\end{proof}

\section{Proof details for Section~\ref{sec:splitting}}

\begin{reptheorem}{thm:ptl-soundness}
The proof system $\sfH^0\sfL\sfS$ is sound for $\PTL$.
\end{reptheorem}
\begin{proof}\label{p:soundproof}
We show that all axioms are valid, and that the inference rules preserve truth.
Then the soundness follows by induction.
All instances of axioms of $\sfH^0$ are $\PL$-tautologies by Proposition~\ref{prop:base-completeness}, and similarly \Deriv{(E$\imp$)} is sound by Proposition~\ref{prop:equal-semantics}.
By definition of $\imp$, $\negg$ and $\timp$, the axioms of $\sfL$ and \Deriv{(E$\timp$)} are sound for all $\PTL$-formulas.
The rules \Deriv{(F$\limp$)} and \Deriv{(F$\tens$)} of $\sfS$ are valid by Proposition~\ref{prop:downward-closure}) and Proposition~\ref{prop:flatness-tens}.

For \Deriv{(Lax)}, assume $T \vDash \{\varphi,\varphi\limp\psi\}$.
Then every subteam $S \subseteq T$ satisfies $\psi$, as $T$ and $S$ always form a division of $T$ itself.
This, in turn, implies $T \vDash\vartheta \limp \psi$ for arbitrary $\vartheta$.

For \Deriv{(Ex$\limp$)}, assume for the sake of contradiction that $T$ has a division into $S,U,U'$ such that $S \vDash \psi$, $U\vDash \varphi$, but $U'\nvDash\vartheta$.
Then $S \cup U' \vDash \psi \tens \negg\vartheta$, \ie, $S \cup U' \vDash \negg(\psi \limp \vartheta)$.
However, $U \vDash \varphi$.
For this reason, $T \nvDash \varphi \limp \psi \limp \vartheta$.

For \Deriv{(C$\limp$)}, again assume for the sake of contradiction that $T$ has a division into $S$ and $U$ such that $S \vDash \psi$, but $U \nvDash \negg\varphi$, \ie, $U \vDash \varphi$.
Then, by $T= S \cup U$, we obtain $T \vDash \varphi \tens \psi$, \ie, $T \nvDash \varphi\limp\negg\psi$.

The soundness of \Deriv{(Nec$\limp$)} and \Deriv{(Dis$\limp$)} are straightforward.
\end{proof}

Recall that, if $T$ is a propositional team and $p_1,\ldots,p_n \in \PS$, then
\[
\rel(T, (p_1,\ldots,p_n)) \dfn \{ (s(p_1),\ldots,s(p_n)) \mid s \in T \} \text{.}
\]
In other words, if $(b_1,\ldots,b_n) \in \{0,1\}^n$, then $(b_1,\ldots,b_n) \in \rel(T, (p_1,\ldots,p_n))$ if and only if there exists $s \in T$ such that $s(p_1) = b_1,\ldots,s(p_n) = b_n$.
Let $[n] \dfn \{1,\ldots,n\}$.

We begin with the following lemmas in order to prove the locality property.

\begin{lemma}\label{lem:rel-equality}
The following statements are equivalent for all teams $T,T'$ and propositions $p_1,\ldots,p_n$:
\begin{enumerate}
\item $\rel(T, (p_1,\ldots,p_n)) \subseteq \rel(T', (p_1,\ldots,p_n))$
\item for all $s \in T$ there exists $s' \in T'$ such that $s(p_i) = s'(p_i)$ for all $i \in [n]$.
\end{enumerate}
\end{lemma}
\begin{proof}
1.\ $\Rightarrow$ 2.:
Suppose $s \in T$.
Then $(s(p_1),\ldots,s(p_n)) \in \rel(T,(p_1,\ldots,p_n))$ by definition of $\rel$.
Since also $(s(p_1),\ldots,s(p_n)) \in \rel(T',(p_1,\ldots,p_n))$, there exists some $s' \in T'$ such that $s(p_i) = s'(p_i)$ for all $i \in [n]$.

2.\ $\Rightarrow$ 1.:
Suppose $(b_1,\ldots,b_n) \in \rel(T, (p_1,\ldots,p_n))$.
Then there is some $s \in T$ such that $b_i = s(p_i)$ for all $i \in [n]$.
By assumption, there also exists $s' \in T'$ such that $s(p_i) = s'(p_i)$ for all $i \in [n]$.
But then $(b_1,\ldots,b_n) \in \rel(T', (p_1,\ldots,p_n))$.
\end{proof}

\begin{lemma}
Let $\rel(T,(p_1,\ldots,p_n)) = \rel(T',(p_1,\ldots,p_n))$.
Then for all $\{i_1,\ldots,i_m\} \subseteq [n]$ we have $\rel(T,(p_{i_1},\ldots,p_{i_m})) = \rel(T',(p_{i_1},\ldots,p_{i_m}))$.
\end{lemma}
\begin{proof}
Let $\{i_1,\ldots,i_m\} \subseteq [n]$; we show only $\rel(T,(p_{i_1},\ldots,p_{i_m})) \subseteq \rel(T',(p_{i_1},\ldots,p_{i_m}))$ due to symmetry.
By the previous lemma, it suffices to show that for every $s \in T$ there exists $s' \in T'$ such that $s(p_{i_j}) = s'(p_{i_j})$ for all $j \in [m]$.

Accordingly, suppose $s \in T$.
By assumption, there is $s' \in T'$ such that $s(p_{i}) = s'(p_i)$ for all $i \in [n]$.
In particular, $s(p_{i_j}) = s'(p_{i_j})$ for $j \in [m]$.
\end{proof}

\begin{repproposition}{prop:locality}[Locality]
Let $T,T'$ be propositional teams and $\varphi \in \PTL$ such that the propositions occurring in $\varphi$ are $p_1, \ldots, p_n$.
Then $\rel(T,(p_1,\ldots,p_n)) = \rel(T',(p_1,\ldots,p_n))$ implies $T \vDash \varphi \Leftrightarrow T' \vDash \varphi$ in lax semantics.
\end{repproposition}
\begin{proof}
By induction on $\varphi$.
If $\varphi = p$ for a proposition $p \in \PS$, clearly $T \vDash p \Leftrightarrow (0) \notin \rel(T,(p)) \Leftrightarrow (0) \notin \rel(T',(p)) \Leftrightarrow T' \vDash p$.

For the inductive step, assume $\varphi = \psi \timp \psi'$.
Suppose $p_{i_1},\ldots,p_{i_m}$ appear as propositions in $\psi$, and $p_{j_1},\ldots,p_{j_k}$ appear as propositions in $\psi'$, where $\{i_1,\ldots,i_m,j_1,\ldots,j_k\} \subseteq [n]$.
Then $\rel(T,(p_{i_1},\ldots,p_{i_m})) = \rel(T'(p_{i_1},\ldots,p_{i_m}))$ by the above lemma.
Likewise, $\rel(T,(p_{j_1},\ldots,p_{j_k})) = \rel(T',(p_{j_1},\ldots,p_{j_k}))$.
By induction hypothesis, $T \vDash \varphi \Leftrightarrow T' \vDash \varphi$.
The case $\varphi = \negg \psi$ is shown similarly.

Next, suppose $\varphi = \psi \limp \psi'$.
Let $\rel(T,(p_1,\ldots,p_n)) = \rel(T',(p_1,\ldots,p_n))$ and $T \vDash \psi \limp \psi'$.
For the sake of contradiction, let $\psi \limp \psi'$ be true in $T$ but false in $T'$.
Then $T' = S' \cup U'$ such that $S' \vDash \psi$ and $U' \nvDash \psi'$.
We construct subteams $S,U$ of $T$ such that $S \cup U = T$, $\rel(S,(p_1,\ldots,p_n)) = \rel(S',(p_1,\ldots,p_n))$ and $\rel(U,(p_1,\ldots,p_n)) = \rel(U',(p_1,\ldots,p_n))$.
By a similar argument as for $\timp$, then $T \nvDash \psi \limp \psi'$, contradicting the assumption.
Let
\begin{align*}
S &\dfn \{ s \in T \mid \exists s' \in S' : \forall i \in [n] : s(p_i) = s'(p_i) \}\text{,}\\
U &\dfn \{ s \in T \mid \exists s' \in U' : \forall i \in [n] : s(p_i) = s'(p_i) \}\text{.}
\end{align*}

We show that $\rel(S,(p_1,\ldots,p_n)) = \rel(S',(p_1,\ldots,p_n))$  (and analogously for $U$).
We apply Lemma~\ref{lem:rel-equality} and show that for all $s \in S$ there exists  $s' \in S'$ such that $s(p_i) = s'(p_i)$ for all $i \in [n]$, and vice versa.
As the first direction is clear, instead suppose $s' \in S'$.
Since $S' \subseteq T'$, by the assumption of the lemma we can apply Lemma~\ref{lem:rel-equality} and obtain that there is $s \in T$ such that $s(p_i) = s'(p_i)$ for all $i \in [n]$.
Then $s \in S$ by definition of $S$.

It remains to prove $T \subseteq S \cup U$.
Let $s \in T$.
As before, there exists $s' \in T'$ such that $s(p_i) = s'(p_i)$ for all $i \in [n]$.
As $s' \in S' \cup U'$, $s$ satisfies at least one of $(\exists s' \in S' : \forall i \in [n] : s(p_i) = s'(p_i))$ and $(\exists s' \in U' : \forall i \in [n] : s(p_i) = s'(p_i))$, and hence is in $S \cup U$.
\end{proof}

\begin{replemma}{lem:meta-ptl}
Let $\Omega \succeq \sfL\sfS$ be a proof system.
Them $\Omega$ has substitution in $\negg$, $\timp$ and $\limp$.
Furthermore, $\Omega$ admits the following meta-rules:
\begin{itemize}
	\item Reductio ad absurdum \Deriv{(RAA)}:
	If $\Phi \cup \{\varphi\} \vdash \{\psi, \negg \psi\}$, then $\Phi \vdash\negg\varphi$.\\
	If $\Phi \cup \{\negg\varphi\} \vdash \{\psi, \negg \psi\}$, then $\Phi \vdash \varphi$.
	\item Modus ponens in $\limp$ \Deriv{(MP$\limp$)}:
	If $\;\vdash \varphi \timp \psi$ and $\Phi \vdash \vartheta \limp \varphi$, then $\Phi \vdash \vartheta \limp \psi$.
	\item Modus ponens in $\tens$ \Deriv{(MP$\tens$)}:
	If $\;\vdash \varphi \timp \psi$ and $\Phi \vdash \vartheta \tens \varphi$, then $\Phi \vdash \vartheta \tens \psi$.
\end{itemize}
\end{replemma}
\begin{proof}
First, we derive the meta-rules in $\Omega$.
For \Deriv{(RAA)}, suppose $\Phi \cup \{\varphi\} \vdash \{\psi, \negg \psi\}$.
By Theorem~\ref{thm:ext-deduction}, $\Phi\vdash \{ \varphi \timp\negg\psi, \varphi \timp \psi \}$.
Moreover, $\vdash_\sfL (\varphi\timp\negg\psi)\timp(\psi\timp\negg\varphi)$ and $\vdash_\sfL  (\varphi\timp\negg\varphi)\timp\negg\varphi$ due to Theorem~\ref{thm:completeness-bool}.
But as $\Phi \vdash \{ \varphi \timp \psi, \psi \timp \negg \varphi \}$,
we have $\Phi \vdash \negg \varphi$.
The other case is proven analogously, using the theorem $\negg\negg\varphi\timp\varphi$ of $\sfL$.

The rule \Deriv{(MP$\limp$)} easily follows by \Deriv{(Nec$\limp$)}, \Deriv{(Dis$\limp$)} and \Deriv{(E$\timp$)}.
A derivation of \Deriv{(MP$\tens$)} can be found at the beginning of the appendix.

Next, we prove substitution in $\negg$, $\timp$ and $\limp$.
For $\negg$, suppose $\varphi = \negg \xi$ and $\xi \eqpr \psi$.
Obviously, $\{\varphi, \psi \} \vdash \xi, \negg \xi$.
By \Deriv{(RAA)}, $\varphi \vdash \negg \psi$.

For $\timp$, suppose $\varphi = \xi_1 \timp \xi_2$, $\xi_1 \eqpr \psi_1$ and $\xi_2 \eqpr \psi_2$.
Then $\{\psi_1, \varphi\} \vdash \xi_2 \vdash \psi_2$.
By the deduction theorem, $\varphi \vdash \psi_1 \timp \psi_2$.

The final case is $\limp$.
Since we demonstrated that \Deriv{(MP$\limp$)} is available, the following derivation proves substitution in $\limp$.

{
\setlength{\fitchprfwidth}{1.5in}
\fitchprf{
\pline[A ]{\psi_1 \timp \varphi_1 \thm}\\
\pline[B ]{\varphi_2 \timp \psi_2 \thm} \\
\pline[C ]{\varphi_1 \limp \varphi_2}
}
{
\pline[1 ]{\varphi_2 \timp \negg\negg\psi_2 \thm }[$\sfL$, B]\\
\pline[2 ]{\negg \varphi_1 \timp \negg \psi_1 \thm}[$\sfL$, A]\\
\pline[3 ]{\negg\negg\psi_2 \timp \psi_2 \thm}[$\sfL$]\\
\pline[4 ]{\varphi_1 \limp \negg\negg\psi_2}[\Deriv{(MP$\limp$)}, C, 1]\\
\pline[5 ]{\negg\psi_2 \limp \negg \varphi_1}[\Deriv{(C$\limp$)}]\\
\pline[6 ]{\negg\psi_2 \limp \negg \psi_1 }[\Deriv{(MP$\limp$)}, 2, 5]\\
\pline[7 ]{\psi_1 \limp \negg\negg\psi_2}[\Deriv{(C$\limp$)}]\\
\pline[\slider]{\psi_1 \limp \psi_2}[\Deriv{(MP$\limp$)}, 3, 7\qedhere]
}
}
\end{proof}

\section{Proof details for Section~\ref{sec:modal}}

\begin{lemma}\label{lem:diamond-sound}
\Deriv{(D$\Diamond\tens$)}, \ie, $\Diamond (\varphi \tens \psi) \tequiv \Diamond\varphi \tens \Diamond \psi$, is sound for $\MTL$.
\end{lemma}
\begin{proof}
Let $K = (W,R,V)$ be a Kripke structure and $T$ a team in $K$.

"$\rightarrowtriangle$":
Suppose $(K,T) \vDash \Diamond(\varphi \tens \psi)$.
Then $T$ has a successor team $T'$ that satisfies $\varphi\tens\psi$, \ie, there are $S'$ and $U'$ such that $T' = S'\cup U'$, $(K,S')\vDash \varphi$ and $(K,U')\vDash \psi$.
We define subteams $S$ and $U$ such that $T = S \cup U$, $(K,S)\vDash \Diamond\varphi$ and $(K,U)\vDash\Diamond\psi$:
\begin{align*}
S &\dfn \Set{ v \in T | \exists v' \in S' : (v,v') \in R }\text{,}\\
U &\dfn \Set{ v \in T | \exists v' \in U' : (v,v') \in R }\text{.}
\end{align*}
Every world $v \in T$ has at least one successor $v' \in T'$.
Since $S'\cup U' = T'$, either $v' \in S'$, or $v' \in U'$, or both.
By definition, $v$ is in then in $S$ or $U$.
Consequently, $T = S \cup U$.

To prove $(K, S)\vDash \Diamond \varphi$, we demonstrate that $S'$ is a successor team of $S$.
$(K,U) \vDash \Diamond \psi$ is then analogous.
First, by definition of $S$, every $v\in S$ has at least one successor in $S'$.
Likewise, every $v'\in S'$ has at least one predecessor in $S$:
Since $S' \subseteq T'$ and $T'$ is a successor team of $T$, $v'$ has some predecessor $v$ in $T$.
By definition of $S$, $v \in S$.

"$\leftarrowtriangle$":
Suppose $(K, T)\vDash \Diamond\varphi \tens \Diamond\psi$ due to subteams $S$ and $U$ of $T$ such that $T = S \cup U$, $(K,S)\vDash \Diamond\varphi$ and $(K,U)\vDash \Diamond\psi$.
Then there is a successor team $S'$ of $S$ satisfying $\varphi$, and a successor team $U'$ of $U$ satisfying $\psi$.

We show that $T' \dfn S' \cup U'$, which satisfies $\varphi \tens \psi$, itself is a successor team of $T$.
If $v \in T$, then $v \in S$ or $v \in U$, and $v$ has a successor in $S'$ or $U'$, and consequently in $T'$.
On the other hand, $v' \in T'$ implies $v' \in S'$ or $v' \in U'$.
But then $v'$ has a predecessor in $S$ or $U$, and hence in $T$.
\end{proof}

\begin{reptheorem}{thm:mtl-completeness}
The proof system $\sfH^\Box\sfL\sfS\sfM$ is sound for $\MTL$.
\end{reptheorem}
\begin{proof}
As $\sfH^\Box$ applies only to $\ML$-formulas, it is sound by Corollary~\ref{cor:base-completeness-team}.
The system $\sfL$ is easily proved sound; and the soundness of $\sfS$ is shown as in Theorem~\ref{thm:ptl-soundness}.
It remains to consider $\sfM$.

For \Deriv{(Lin$\Box$)}, clearly $(K,T)\vDash \Box\negg\varphi \Leftrightarrow (K,RT) \vDash \negg \varphi \Leftrightarrow (K,RT) \nvDash \varphi \Leftrightarrow (K,T) \nvDash \Box \varphi$.
Next, the flatness axiom \Deriv{(F$\Diamond$)} follows from the definition of successor teams.
\Deriv{(D$\Diamond\lor$)} is proved sound in Lemma~\ref{lem:diamond-sound}.
The remaining axioms \Deriv{(E$\Box$)}, \Deriv{(I$\Box$)}, \Deriv{(Dis$\Box$)} and \Deriv{(Dis$\triangle$)} and rules  \Deriv{(Nec$\Box$)} and \Deriv{(Nec$\triangle$)} are straightforward.
\end{proof}

\begin{replemma}{lem:meta-mtl}
Let $\Omega\succeq \sfL\sfS\sfM$.
Then $\Omega$ has substitution in $\timp, \negg, \limp, \Box$ and $\triangle$.
Furthermore, $\Omega$ admits the following meta-rules:
\begin{itemize}
	\item Modus ponens in $\Box$ \Deriv{(MP\,$\Box$)}:
	If $\vdash \varphi \timp \psi$ and $\Phi \vdash \Box \varphi$, then $\Phi \vdash \Box \psi$.
	\item Modus ponens in $\triangle$ \Deriv{(MP$\triangle$)}:
	If $\vdash \varphi \timp \psi$ and $\Phi \vdash \triangle \varphi$, then $\Phi \vdash \triangle \psi$.
	\item Modus ponens in $\Diamond$ \Deriv{(MP$\Diamond$)}:
	If $\vdash \varphi \timp \psi$ and $\Phi \vdash \Diamond \varphi$, then $\Phi \vdash \Diamond \psi$.
\end{itemize}
\end{replemma}
\begin{proof}
It is straightforward to prove \Deriv{(MP$\Box$)} and \Deriv{(MP$\triangle$)} from \Deriv{(Nec$\Box$)}, \Deriv{(Dis$\Box$)} resp.\ \Deriv{(Nec$\triangle$)}, \Deriv{(Dis$\triangle$)} and \Deriv{(E$\timp$)}.
Next, since $\Omega \succeq \sfL\sfS$, \Deriv{(RAA)} is available by Lemma~\ref{lem:meta-ptl}.
Consequently, the derivation for \Deriv{(MP$\Diamond$)} can be implemented as follows.

\smallskip

{
\setlength{\fitchprfwidth}{1.5in}

\fitchprf{
\pline[A ]{\Diamond \varphi}\\
\pline[B ]{\varphi \timp \psi \thm}
}
{
\pline[1 ]{\negg \psi \timp \negg \varphi \thm}[$\sfL$, B]\\
\subproof{\pline[2 ]{\triangle \negg \psi}}
{
\pline[3 ]{\triangle \negg \varphi}[\Deriv{(MP$\triangle$)}, 1, 2]\\
\pline[4 ]{\negg \triangle \negg \varphi}[Def., A]
}
\pline[5 ]{\negg\triangle\negg\psi}[\Deriv{(RAA)}, 3, 4]\\
\pline[\slider]{\Diamond\psi }[Def.]
}
}

\medskip

It remains to prove that $\Omega$ admits substitution.
The cases $\timp$, $\negg$ and $\limp$ follow from Lemma~\ref{lem:meta-ptl}, as $\Omega \succeq \sfL\sfS$.
Finally, the cases $\triangle$ and $\Box$ immediately follow from \Deriv{(MP$\triangle$)} and \Deriv{(MP$\Box$)}.
\end{proof}
 
\section{Proof details for Section~\ref{sec:qbf}}

Recall that \Deriv{(D$\exists\tens$)} is the axiom
$\exists x (\varphi \tens \psi) \tequiv \exists x \varphi \tens \exists x  \psi$ of $\sfQ$.

\begin{lemma}\label{lem:exists-sound}
\Deriv{D$\exists\tens$} is sound for $\FO(\negg)$ and $\QPTL$.
\end{lemma}
\begin{proof}
We prove only the first-order case; the proof works analogously for $\QPTL$.

"$\rightarrowtriangle$":
Suppose $(\calA, T) \vDash \exists x (\varphi \tens \psi)$, where $\calA = (A, \tau^\calA)$ is a first-order structure, $T$ is a team, and $x \in \Var$.
Then there exists $f \colon T \to \pow{A} \setminus \{\emptyset\}$ such that $(\calA, T^x_f) \vDash \varphi \tens \psi$.
Consequently, there are $S, U\subseteq T^x_f$ such that $(\calA, S) \vDash \varphi$, $(\calA, U)\vDash \psi$ and $T^x_f = S \cup U$.
For the proof, we construct a division of $T$ into subteams $S', U'$ of $T$ that satisfy $\exists x\, \varphi$ and $\exists x \, \psi$, respectively:
\begin{align*}
S' &\dfn \Set{ s \in T | \exists s' \in S \colon \forall y \in \Var \setminus \{x\} \colon s(y) = s'(y) }\text{,}\\
U' &\dfn \Set{ s \in T | \exists s' \in U \colon \forall y \in \Var \setminus \{x\} \colon s(y) = s'(y) }\text{.}
\end{align*}
In other words, $S'$ contains exactly the assignments $s\in T$ such that $s^x_a$ is in $S$ for some $a$ (and likewise $U'$).
$S'$ and $U'$ form a division of $T$: Suppose $s \in T$.
Then $s^x_a \in T^x_f$ for at least one $a$, and consequently $s^x_a \in S$ or $s^x_a \in U$.
This implies $s \in S'$ or $s \in U'$.
Next, we will prove that $S$ is a supplementing team of $S'$ (the proof for $U$ is analogous).
As then $(\calA, S') \vDash \exists x\, \varphi$ and $(\calA, U') \vDash \exists x\, \psi$, ultimately $(\calA,T) \vDash (\exists x\, \varphi)\tens(\exists x\, \psi)$.

\smallskip

We show that $S = (S')^x_g$ for $g(s) \dfn \Set{ a \in A | s^x_a \in S }$.
$g(s)$ is always non-empty, since $s \in S'$ implies $s^x_a \in S$ for some $a$ by definition of $S'$, and $g$ is a supplementing function.

In order to prove $S \subseteq (S')^x_g$, suppose $s' \in S$.
As $S \subseteq T^x_f$, then $s' = s^x_a$ for some $a \in f(s)$ and $s \in T$.
By definition of $S'$, then $s \in S'$, and since $a \in g(s)$, we have $s^x_a \in (S')^x_g$.

For $(S')^x_g \subseteq S$, let $s' \in (S')^x_g$.
Then $s' = s^x_a$ for some $s \in S'$ and $a \in g(s)$.
By definition of $g$, then $s' = s^x_a \in S$.

\medskip

"$\leftarrowtriangle$":
Suppose $(\calA,T) \vDash (\exists x\, \varphi)\tens (\exists x\, \psi)$, \ie, that $(\calA, S) \vDash \exists x \,\varphi$ and $(\calA, U)\vDash \exists x\, \psi$ for $T = S \cup U$.
Let $S^x_f$ and $U^x_g$ be supplementing teams of $S$ and $U$ such that $(\calA, S^x_f)\vDash \varphi$ and $(\calA, U^x_g)\vDash \psi$.
We prove that $S^x_f \cup U^x_g$ is a supplementing team of $T$, which implies $(\calA, T) \vDash \exists x \, (\varphi \tens \psi)$.
Consider the function $h$ on $T = S \cup U$ given by
\begin{align*}
h(s) \dfn \begin{cases}f(s) &\text{ if }s \in S \setminus U\text{,}\\
g(s) &\text{ if }s \in U \setminus S\text{,}\\
f(s) \cup g(s)& \text{ if }s \in S \cap U\text{.}
\end{cases}
\end{align*}
Clearly $h : T \to \pow{A} \setminus \{\emptyset\}$.
We demonstrate $S^x_f \cup U^x_g = T^x_h$.
For $S^x_f \subseteq T^x_h$ ($U^x_g$ is analogous), suppose $s' \in S^x_f$.
Then $s' = s^x_a$ for some $s \in S \subseteq T$ and $a \in f(s) \subseteq h(s)$.
Consequently, $s' \in T^x_h$.

Conversely, for $T^x_h \subseteq S^x_f \cup U^x_g$, let $s' \in T^x_h$, \ie, $s' = s^x_a$ for some $s \in T$ and $a \in h(s)$.
If $s \in S \setminus U$, then necessarily $a \in f(s)$, and $s^x_a \in S^x_f$.
Likewise, if $s \in U \setminus S$, then $a \in g(s)$ and $s^x_a \in U^x_g$.
Finally, if $s \in S \cap U$, then $a \in f(s)\cup g(s)$, so $s^x_a$ is either in $S^x_f$ or in $U^x_g$.
\end{proof}

\begin{lemma}
The system $\sfQ$ is sound for $\FO(\negg)$ and $\QPTL$.
\end{lemma}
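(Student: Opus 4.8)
The plan is to verify, exactly as in the soundness proof of $\sfM$ (\Cref{thm:mtl-soundness}), that every axiom of $\sfQ$ is valid and that \Deriv{(UG$\shriek$)} preserves validity; an induction on the length of derivations then gives soundness for both $\calQ(\FO)$ and $\calQ(\QBF)$. The guiding analogy is that $\forall x$ plays the role of $\Box$ --- it is evaluated at the single duplicating team $T[A/x]$ (resp.\ $T[\{0,1\}/x]$) --- while $\shriek x$ plays the role of $\Delta$, ranging over all supplementing teams $T[f/x]$, and $\exists x = \negg\shriek x\negg$ mirrors $\Diamond$. The distribution law \Deriv{(D$\exists\tens$)} is already established by \Cref{thm:exists-sound}, so only the remaining axioms and the single rule are left. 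Four of these are formal copies of their modal counterparts: since the duplicating team is uniquely determined by $T$, negation commutes with $\forall x$ and gives \Deriv{(Lin$\forall$)} just as for \Deriv{(Lin$\Box$)}; evaluating $\timp$ truth-functionally at $T[A/x]$ yields the $K$-axiom \Deriv{(Dis$\forall$)}, and the same argument carried out simultaneously at every $T[f/x]$ yields \Deriv{(Dis$\shriek$)}; finally, a theorem $\varphi$ holds on every valuation, in particular on every $(\calA,T[f/x])$, so $\shriek x\varphi$ holds everywhere, which is \Deriv{(UG$\shriek$)}.

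The two containment axioms compare supplementing and duplicating teams. For \Deriv{(E$\forall$)}, each supplementing function satisfies $f(s)\subseteq A$, so $T[f/x]$ is a subteam of $T[A/x]$; if $(\calA,T)\vDash\forall x\alpha$, \ie $(\calA,T[A/x])\vDash\alpha$, then downward closure of the flat formula $\alpha$ (\Cref{thm:downward-closure}) forces $(\calA,T[f/x])\vDash\alpha$ for every $f$, which is $\shriek x\alpha$. The conceptually new point is \Deriv{(I$\forall$)}: under \emph{lax} semantics the duplicating function $s\mapsto A$ is \emph{itself} a legal supplementing function (its value $A$ is a nonempty subset), so $T[A/x]$ occurs among the teams quantified by $\shriek x$, and hence $\shriek x\psi$ immediately gives $\forall x\psi$. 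This is exactly where laxness is indispensable --- in strict semantics $T[A/x]$ is not a supplementing team unless $\size{A}=1$ --- and it is the non-emptiness of the domain that lets the precondition of \Deriv{(I$\Box$)} be dropped, since a supplementing function always exists.

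The main remaining bookkeeping is \Deriv{(F$\exists$)}, the collapse of the team quantifier $\exists x$ to the classical flat existential on a classical $\alpha$. Unfolding $\exists x\alpha = \negg\shriek x\negg\alpha$ and negating the universal over supplementing functions, $(\calA,T)\vDash\exists x\alpha$ is equivalent to the existence of some $f$ with $(\calA,T[f/x])\vDash\alpha$, which by flatness of $\alpha$ means every member of $T[f/x]$ satisfies $\alpha$. In the forward direction any $a\in f(s)$ witnesses the classical $\exists x\alpha$ at $s$; conversely a witness $a_s$ for each $s\in T$ and the choice $f(s)\dfn\{a_s\}$ produces such an $f$. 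Thus $(\calA,T)\vDash\exists x\alpha$ holds iff $(\calA,s)\vDash\exists x\alpha$ classically for all $s\in T$, which by flatness (\Cref{thm:semantics-equal}) is precisely $(\calA,T)\vDash\neg\forall x\neg\alpha$. The $\calQ(\QBF)$ case runs identically, the only change being that $A$ is replaced by the non-empty domain $\{0,1\}$ whose duplicating function $s\mapsto\{0,1\}$ is again a valid supplementing function, and \Cref{thm:exists-sound} likewise covers \Deriv{(D$\exists\tens$)} there.
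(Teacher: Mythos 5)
Your proposal is correct and follows essentially the same route as the paper's own proof: it verifies each axiom of $\sfQ$ semantically, defers \Deriv{(D$\exists\tens$)} to \Cref{thm:exists-sound}, derives \Deriv{(E$\forall$)} from downward closure of supplementing teams inside the duplicating team, and \Deriv{(I$\forall$)} from the fact that the duplicating function is itself a lax supplementing function. The only difference is that you spell out the verification of \Deriv{(F$\exists$)} and the role of laxness and domain non-emptiness in more detail than the paper does, which is entirely consistent with its argument.
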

\begin{proof}
For the soundness of \Deriv{D$\exists\tens$}, see the previous lemma.
\Deriv{(Lin$\forall$)} is similar to \Deriv{(Lin$\Box$)}.
The soundness of \Deriv{(F$\exists$)} is by definition of supplementing functions.
\Deriv{(E$\forall$)} follows from Proposition~\ref{prop:downward-closure}, since any supplementing team is contained in the duplicating team.
Likewise, \Deriv{(I$\forall$)} follows as the duplicating team is a supplementing team.
The rule \Deriv{(UG$\shriek$)} and the axioms \Deriv{(Dis$\forall$)} and \Deriv{(Dis$\shriek$)} are straightforward.
\end{proof}

\begin{reptheorem}{thm:soundness-fo}
$\sfH^1\sfL\sfS\sfQ$ is sound for $\QPTL$ and $\sfH\sfU\sfL\sfS\sfQ$ is sound for $\FO(\negg)$.
\end{reptheorem}
\begin{proof}
The soundness of $\sfH$ and $\sfH^1$ is shown in Corollary~\ref{cor:base-completeness-team}, that of $\sfQ$ in the above lemma, that of $\sfU$ in Lemma~\ref{lem:u-soundness}, and the remaining axioms and rules are proved sound as in Theorem~\ref{thm:ptl-soundness} on p.~\pageref{p:soundproof}.
\end{proof}

\section{Proof details for Lemma~\ref{lem:ptl-laws} (system $\sfS'$)}

In the proofs below, we sometimes omit applications of \Deriv{(MP$\tens$)} and \Deriv{(MP$\limp$)}.

\renewcommand{\thm}{\text{\tiny\; (thm)}}

{
\scriptsize

\begin{minipage}[t][][b]{.43\textwidth}
\Deriv{(Com$\tens$)}:

\medskip

{
\setlength{\fitchprfwidth}{1.2in}
\fitchprf{
\pline[A ]{\varphi \tens \psi}
}
{
	\subproof{
	\pline[1 ]{\negg(\psi\tens\varphi)}
	}
	{
		\pline[2 ]{\negg\negg(\psi \limp \negg \varphi)}[Def.]\\
		\pline[3 ]{\psi \limp \negg \varphi}[$\sfL$]\\
		\pline[4 ]{\varphi \limp \negg \psi}[\Deriv{(C$\limp$)}]\\
		\pline[5 ]{\negg\negg(\varphi\limp\negg\psi)}[$\sfL$]\\
		\pline[6 ]{\negg(\varphi\tens\psi)}[Def.]
	}
	\pline[\slider]{\psi \tens \varphi}[\Deriv{(RAA)}, A, 6]
}
}
\end{minipage}
\begin{minipage}[t][][b]{.49\textwidth}
\Deriv{(Aug$\tens$)}:

\medskip

{
	\setlength{\fitchprfwidth}{1.6in}
\fitchprf{
\pline[A ]{\varphi\tens\psi}\\
\pline[B ]{\varphi \limp \vartheta}
}
{
 	\subproof{
		\pline[1 ]{\negg(\varphi \tens (\psi \oland \vartheta))}
	}
	{
		\pline[2 ]{\negg\negg(\varphi\limp \negg (\psi \oland \vartheta))}[Def.]\\
		\pline[3 ]{\varphi \limp \negg(\psi \oland \vartheta)}[$\sfL$]\\
		\pline[4 ]{\varphi \limp (\vartheta \timp \negg\psi )}[$\sfL$]\\
		\pline[6 ]{(\varphi \limp \vartheta) \timp (\varphi \limp \negg\psi)}[\Deriv{(Dis$\limp$)}]\\
		\pline[7 ]{\varphi \limp\negg \psi}[\Deriv{(E$\timp$)}, B, 6]\\
		\pline[8 ]{\negg(\varphi \limp \negg \psi)}[Def., A]
	}
	\pline[\slider]{\varphi \tens (\psi \oland \vartheta)}[\Deriv{(RAA)}, 7, 8]
}
}
\end{minipage}

\bigskip

\begin{minipage}[t][][b]{.43\textwidth}

\Deriv{(Ass$\tens$)$^1$}:

\medskip

{
\setlength{\fitchprfwidth}{1.2in}
\fitchprf{
\pline[A ]{(\varphi\tens\psi)\tens\vartheta)}
}
{
	\pline[1 ]{\vartheta \tens (\varphi \tens\psi)}[\Deriv{(Com$\tens$)}]\\
	\pline[2 ]{\vartheta \tens (\psi \tens \varphi)}[\Deriv{(Com$\tens$)}]\\
	\pline[3 ]{(\vartheta \tens \psi) \tens \varphi}[\Deriv{(Ass$\tens$)}$^2$]\\
	\pline[4 ]{\varphi \tens (\vartheta \tens \psi)}[\Deriv{(Com$\tens$)}]\\
	\pline[\slider]{\varphi \tens (\psi \tens \vartheta)}[\Deriv{(Com$\tens$)}]
}
}
\end{minipage}
\begin{minipage}[t][][b]{.46\textwidth}

\Deriv{(Ass$\tens$)$^2$}:

\medskip

{
\setlength{\fitchprfwidth}{1.6in}
\fitchprf{
\pline[A ]{\varphi\tens(\psi\tens\vartheta)}
}
{
  \subproof{
		\pline[1 ]{\vartheta \limp \negg\negg(\varphi \limp \negg\psi)}
	}
	{
		\pline[2 ]{\vartheta \limp (\varphi \limp \negg\psi)}[$\sfL$]\\
		\pline[3 ]{\varphi \limp (\vartheta \limp \negg \psi)}[\Deriv{(Ex$\limp$)}]\\
		\pline[4 ]{\varphi \limp (\psi \limp \negg \vartheta)}[\Deriv{(C$\limp$)}]\\
		\pline[5 ]{\varphi \limp \negg\negg(\psi \limp \negg \vartheta) }[$\sfL$]\\
		\pline[6 ]{\negg(\varphi \limp \negg\negg(\psi \limp \negg \vartheta))\hspace{-10pt}}[Def., A]
	}
	\pline[7 ]{\negg(\vartheta \limp \negg\negg(\varphi \limp \negg\psi))}[\Deriv{(RAA)}, 5, 6]\\
	\pline[8 ]{\vartheta \tens (\varphi \tens \psi)}[Def.]\\
	\pline[\slider]{(\varphi \tens \psi) \tens \vartheta}[\Deriv{(Com$\tens$)}]
}
}
\end{minipage}

\bigskip

\begin{minipage}[t][][b]{.43\textwidth}

\Deriv{(Abs$\tens$)}:

\medskip

{
\setlength{\fitchprfwidth}{1.2in}
\fitchprf{
\pline[A ]{\E\alpha \tens\varphi}
}
{
  \pline[1 ]{\neg \alpha \timp \negg\negg\neg\alpha}[$\sfL$]\\
	\pline[2 ]{\neg \alpha \timp \negg\E\alpha}[Def.]\\
	\subproof{\pline[3 ]{\neg \alpha}}
	{
		\pline[4 ]{\varphi \limp \neg \alpha}[\Deriv{(I$\limp$)}]\\
		\pline[5 ]{\varphi \limp \negg \E \alpha}[\Deriv{(MP$\limp$)}, 2, 4]\\
		\pline[6 ]{\negg\negg(\varphi \limp \negg \E \alpha)}[$\sfL$]\\
		\pline[7 ]{\negg(\varphi \tens \E\alpha)}[Def.]\\
		\pline[8 ]{\varphi \tens \E\alpha}[\Deriv{(Com$\tens$)}, A]
	}
	\pline[9 ]{\negg \neg \alpha}[\Deriv{(RAA)}, 7, 8]\\
	\pline[\slider]{\E\alpha}[Def.]
}
}
\end{minipage}
\begin{minipage}[t][][b]{.46\textwidth}

\Deriv{(E$\limp$)}:

\medskip

{
\setlength{\fitchprfwidth}{1.6in}
\fitchprf{
\pline[A ]{\top \limp (\neg \alpha \timp \alpha)}
}
{
	\pline[1 ]{\top \limp \negg\negg(\neg \alpha \timp \alpha)}[$\sfL$]\\
	\pline[2 ]{\negg (\neg \alpha \timp \alpha) \limp \negg \top}[\Deriv{(C$\limp$)}]\\
	\pline[3 ]{\negg (\neg \alpha \timp \alpha) \limp \negg \alpha}[$\sfL$]\\
	\pline[4 ]{\alpha \limp \negg\negg (\neg \alpha \timp \alpha)}[\Deriv{(C$\limp$)}]\\
	\pline[5 ]{\alpha \limp  (\neg \alpha \timp \alpha)}[$\sfL$]\\
	\pline[6 ]{\alpha \tens \neg \alpha \thm}[$\sfH^0$]\\
	\pline[7 ]{\alpha \tens (\neg \alpha \oland (\neg \alpha \timp \alpha))}[\Deriv{(Aug$\tens$)}, 5, 6]\\
	\pline[8 ]{\alpha \tens (\alpha \oland \neg \alpha)}[$\sfL$]\\
	\pline[9 ]{\alpha \tens \bot}[$\sfH^0$, $\sfL$]\\
	\pline[\slider]{\alpha}[$\sfH^0$, \Deriv{(F$\tens$)}]
}
}
\end{minipage}

\bigskip

\begin{minipage}[t][][b]{.43\textwidth}

\Deriv{(D$\oland\tens$)$^1$}:

\medskip

{
\setlength{\fitchprfwidth}{1.15in}
\fitchprf{
\pline[A ]{(\alpha \oland \varphi) \tens (\alpha \oland \psi)}
}
{
	\pline[1 ]{(\alpha \oland \varphi) \tens \alpha}[$\sfL$]\\
	\pline[2 ]{\alpha \tens (\alpha \oland \varphi)}[\Deriv{(Com$\tens$)}]\\
	\pline[3 ]{\alpha \tens \alpha}[$\sfL$]\\
	\pline[4 ]{\alpha}[\Deriv{(F$\tens$)}, $\sfH^0$]\\
	\pline[5 ]{(\alpha \oland \varphi) \tens \psi}[$\sfL$, A]\\
	\pline[6 ]{\psi \tens (\alpha \oland \varphi)}[\Deriv{(Com$\tens$)}]\\
	\pline[7 ]{\psi \tens \varphi}[$\sfL$]\\
	\pline[\slider]{\alpha \oland (\varphi \tens \psi)}[\Deriv{(Com$\tens$)}, $\sfL$, 4, 7]
}
}
\end{minipage}
\begin{minipage}[t][][b]{.46\textwidth}

\Deriv{(D$\oland\tens$)$^2$}:

\medskip

{
\setlength{\fitchprfwidth}{1.63in}
\fitchprf{
\pline[A ]{\alpha \oland (\varphi \tens \psi)}
}
{
	\pline[1 ]{\alpha}[$\sfL$]\\
	\pline[2 ]{(\alpha \oland \varphi) \limp \alpha}[\Deriv{(F$\limp$)}]\\
	\pline[3 ]{(\alpha \oland \varphi) \limp (\negg(\alpha \oland \psi) \timp \negg \psi) }[$\sfL$]\\
	\pline[4 ]{\psi \limp \alpha}[\Deriv{(F$\limp$)}, 1]\\
	\pline[5 ]{\psi \limp(\negg(\alpha \oland \varphi) \timp \negg \varphi) }[$\sfL$]\\
	\subproof{\pline[6 ]{\negg ((\alpha \oland \varphi) \tens (\alpha \oland \psi))}}
	{
		\pline[7 ]{(\alpha \oland \varphi) \limp \negg(\alpha \oland \psi)}[Def., $\sfL$]\\
		\pline[8 ]{(\alpha \oland \varphi) \limp \negg \psi}[\Deriv{(Dis$\limp$)}, 3, 7]\\
		\pline[9 ]{\psi \limp \negg(\alpha \oland \varphi)}[\Deriv{(C$\limp$)}]\\
		\pline[10]{\psi \limp \negg \varphi}[\Deriv{(Dis$\limp$)}, 5, 9]\\
		\pline[11]{\psi \tens \varphi}[$\sfL$, \Deriv{(Com$\tens$)}, A]\\
		\pline[12]{\negg(\psi \limp \negg\varphi)}[Def.]
	}
	\pline[\slider]{(\alpha \oland \varphi) \tens (\alpha \oland \psi)}[\Deriv{(RAA)}, 10, 12]
}
}
\end{minipage}

\bigskip

\begin{minipage}[t][][b]{.43\textwidth}

\Deriv{(D$\ovee\tens$)$^1$}:

\medskip

{
\setlength{\fitchprfwidth}{1.15in}
\fitchprf{
\pline[A ]{(\varphi \tens \psi) \ovee (\varphi \tens \vartheta)}
}
{
\subproof{\pline[1 ]{\negg (\varphi \tens (\psi \ovee \vartheta))}}
{
	\pline[2 ]{\varphi \limp \negg(\psi \ovee \vartheta)}[Def., $\sfL$]\\
	\pline[3 ]{\varphi \limp \negg \psi}[$\sfL$]\\
	\pline[4 ]{\negg(\varphi \tens \psi)}[Def., $\sfL$]\\
	\pline[5 ]{\varphi \limp \negg \vartheta}[$\sfL$, 2]\\
	\pline[6 ]{\negg(\varphi \tens \vartheta)}[Def., $\sfL$]\\
	\pline[7 ]{\hspace{-6pt}\brokenform{\negg((\varphi \tens \psi)}
	{\formula{\ovee(\varphi \tens \vartheta))}} \hspace{-10pt} }[$\sfL$, 4, 6]
}
	\pline[\slider]{\varphi \tens (\psi \ovee \vartheta)}[\Deriv{(RAA)}, A, 7]
}
}
\end{minipage}
\begin{minipage}[t][][b]{.46\textwidth}

\Deriv{(D$\ovee\tens$)$^2$}:

\medskip

{
\setlength{\fitchprfwidth}{1.63in}
\fitchprf{
\pline[A ]{\varphi \tens (\psi \ovee \vartheta)}
}
{
	\pline[1 ]{\varphi \tens \negg(\negg\psi \oland \negg \vartheta)}[$\sfL$]\\
	\pline[2 ]{\negg (\varphi \limp \negg \negg(\negg \psi \oland \negg\vartheta))}[Def.]\\
	\subproof{\pline[3 ]{\negg ((\varphi \tens \psi) \ovee (\varphi \tens \vartheta))}}
	{
		\pline[4 ]{\negg(\varphi \tens \psi) \oland \negg(\varphi \tens \vartheta)}[$\sfL$]\\
		\pline[5 ]{\varphi \limp \negg \psi}[Def., $\sfL$, 4]\\
		\pline[6 ]{\varphi \limp \negg \vartheta}[Def., $\sfL$, 4]\\
		\pline[7 ]{\hspace{-6pt}\brokenform{\varphi \limp \big(\negg\psi \timp (\negg\vartheta}
		{\formula{\timp \negg(\psi \ovee \vartheta))\big) \thm}} }[$\sfL$, \Deriv{(Nec$\limp$)}]\\
		\pline[8 ]{\varphi \limp \negg(\psi \ovee \vartheta)}[\Deriv{(Dis$\limp$)}, 5, 6, 7]\\
		\pline[9 ]{\varphi \limp \negg\negg(\negg\psi \oland \negg \vartheta)}[$\sfL$]
	}
	\pline[\slider]{(\varphi \tens \psi) \ovee (\varphi \tens \vartheta))}[\Deriv{(RAA)}, 2, 9]
}
}
\end{minipage}

\bigskip

\begin{minipage}[t][][b]{.43\textwidth}

\Deriv{(Lax$\tens$)}:

\medskip

{
\setlength{\fitchprfwidth}{1.15in}
\fitchprf{
\pline[A ]{\varphi \tens \psi}\\
\pline[B ]{\vartheta}
}
{
\subproof{\pline[1 ]{\varphi \limp \negg\vartheta}}
{
	\pline[2 ]{\vartheta \limp \negg \varphi}[\Deriv{(C$\limp$)}]\\
	\pline[3 ]{\psi \limp \negg \varphi}[\Deriv{(Lax$\limp$)}, B, 2]\\
	\pline[4 ]{\psi \tens \varphi}[\Deriv{(Com$\tens$)}, A]\\
	\pline[5 ]{\negg(\psi \limp \negg \varphi)}[Def.]
}
	\pline[4 ]{\negg(\varphi \limp \negg\vartheta)}[\Deriv{(RAA)}, 3, 5]\\
	\pline[\slider]{\varphi \tens \vartheta}[Def.]
}
}
\end{minipage}
\begin{minipage}[t][][b]{.46\textwidth}

\Deriv{(Join$\E$)}:

\medskip

{
\setlength{\fitchprfwidth}{1.63in}
\fitchprf{
\pline[A ]{\alpha \oland \E\beta}
}
{
	\pline[1 ]{\neg(\alpha \land \beta) \timp (\alpha \imp \neg \beta)}[$\sfH^0$, \Deriv{(L4)}]\\
	\pline[2 ]{\negg(\alpha \imp \neg \beta) \timp \negg \neg(\alpha \land \beta)}[$\sfL$]\\
	\pline[3 ]{\negg(\alpha \imp \neg \beta) \timp \E(\alpha \land \beta)}[Def.]\\
	\subproof{\pline[4 ]{\alpha \imp \neg \beta}}
	{
		\pline[5 ]{\alpha}[$\sfL$, A]\\
		\pline[6 ]{\neg \beta}[$\sfH^0$]\\
		\pline[7 ]{\E\beta}[$\sfL$, A]\\
		\pline[8 ]{\negg\neg\beta}[Def.]
	}
	\pline[9 ]{\negg(\alpha \imp\neg\beta)}[\Deriv{(RAA)}, 6, 8]\\
	\pline[\slider]{\E(\alpha\land\beta)}[\Deriv{(E$\timp$)}, 3, 9]
}
}
\end{minipage}

\bigskip

\begin{minipage}[t][][b]{.43\textwidth}

\Deriv{(Sub$\E$)}:

\medskip

{
\setlength{\fitchprfwidth}{1.15in}
\fitchprf{
\pline[A ]{\alpha \imp \beta}\\
\pline[B ]{\E\alpha}
}
{
\subproof{\pline[1 ]{\neg\beta}}
{
	\pline[2 ]{\neg \alpha}[$\sfH^0$, A, 1]\\
	\pline[3 ]{\negg\neg\alpha}[Def., B]
}
	\pline[4 ]{\negg\neg\beta}[\Deriv{(RAA)}, 2, 3]\\
	\pline[\slider]{\E\beta}[Def.]
}
}
\end{minipage}
\begin{minipage}[t][][b]{.46\textwidth}

\Deriv{(Isolate$\E$)$^1$}:

\medskip

{
\setlength{\fitchprfwidth}{1.58in}
\fitchprf{
\pline[A ]{\varphi \tens (\alpha \oland \E\beta)}
}
{
	\pline[1 ]{\varphi \tens \E(\alpha \land \beta)}[\Deriv{(Join$\E$)}]\\
	\pline[2 ]{\E(\alpha\land\beta)}[\Deriv{(Com$\tens$)}, \Deriv{(Abs$\tens$)}]\\
	\pline[3 ]{\varphi \tens \alpha}[$\sfL$, A]\\
	\pline[\slider]{(\varphi \tens \alpha) \oland \E(\alpha\land\beta)}[$\sfL$, 2, 3]
}
}
\end{minipage}

\bigskip

\begin{minipage}[t][][b]{.43\textwidth}

\Deriv{(I$\tens$)}:

\medskip

{
\setlength{\fitchprfwidth}{1.25in}
\fitchprf{
\pline[A ]{\E\alpha}
}
{
\subproof{\pline[1 ]{\negg(\top \tens (\alpha \oland \E\alpha))}}
{
	\pline[2 ]{\top \limp \negg(\alpha \oland \E\alpha)}[Def., $\sfL$]\\
	\pline[3 ]{\top \limp \negg(\alpha \oland \negg\neg\alpha)\!\!}[Def.]\\
	\pline[4 ]{\top \limp (\alpha \timp \neg\alpha)}[$\sfL$]\\
	\pline[5 ]{\top \limp (\neg\neg\alpha \timp \neg\alpha)\!\!}[$\sfH^0$, $\sfL$]\\
	\pline[6 ]{\neg \alpha}[\Deriv{(E$\limp$)}]\\
	\pline[7 ]{\negg\neg \alpha}[Def., A]
}
	\pline[\slider]{\top \tens (\alpha \oland \E\alpha)}[\Deriv{(RAA)}, 6, 7]
}
}
\end{minipage}
\begin{minipage}[t][][b]{.46\textwidth}

\Deriv{(Isolate$\E$)$^2$}:

\medskip

{
\setlength{\fitchprfwidth}{1.58in}
\fitchprf{
\pline[A ]{(\varphi \tens \alpha) \oland \E(\alpha \land \beta)}
}
{
	\pline[1 ]{\varphi \tens \alpha}[$\sfL$, A]\\
	\pline[2 ]{\E(\alpha\land\beta)}[$\sfL$, A]\\
	\pline[3 ]{\top \tens((\alpha \land \beta) \oland \E(\alpha \land \beta)}[\Deriv{(I$\tens$)}]\\
	\pline[4 ]{\top \tens (\alpha \oland \E\beta)}[$\sfH^0$, $\sfL$]\\
	\pline[5 ]{(\alpha \oland \E\beta) \tens \top}[\Deriv{(Com$\tens$)}]\\
	\pline[6 ]{(\alpha \oland \E\beta) \tens (\varphi \tens \alpha)}[\Deriv{(Lax$\tens$)}, 1, 5]\\
	\pline[7 ]{(\varphi \tens \alpha) \tens (\alpha \oland \E\beta)}[\Deriv{(Com$\tens$)}]\\
	\pline[8 ]{\varphi \tens (\alpha \tens (\alpha \oland \E\beta))}[\Deriv{(Ass$\tens$)}]\\
	\pline[9 ]{\varphi \tens ((\alpha \oland \alpha) \tens (\alpha \oland \E\beta))}[$\sfL$, \Deriv{(Com$\tens$)}]\\
	\pline[10]{\varphi \tens (\alpha \oland (\alpha \tens \E\beta))}[\Deriv{(D$\oland\tens$)}]\\
	\pline[\slider]{\varphi \tens (\alpha \oland \E\beta)}[\Deriv{(Com$\tens$)}, \Deriv{(Abs$\tens$)}]
}
}
\end{minipage}

}

\section{Proof details for Lemma~\ref{lem:mtl-laws} (system $\sfM'$)}

\label{sec:aux-modal}

As for \Deriv{(MP$\tens$)} and \Deriv{(MP$\limp$)}, we mostly omit applications of \Deriv(MP$\Diamond$), \Deriv(MP$\Box$) and \Deriv(MP$\triangle$) in the derivations.

\bigskip

{
\scriptsize

\begin{minipage}[t][][b]{.43\textwidth}

\Deriv{(D$\Diamond\ovee$)$^1$}:

\medskip

{
\setlength{\fitchprfwidth}{1.33in}
\fitchprf{
\pline[A ]{\Diamond(\varphi\ovee\psi)}
}
{
\subproof{\pline[1 ]{\negg\Diamond\varphi \ovee \Diamond\psi)}}
{
	\pline[2 ]{\negg(\negg\triangle\negg\varphi \ovee \negg\triangle\negg\psi)\hspace{-9pt}}[Def.]\\
	\pline[3 ]{\triangle\negg\varphi}[$\sfL$, 2]\\
	\pline[4 ]{\triangle\negg\psi}[$\sfL$, 2]\\
	\pline[5 ]{\triangle\negg(\varphi \ovee \psi)}[\Deriv{(Dis$\triangle$)}, $\sfL$]\\
	\pline[6 ]{\negg\negg\triangle\negg(\varphi \ovee \psi)}[$\sfL$]\\
	\pline[7 ]{\negg\Diamond(\varphi \ovee \psi)}[Def.]
}
	\pline[\slider]{\Diamond\varphi \ovee \Diamond\psi}[\Deriv{(RAA)}, A, 7]
}
}
\end{minipage}
\begin{minipage}[t][][b]{.46\textwidth}

\Deriv{(D$\Diamond\ovee$)$^2$}:

\medskip

{
\setlength{\fitchprfwidth}{1.58in}
\fitchprf{
\pline[A ]{\Diamond\varphi \ovee \Diamond\psi}
}
{
\subproof{\pline[1 ]{\negg\Diamond(\varphi \ovee\psi)}}
{
	\pline[2 ]{\triangle\negg(\varphi \ovee \psi)}[Def., $\sfL$]\\
	\pline[3 ]{\triangle\negg\varphi}[$\sfL$, 2]\\
	\pline[4 ]{\triangle\negg\psi}[$\sfL$, 2]\\
	\pline[5 ]{(\negg\negg\triangle\negg\varphi) \oland (\negg\negg\triangle\negg\psi)}[$\sfL$]\\
	\pline[6 ]{(\negg \Diamond\varphi) \oland (\negg\Diamond\psi)}[Def.]\\
	\pline[7 ]{\negg(\Diamond \varphi \ovee \Diamond \psi)}[$\sfL$]
}
	\pline[\slider]{\Diamond(\varphi \ovee\psi)}[\Deriv{(RAA)}, A, 7]
}
}
\end{minipage}

\bigskip

\begin{minipage}[t][][b]{.43\textwidth}

\Deriv{(Com$\Diamond\E$)$^1$}:

\medskip

{
\setlength{\fitchprfwidth}{1.33in}
\fitchprf{
\pline[A ]{\Diamond\E\beta}
}
{
\pline[1 ]{\negg\triangle\negg\negg\neg\beta}[Def.]\\
\subproof{\pline[2 ]{\negg\E\neg\Box\neg\beta}}
{
	\pline[3 ]{\negg\negg\neg\neg\Box\neg\beta}[Def.]\\
	\pline[4 ]{\Box\neg\beta}[$\sfL$, $\sfH^0$]\\
	\pline[5 ]{\triangle\neg\beta}[\Deriv{(E$\Box$)}]\\
	\pline[6 ]{\triangle\negg\negg \neg\beta}[$\sfL$]
}
	\pline[\slider]{\E\neg\Box\neg\beta}[\Deriv{(RAA)}, 1, 6]
}
}
\end{minipage}
\begin{minipage}[t][][b]{.46\textwidth}

\Deriv{(Com$\Diamond\E$)$^2$}:

\medskip

{
\setlength{\fitchprfwidth}{1.58in}
\fitchprf{
\pline[A ]{\Diamond\varphi}\\
\pline[B ]{\E\neg\Box\neg\beta}
}
{
\pline[1 ]{\negg\neg\neg\Box\neg\beta}[Def., B]\\
\subproof{\pline[2 ]{\negg\Diamond\E\beta}}
{
	\pline[3 ]{\negg\negg\triangle\negg\negg\neg\beta}[Def.]\\
	\pline[4 ]{\triangle\neg\beta}[$\sfL$]\\
	\pline[5 ]{\Box\neg\beta}[\Deriv{(I$\Box$)}]\\
	\pline[6 ]{\neg\neg\Box\neg\beta}[$\sfH^0$]
}
	\pline[\slider]{\Diamond\E\beta}[\Deriv{(RAA)}, 1, 6]
}
}
\end{minipage}

\bigskip

\begin{minipage}[t][][b]{.43\textwidth}

\Deriv{(Aug$\Diamond$)}:

\medskip

{
\setlength{\fitchprfwidth}{1.2in}
\fitchprf{
\pline[A ]{\Diamond\varphi}\\
\pline[B ]{\triangle\psi}
}
{
\subproof{\pline[1 ]{\negg\Diamond(\varphi\oland\psi)}}
{
	\pline[2 ]{\triangle\negg(\varphi \oland \psi)}[Def., $\sfL$]\\
	\pline[3 ]{\triangle\negg\varphi}[\Deriv{(Dis$\triangle$)}, $\sfL$, B, 2]\\
	\pline[4 ]{\negg\negg\triangle\negg\varphi}[$\sfL$]\\
	\pline[5 ]{\negg\Diamond\varphi}[Def.]
}
	\pline[\slider]{\Diamond(\varphi\oland\psi)}[\Deriv{(RAA)}, A, 5]
}
}
\end{minipage}
\begin{minipage}[t][][b]{.46\textwidth}

\Deriv{(Join$\Diamond$)}:

\medskip

{
\setlength{\fitchprfwidth}{1.58in}
\fitchprf{
\pline[A ]{\Diamond\alpha}\\
\pline[B ]{\Diamond\E\alpha}
}
{
\subproof{\pline[1 ]{\negg\Diamond(\alpha \oland \E\alpha)}}
{
	\pline[2 ]{\triangle\negg(\alpha \oland \E\alpha)}[Def., $\sfL$]\\
	\pline[3 ]{\triangle(\alpha \timp \neg\alpha)}[Def., $\sfL$]\\
	\pline[4 ]{\Diamond(\alpha \oland (\alpha \timp \neg\alpha))}[\Deriv{(Aug$\Diamond$)}, A, 3]\\
	\pline[5 ]{\Diamond(\alpha \oland \neg\alpha)}[$\sfL$]\\
	\pline[6 ]{\Diamond\bot}[$\sfL$, $\sfH^0$]\\
	\pline[7 ]{\bot}[\Deriv{(F$\Diamond$)}, $\sfH^\Box$]\\
	\pline[8 ]{\Box\neg\alpha}[$\sfH^\Box$]\\
	\pline[9 ]{\triangle\neg\alpha}[\Deriv{(E$\Box$)}, A, 8]\\
	\pline[10]{\triangle\negg\negg\neg\alpha}[$\sfL$]\\
	\pline[11]{\negg\triangle\negg\negg\neg\alpha}[Def., B]
}
	\pline[\slider]{\Diamond(\alpha\oland\E\alpha)}[\Deriv{(RAA)}, 10, 11]
}
}
\end{minipage}

\bigskip

\begin{minipage}[t][][b]{.43\textwidth}

\Deriv{(D$\Box{\timp}$)}:

\medskip

{
\setlength{\fitchprfwidth}{1.2in}
\fitchprf{
\pline[A ]{\Box\varphi \timp \Box \psi}
}
{
\subproof{\pline[1 ]{\negg\Box(\varphi\timp\psi)}}
{
	\pline[2 ]{\Box\negg(\varphi \timp \psi)}[\Deriv{(Lin$\Box$)}]\\
	\pline[3 ]{\Box\varphi}[$\sfL$, 2]\\
	\pline[4 ]{\Box\negg\psi}[$\sfL$, 2]\\
	\pline[5 ]{\negg\Box\psi}[\Deriv{(Lin$\Box$)}]\\
	\pline[6 ]{\Box \varphi \oland \negg \Box\psi}[$\sfL$, 3, 5]\\
	\pline[7 ]{\negg(\Box \varphi \timp \Box\psi)}[$\sfL$]
}
	\pline[\slider]{\Box(\varphi\timp\psi)}[\Deriv{(RAA)}, A, 7]
}
}
\end{minipage}
\begin{minipage}[t][][b]{.46\textwidth}

\Deriv{($\Diamond$Isolate$\E$)$^1$}:

\medskip

{
\setlength{\fitchprfwidth}{1.58in}
\fitchprf{
\pline[A ]{\Diamond(\alpha\oland\E\beta)}
}
{
	\pline[1 ]{\Diamond\alpha}[$\sfL$]\\
	\pline[2 ]{\Diamond\E(\alpha \land \beta)}[\Deriv{(Join$\E$)}, A]\\
	\pline[3 ]{\E\neg\Box\neg(\alpha \land \beta)}[\Deriv{(Com$\Diamond\E$)}]\\
	\pline[\slider]{\Diamond \alpha \oland \E\neg\Box\neg(\alpha \land \beta)}[$\sfL$, 1, 3]
}
}
\end{minipage}

\bigskip

\Deriv{($\Diamond$Isolate$\E$)$^2$}:

\medskip

{
\setlength{\fitchprfwidth}{2.8in}
\fitchprf{
\pline[A ]{\Diamond\alpha \oland \E\neg\Box\neg(\alpha\land\beta)}
}
{
\subproof{\pline[1 ]{\neg\Box\neg(\alpha\land\beta) \oland \E\neg\Box\neg(\alpha\land\beta)}}
{
	\pline[2 ]{\Diamond(\alpha\land\beta)}[\Deriv{(F$\Diamond$)}, $\sfL$, 1]\\
	\pline[3 ]{\E\neg\Box\neg(\alpha\land\beta)}[$\sfL$, 1]\\
	\pline[4 ]{\Diamond\E(\alpha\land\beta)}[\Deriv{(Com$\Diamond\E$)}]\\
	\pline[5 ]{\Diamond\big((\alpha\land\beta) \oland \E(\alpha\land\beta) \big)}[\Deriv{(Join$\Diamond$)}, 2, 4]\\
	\pline[6 ]{\Diamond(\alpha \oland \E\beta)}[\Deriv{(Sub$\E$)}, $\sfH^0$, $\sfL$]
}
	\pline[7 ]{\big(\neg\Box\neg(\alpha\land\beta) \oland \E\neg\Box\neg(\alpha\land\beta)\big) \timp \Diamond(\alpha \oland \E\beta) \thm}[Ded. Thm.]\\
	\pline[8 ]{\E\neg\Box\neg(\alpha\land\beta)}[$\sfL$, A]\\
	\pline[9 ]{\top \tens (\neg\Box\neg(\alpha\land\beta) \oland \E\neg\Box\neg(\alpha\land\beta))}[\Deriv{(I$\tens$)}]\\
	\pline[10]{\top \tens \Diamond(\alpha \oland \E\beta)}[\Deriv{(MP$\tens$)}, 7, 9]\\
	\pline[11]{\Diamond(\alpha \oland \E\beta) \tens \top}[\Deriv{(Com$\tens$)}]\\
	\pline[12]{\Diamond\alpha}[$\sfL$, A]\\
	\pline[13]{\Diamond(\alpha\oland \alpha)}[$\sfL$]\\
	\pline[14]{\Diamond(\alpha \oland \E\beta) \tens \Diamond (\alpha\oland \alpha)}[\Deriv{(Lax$\tens$)}, 11, 13]\\
	\pline[15]{\Diamond\big((\alpha \oland \E\beta) \tens (\alpha\oland \alpha)\big)}[\Deriv{(D$\Diamond\tens$)}]\\
	\pline[16]{\Diamond(\alpha \oland (\E\beta \tens \alpha))}[\Deriv{(D$\oland\tens$)}]\\
	\pline[\slider]{\Diamond(\alpha \oland \E\beta)}[\Deriv{(Abs$\tens$)}, $\sfL$]
}
}

}
  
\end{document}